\def\ps@pprintTitle{%
   \let\@oddhead\@empty
   \let\@evenhead\@empty
   \def\@oddfoot{\reset@font\hfil\thepage\hfil}
   \let\@evenfoot\@oddfoot
}
\newtheorem{theorem}{Theorem}
\newtheorem{lemma}[theorem]{Lemma}
\newtheorem{corollary}[theorem]{Corollary}
\newtheorem{definition}[theorem]{Definition}
\newtheorem{example}[theorem]{Example}
\newtheorem{conj}[theorem]{Conjecture}
\newtheorem{remark}[theorem]{Remark}
\newcommand{\tr}{{\mathrm{Tr}}}
\newcommand{\Norm}{{\mathrm{N}}}
\newcommand{\gf}{{\mathbb{F}}}
\newcommand{\support}{{\mathrm{suppt}}}
\newcommand{\cC}{{\mathcal{C}}}
\newcommand{\cD}{{\mathcal{D}}}
\newcommand{\cRF}{{\mathcal{RF}}}
\newcommand{\ba}{{\mathbf{a}}}
\newcommand{\bc}{{\mathbf{c}}}
\newcommand{\bg}{{\mathbf{g}}}
\begin{document}

\begin{frontmatter}

\title{Two families of linear codes with desirable properties from some functions over finite fields}

\tnotetext[fn1]{
Z. Heng's research was supported in part by the National Natural Science Foundation of China under Grant 12271059 and 11901049, and in part by the Fundamental Research Funds for the Central Universities, CHD, under Grant 300102122202; Y. Wu's research was supported in part the National Natural Science Foundation of China (Grant No. 12101326) and in part by the Natural Science Foundation of Jiangsu Province (Grant No. BK20210575); Q. Wang's reasearch was supported in part by the National Natural Science Foundation of China (Grant No. 11931005, 12371522).
}

\author[1]{Ziling Heng}
\ead{zilingheng@chd.edu.cn}
\author[1]{Xiaoru Li$^{\ast}$}
\ead{lx\underline{ }lixiaoru@163.com}
\author[2]{Yansheng Wu}
\ead{yanshengwu@njupt.edu.cn}
\author[3]{Qi Wang}
\ead{wangqi@sustech.edu.cn}

\cortext[cor]{Corresponding author}
\address[1]{School of Science, Chang'an University, Xi'an 710064, China}
\address[2]{School of Computer Science, Nanjing University of Posts and Telecommunications, Nanjing 210023, China}
\address[3]{Department of Computer Science and Engineering, National Center for Applied Mathematics Shenzhen, Southern University of Science and Technology, Shenzhen, 518055, China}

\begin{abstract}
Linear  codes are widely studied in coding theory as they have nice applications in  distributed storage, combinatorics, lattices, cryptography and so on.
Constructing linear codes with desirable properties is an interesting research topic.  In this paper, based on the augmentation technique, we present two families of linear codes from some functions over finite fields. The first family of linear codes is constructed from monomial functions over finite fields. The locality of them is determined and the weight distributions of two subfamilies of the codes are also given. An infinite family of locally recoverable codes which are at least almost optimal and some optimal recoverable codes are obtained from the linear codes. In particular, the two subfamilies of the codes are proved to be both optimally or almost optimally extendable and self-orthogonal.
The second family of linear codes is constructed from weakly regular bent functions over finite fields and their weight distribution is determined. This family of codes is proved to have locality 3 for some cases and is conjectured to have locality 2 for other cases. Particularly, two families of optimal locally recoverable codes are derived from the linear codes. Besides, this family of codes is also proved to be both optimally or almost optimally extendable and self-orthogonal.
\end{abstract}

\begin{keyword}
Weakly regular bent functions \sep monomial functions \sep linear codes

\MSC  94B05 \sep 94A05

\end{keyword}

\end{frontmatter}

\section{Introduction}\label{sec1}
Let $\gf_q$ be the finite field with $q$ elements, where $q$ is a  power of a prime $p$. Denote by $\gf_q^*$ the set of all the non-zero elements in $\gf_q$.
\subsection{Linear codes and self-orthogonal codes}
For a positive integer $n$, if $\cC$ is a $k$-dimensional linear subspace of $\gf_q^n$, then it is called an $[n,k,d]$ linear code over $\mathbb{F}_q$, where $d$ denotes its minimum  distance. Let $A_{i}$ denote the number of codewords with weight $i$ in $\cC$ of length $n$, where $0 \leq i \leq n$. The sequence $(1,A_{1},A_{2}, \cdots ,A_{n})$ is called the weight distribution of $\cC$. The weight enumerator of $\cC$ is defined by
$
A(z)=1+A_{1}z+A_{2}z^2+ \cdots +A_{n}z^n.
$
Weight distribution is an important research subject as it not only describes the error detection and error correction ability of the code, but also can be used to calculate the error probability of error detection and correction of the code. The weight distributions of linear codes have been extensively studied in the literature \cite{D1, D2, D3, D4, HWL, HZ4, LC, HZ3, ZD, Zhou}. Optimal linear code plays an important role in both theory and practice. An $[n, k, d]$ linear code $\cC$ is said to be optimal if no $[n, k, d + 1]$ code exists and almost optimal if there exists an $[n, k, d + 1]$ optimal code.
There exist some tradeoffs among the parameters of linear codes. The followings are the well known sphere packing bound and Plotkin bound, where $\lfloor \cdot\rfloor$ denotes the floor function.
\begin{lemma}\cite[Sphere packing bound]{H}\label{sphere}
  Let $M$ be the maximum number of codewords in a code over $\gf_q$ of length $n$ and minimum distance $d$. Then
  \begin{eqnarray*}
    M \leq \frac{q^n}{\sum\limits_{i=0}^{t}\tbinom{n}{i}(q-1)^i},
  \end{eqnarray*}
  where $t=\lfloor(d-1)/2\rfloor$.
\end{lemma}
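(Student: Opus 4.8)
The plan is to prove the bound by a sphere-packing (volume) argument in the Hamming space $\gf_q^n$. The key geometric fact is that the Hamming balls of radius $t = \lfloor (d-1)/2 \rfloor$ centred at the codewords are pairwise disjoint, so the total number of vectors they cover cannot exceed $|\gf_q^n| = q^n$; comparing this count with the number of codewords immediately yields the stated inequality.

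First I would fix notation: let $\cC \subseteq \gf_q^n$ be a code with $|\cC| = M$ and minimum distance $d$, and for $\bx \in \gf_q^n$ write $B_t(\bx) = \{\by \in \gf_q^n : \hd(\bx,\by) \le t\}$ for the Hamming ball of radius $t$ about $\bx$. The first computation is the cardinality of such a ball. A vector $\by$ with $\hd(\bx,\by) = i$ is obtained by choosing the $i$ coordinates in which it differs from $\bx$ (in $\binom{n}{i}$ ways) and assigning to each of them one of the $q-1$ values distinct from the corresponding coordinate of $\bx$ (in $(q-1)^i$ ways). Summing over $0 \le i \le t$ gives
\[
|B_t(\bx)| = \sum_{i=0}^{t} \binom{n}{i}(q-1)^i,
\]
a quantity independent of the centre $\bx$.

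The crux of the argument is the disjointness of the balls centred at distinct codewords, and this is the step where I would be most careful. Suppose $\bc_1, \bc_2 \in \cC$ are distinct and some $\by$ lies in both $B_t(\bc_1)$ and $B_t(\bc_2)$. Then $\hd(\bc_1,\by) \le t$ and $\hd(\bc_2,\by) \le t$, so by the triangle inequality for the Hamming metric $\hd(\bc_1,\bc_2) \le \hd(\bc_1,\by) + \hd(\by,\bc_2) \le 2t \le d-1 < d$, contradicting the definition of the minimum distance. Hence the $M$ balls $\{B_t(\bc) : \bc \in \cC\}$ are pairwise disjoint subsets of $\gf_q^n$.

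Finally, I would combine the two facts: since the balls are disjoint and each has the same cardinality computed above,
\[
M \cdot \sum_{i=0}^{t} \binom{n}{i}(q-1)^i = \sum_{\bc \in \cC} |B_t(\bc)| = \Big| \bigcup_{\bc \in \cC} B_t(\bc) \Big| \le |\gf_q^n| = q^n,
\]
and dividing through by the positive sum yields $M \le q^n / \sum_{i=0}^{t}\binom{n}{i}(q-1)^i$, as claimed. The only genuinely nontrivial point is the triangle-inequality step that establishes disjointness; the choice $t = \lfloor (d-1)/2 \rfloor$ is precisely what forces $2t \le d-1$, so no further optimisation of the radius is needed.
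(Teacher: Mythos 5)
Your proof is correct and is the standard volume/packing argument for the Hamming bound (disjointness of radius-$t$ balls via the triangle inequality, then counting); the paper cites this lemma from Huffman--Pless without proof, and your argument is precisely the classical one given there.
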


\begin{lemma}\cite[Plotkin bound]{H}
  Let $M$ be the maximum number of codewords in a code over $\gf_q$ of length $n$ and minimum distance $d$. If $(1-q^{-1})n < d$, then
  \begin{eqnarray*}
    M \leq \left\lfloor \frac{d}{d-(1-q^{-1})n} \right\rfloor.
  \end{eqnarray*}
\end{lemma}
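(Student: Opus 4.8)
The plan is to prove the Plotkin bound by double counting the quantity
\[
S = \sum_{\bx \in \cC} \sum_{\by \in \cC} \hd(\bx, \by),
\]
the sum of Hamming distances over all $M^2$ ordered pairs of codewords in a code $\cC \subseteq \gf_q^n$ of size $M$ and minimum distance $d$. Bounding $S$ from below and from above by two independent arguments and comparing the two estimates will yield the claimed inequality, provided the hypothesis $(1-q^{-1})n < d$ guarantees that the final division is by a positive quantity.

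First I would establish the lower bound. Every pair of distinct codewords is at Hamming distance at least $d$, and there are exactly $M(M-1)$ ordered pairs of distinct codewords (the $M$ diagonal pairs contribute $0$). Hence
\[
S \geq M(M-1)d.
\]

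Next comes the upper bound, which is the heart of the argument and proceeds column by column. Fix a coordinate position $i \in \{1, \dots, n\}$, and for each symbol $a \in \gf_q$ let $n_{a}$ denote the number of codewords whose $i$-th coordinate equals $a$, so that $\sum_{a \in \gf_q} n_{a} = M$. The number of ordered pairs of codewords that disagree in position $i$ is $M^2 - \sum_{a} n_{a}^2$, and summing this over all $n$ positions recovers $S$. Applying the Cauchy--Schwarz inequality (equivalently, the power-mean inequality) to the $q$ quantities $n_a$ gives $\sum_{a} n_{a}^2 \geq M^2/q$, so each column contributes at most $M^2(1 - q^{-1})$ to $S$; summing over the $n$ columns yields
\[
S \leq n M^2 (1 - q^{-1}).
\]

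Combining the two estimates gives $M(M-1)d \leq n M^2(1-q^{-1})$; dividing by $M$ and rearranging produces $M\bigl(d - (1-q^{-1})n\bigr) \leq d$. The hypothesis $(1-q^{-1})n < d$ ensures the factor $d - (1-q^{-1})n$ is strictly positive, so dividing by it gives $M \leq d / \bigl(d - (1-q^{-1})n\bigr)$, and since $M$ is an integer the floor function may be inserted. The step I expect to require the most care is the column-counting bound: one must correctly see that disagreements in a single coordinate are counted by $M^2 - \sum_a n_a^2$, and then verify that the worst case (all $n_a$ equal) is exactly the extremal configuration controlled by Cauchy--Schwarz, with the factor $q$ entering through the number of available symbols.
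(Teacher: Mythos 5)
Your double-counting argument is correct and complete: the lower bound $S \geq M(M-1)d$, the column-wise upper bound $S \leq nM^2(1-q^{-1})$ via Cauchy--Schwarz, and the final rearrangement using the hypothesis $(1-q^{-1})n < d$ all check out. The paper itself gives no proof of this lemma --- it is quoted as a classical result from the cited reference --- and your argument is exactly the standard proof that appears there, so there is nothing to compare beyond noting that you have supplied the omitted details correctly.
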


Define the dual code of an $[n,k]$ linear code $\cC$ as
$$
\mathcal{C}^{\perp}=\left\{ \mathbf{u} \in \mathbb{F}_{q}^{n}: \langle  \mathbf{u}, \mathbf{v} \rangle=0 \mbox{ for all }\mathbf{v} \in \mathcal{C} \right\}
,$$
where $\langle \cdot \rangle$ denotes the standard inner product. Obviously, $\cC^{\perp}$ is an $[n,n-k]$ linear code.
If a linear code $\cC$ satisfies $\cC \subseteq \cC^{\perp}$, then it is called a self-orthogonal code. If a linear code $\cC$ satisfies $\cC = \cC^{\perp}$, then it is said to be self-dual. For a linear code $\cC$ over $\gf_q$, if all codewords in $\cC$ have weights divisible by an integer $\Delta>1$, then $\cC$ is said to be divisible \cite{H}. Furthermore, $\cC$ is said to be $\Delta$-divisible and $\Delta$ is called a divisor of $\cC$ \cite{KK}.

For a cyclic code $\cC$ with generator polynomial $g(x)$,  $\cC$ is self-orthogonal if and only if $g^{\perp}(x) \mid g(x)$, where
$g^{\perp}(x)=x^kh(x^{-1})/h(0)$ with $h(x):=\frac{x^n-1}{g(x)}$  \cite{H}.
For binary and ternary linear codes, there exist simple  conditions for them to be self-orthogonal by the divisibility of their weights \cite{H}.
For general $q$-ary linear codes, however, it is very difficult to judge whether they are self-orthogonal or not.
Sometimes we may directly prove a linear code to be self-orthogonal by definition, i.e. $\cC$ is self-orthogonal if and only if $\bc \cdot \bc' = 0$ for any codewords $\bc, \bc' \in \cC$. Recently, in \cite{HZ5}, Li and Heng  established a sufficient condition for a  $q$-ary linear code containing all-$1$ vector to be self-orthogonal if $q$ is a power of an odd prime.
\begin{lemma}\cite{HZ5}\label{lem-self-orthogonal}
  Let $q=p^e$, where $p$ is an odd prime. Let $\cC$ be an $[n,k,d]$ linear code over $\gf_q$ with $\mathbf{1} \in \cC$, where $\mathbf{1}$ is the all-$1$ vector of length $n$. If $\cC$ is $p$-divisible, then $\cC$ is self-orthogonal.
\end{lemma}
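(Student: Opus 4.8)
The plan is to verify self-orthogonality directly from the definition, i.e.\ to show $\langle \bc, \bc' \rangle = 0$ for every pair $\bc, \bc' \in \cC$. The first reduction is a polarization argument: since $\cC$ is linear we have $\bc + \bc' \in \cC$, and expanding $\langle \bc + \bc', \bc + \bc' \rangle$ yields $\langle \bc, \bc \rangle + 2\langle \bc, \bc' \rangle + \langle \bc', \bc' \rangle$. Because $p$ is odd, $2$ is invertible in $\gf_q$, so it suffices to prove the single-codeword identity $\langle \bc, \bc \rangle = 0$ for every $\bc \in \cC$; the cross term $\langle \bc, \bc' \rangle$ then vanishes automatically once both diagonal terms are known to vanish.

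To establish $\langle \bc, \bc \rangle = \sum_{i=1}^n c_i^2 = 0$, I would group the coordinates of $\bc$ according to their value. For each $a \in \gf_q$ let $N_a$ denote the number of coordinates $i$ with $c_i = a$, so that $\langle \bc, \bc \rangle = \sum_{a \in \gf_q} N_a\, a^2$ as an element of $\gf_q$. The crux is to show that $N_a \equiv 0 \pmod{p}$ for every $a \in \gf_q$; once this holds, each summand $N_a\, a^2$ vanishes in the characteristic-$p$ field $\gf_q$, and therefore the whole sum is $0$.

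This is exactly where the hypothesis $\bone \in \cC$ is used. Since $\bone \in \cC$ we have $\wt(\bone) = n$, so $p$-divisibility forces $n \equiv 0 \pmod{p}$. For any $a \in \gf_q$, linearity gives $\bc - a\bone \in \cC$, and the number of its nonzero coordinates equals the number of $i$ with $c_i \neq a$, namely $n - N_a$. Applying $p$-divisibility once more, $\wt(\bc - a\bone) = n - N_a \equiv 0 \pmod{p}$, and combining this with $n \equiv 0 \pmod{p}$ yields $N_a \equiv 0 \pmod{p}$ for all $a \in \gf_q$. Substituting back gives $\langle \bc, \bc \rangle = 0$, and by the polarization step $\langle \bc, \bc' \rangle = 0$ for all $\bc, \bc' \in \cC$, whence $\cC \subseteq \cC^{\perp}$.

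I expect the only delicate point to be the bookkeeping that converts $p$-divisibility of the translated codewords $\bc - a\bone$ into the congruences $N_a \equiv 0 \pmod{p}$; the remaining steps (the polarization identity and the vanishing of each $N_a\, a^2$ in characteristic $p$) are routine. It is worth emphasizing that the oddness of $p$ is essential precisely so that the factor $2$ appearing in the polarization step is invertible, and that the presence of $\bone$ in $\cC$ is what makes the argument about the value-distribution of an arbitrary codeword go through.
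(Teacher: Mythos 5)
Your proof is correct and complete. The paper does not reprove this lemma (it is imported from \cite{HZ5}), but your argument --- deducing $n\equiv 0\pmod p$ and $N_a\equiv 0\pmod p$ from the $p$-divisibility of $\wt(\bone)$ and $\wt(\bc-a\bone)$, concluding $\langle \bc,\bc\rangle=\sum_{a\in\gf_q}N_a a^2=0$, and then polarizing via the invertibility of $2$ in odd characteristic --- is precisely the standard proof of this result, so there is nothing to add.
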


Self-orthogonal codes have nice applications in many areas, such as quantum codes, lattices and LCD codes. In \cite{Z}, several infinite families of binary LCD codes and self-orthogonal codes were obtained from a generic construction.
In \cite{HZL}, several families of ternary self-orthogonal codes were constructed from weakly regular bent functions and new infinite families of ternary LCD codes were derived by using these self-orthogonal codes. In \cite{WH}, several families of $q$-ary linear codes were constructed by some special defining sets and these codes were proved to be self-orthogonal. Besides, several families of optimal quantum codes and new LCD codes were derived from the self-orthogonal codes in \cite{WH}. In this paper, we will construct more self-orthogonal codes.
\subsection{Optimally extendable codes}
With the development of communication technologies, the implementation of cryptographic algorithms has become very important due to the increasing demand for secure transmission of confidential information. However, side channel attack (SCA) and fault injection attack (FIA) have brought great threats to the implementation of block cipher. Recently, Bringer et al. proposed a direct sum masking (DSM) countermeasure to safeguard against SCA and FIA in \cite{BJ}. This method uses two linear codes $\cC$ and $\cD$ such that $\cC \bigoplus \cD = \gf_q^n$ (i.e., $\cC$ and $\cD$ are supplementary). The codes $\cC$ and $\cD$ are used to encode the sensitive data and encode random data, respectively. The security parameters $(d(\cC)-1,d(\cD^{\perp})-1)$ quantify the ability of codes $(\cC, \cD)$ to resist SCA and FIA, where $d(\cC)-1$ is the number of intrusion errors that can be always detected, and $d(\cD^{\perp})-1$ gives the probing security order of the counter-measure \cite{BJ}. In order to protect the sensitive data stored in the register from SCA and FIA, and to protect the whole algorithm, the lengths of the linear codes $\cC$ and $\cD$ are extended by $k$ bits. By adding $0$ to the end of each codeword in linear code $\cC$, the resulting linear code $\cC'$ has the same minimum distance as $\cC$. Adding the $k \times k$ identity matrix to the right of the generator matrix of $\cD$ gives the generator matrix of the linear code $\cD'$, and the resulting $\cD'$ is an $[n+k,k]$ linear code. In this time, the security parameters change into $(d(\cC')-1, d(\cD'^{\perp})-1)=(d(\cC)-1, d(\cD'^{\perp})-1)$. It is clear that $d(\cD'^{\perp}) \leq d(\cD^{\perp})$. Hence, there is a risk that security parameters may deteriorate. It is desirable to construct linear code $\cD$ such that $d(\cD'^{\perp})$ and $d(\cD^{\perp})$ are very close.  If $d(\cD'^{\perp})=d(\cD^{\perp})$, then $\cD$ is called an \emph{optimally extendable} linear code. If $d(\cD'^{\perp})=d(\cD^{\perp})-1$, then $\cD$ is called an \emph{almost optimally extendable} linear code.

Recently,  some (almost) optimally extendable codes were constructed in the literature. Carlet et al. gave a construction of optimally extendable linear code by using algebraic geometry linear codes in \cite{Carlet2018}. To the subsequent, Carlet et al.  presented two constructions of optimally extendable linear codes using primitive irreducible cyclic codes \cite{Carlet3}. Moreover, they also proposed two families of almost optimally extendable linear codes from the first-order Reed-Muller codes and irreducible cyclic codes of dimension two. Recently, Quan et al. constructed three families of (almost) optimally extendable linear codes from irreducible cyclic codes, MDS codes and NMDS codes in \cite{QQ}.

Let $G$ be a generator matrix of the linear code $\cC$. Let $\cC_1$ and $\cC_2$ be the linear codes generated by matrixes $[I:G]$ and $[G:I]$, respectively. Then it is easy to deduce that linear codes $\cC_1$ and $\cC_2$ are permutation equivalent. Besides, it should be remarked that the linear code $\cC_1$ generated by $[I:G]$  depends on the choice of the generator matrix $G$ of $\cC$. Different $G$'s may yield inequivalent linear codes. In this paper, we will construct two families of (almost) optimally extendable linear codes.

\subsection{Locally recoverable codes}
In order to recover the data in distributed and cloud storage systems,
locally recoverable codes (LRCs for short) were proposed by Gopalan, Huang, Simitci and Yikhanin \cite{GH}.
For a positive integer $n$, we use the notation $[n]=\{ 0,1,\cdots,n-1 \}$.  Let $\mathcal{C}$ be an $[n,k,d]$ linear code over $\mathbb{F}_q$. We index the coordinates of the codewords in $\mathcal{C}$ with the elements in $[n]$.
For each $i \in [n]$, if there exists a subset $R_{i} \subseteq [n] \backslash {i}$ of size $r$ and a function $f_{i}(x_1,x_2,\cdots,x_r)$ on $\mathbb{F}_q^{r}$ meeting $c_i=f_{i}(\mathbf{c}_{R_i})$ for any $\mathbf{c}=(c_0,\cdots,c_{n-1}) \in \mathcal{C}$, then we call $\mathcal{C}$ an $(n,k,d,q;r)$-LRC, where $\mathbf{c}_{R_i}$ is the projection of $\mathbf{c}$ at $R_{i}$. The set $R_{i}$ is called  the repair set of $c_i$ and $r$ is referred to as the locality of $\mathcal{C}$. Locally recoverable codes have been implemented in practice by Microsoft and Facebook \cite{HC, SM}.

Since LRCs are needed in large-scale distributed storage systems, constructing LRCs with small locality has been an interesting research topic.
A lot of progress has been made in the research of locally recoverable codes. The reader may refer to \cite{CB1, CB2, FQ, FQ2, HY, LRH, HZ3, TI,  TP, LX} and their references. Very recently, in \cite{TP}, Tan et al. developed some general theory on the minimum locality of linear codes and investigated the minimum locality of some families of linear codes.

 To state a lemma by Tan et al. \cite{TP}, we need to review the definition of $t$-designs. Let $n, \kappa$ and $t$ be positive integers with $1 \leq t \leq \kappa \leq n$. Let $\mathcal{P}$ denote a set of $n$ elements and $\mathcal{B}$ denote a set of $\kappa$-subsets of $\mathcal{P}$. If every $t$-subset of $\mathcal{P}$ is contained in exactly $\lambda$ elements of $\mathcal{B}_\kappa$, then the pair $(\mathcal{P},\mathcal{B}_\kappa)$ is called a $t$-$(n,\kappa,\lambda)$ design. Linear codes are often used to construct $t$-designs (e.g., see \cite{D, D1, DX, LY, TD, TDX}).
In the following, we introduce the well-known coding-theoretic construction of $t$-designs.
Let $\mathcal{P}=\{ 0,1,\cdots,n-1 \}$ denote the set of coordinate positions of the codewords of linear code $\mathcal{C}$ with length $n$. We define the support of a codeword $\textbf{c}=\{ c_0,c_1,\cdots,c_{n-1} \}$ in $\mathcal{C}$ by $\support(\textbf{c})=\{0 \leq i \leq n-1 : c_i \neq 0 \}$. Denote by $\mathcal{B}_\kappa$ the set of supports of all codewords with Hamming weight $\kappa$ in $\mathcal{C}$. The pair $(\mathcal{P},\mathcal{B}_\kappa)$ may be a $t$-$(n,\kappa,\lambda)$ design for some positive integer $\lambda$. Then we say that the code $\mathcal{C}$ supports a $t$-$(n,\kappa,\lambda)$ design.
By the following lemma, we can derive the locality of a linear code if its dual holds a $1$-design.

\begin{lemma}\label{lem-locality}\cite{TP}
Let $\cC$ be a nontrivial linear code of length $n$ and $d^{\perp}$ be the minimum distance of $\cC^{\perp}$. If $(\mathcal{P}(\cC^{\perp}), \mathcal{B}_{d^{\perp}}(\cC^{\perp}))$ is a $1$-$(n, d^{\perp}, \lambda^{\perp})$ design with $\lambda^{\perp} \geq 1$, then $\cC$ has locality $d^{\perp}-1$.
\end{lemma}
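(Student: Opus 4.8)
The plan is to translate the locality condition into a statement about the minimum-weight codewords of the dual code $\cC^\perp$, and then invoke the $1$-design hypothesis to ensure that every coordinate is covered. The cornerstone is the following equivalence for a linear code: a coordinate $i \in [n]$ admits a repair set $R_i \subseteq [n] \setminus \{i\}$ if and only if there is a dual codeword $\mathbf{h} \in \cC^\perp$ with $i \in \support(\mathbf{h}) \subseteq R_i \cup \{i\}$. Granting this, the minimum size of a repair set for coordinate $i$ equals one less than the least weight among the dual codewords whose support contains $i$, and the whole lemma reduces to controlling these weights.

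First I would establish the easy (``if'') direction and use it for the upper bound on the locality. If $\mathbf{h} \in \cC^\perp$ has weight $d^\perp$ and $i \in \support(\mathbf{h})$, then from $\langle \mathbf{h}, \bc \rangle = 0$ for all $\bc \in \cC$ one solves for $c_i$ as a linear combination of the $d^\perp - 1$ coordinates indexed by $\support(\mathbf{h}) \setminus \{i\}$, since $h_i$ is a nonzero, hence invertible, element of $\gf_q$. This exhibits a repair set of size $d^\perp - 1$ for $i$. The $1$-$(n, d^\perp, \lambda^\perp)$ design hypothesis with $\lambda^\perp \geq 1$ says exactly that every point $i \in [n]$ lies in at least one block of $\mathcal{B}_{d^\perp}(\cC^\perp)$, that is, in the support of some weight-$d^\perp$ dual codeword; hence every coordinate admits a repair set of size $d^\perp - 1$, and $\cC$ has locality at most $d^\perp - 1$.

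For the matching lower bound I would prove the converse (``only if'') direction, which holds for any linear code independently of the design hypothesis. Suppose $c_i = f_i(\bc_{R_i})$ for all $\bc \in \cC$, with $|R_i| = r$. By linearity of $\cC$ the map $\bc \mapsto c_i$ factors through the projection $\bc \mapsto \bc_{R_i}$ as a \emph{linear} functional; representing that functional by a coefficient vector produces a codeword $\mathbf{h} \in \cC^\perp$ with $h_i \neq 0$ and $\support(\mathbf{h}) \subseteq R_i \cup \{i\}$, so $\mathbf{h}$ has weight at most $r + 1$. Since every nonzero dual codeword has weight at least $d^\perp$, we obtain $r + 1 \geq d^\perp$, i.e. $r \geq d^\perp - 1$. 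Combining the two bounds yields locality exactly $d^\perp - 1$.

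The main obstacle is the converse direction, precisely the claim that an \emph{arbitrary} recovery function $f_i$ is forced to be linear: recoverability only asserts that $c_i$ is some function of $\bc_{R_i}$, and linearity of $f_i$ must be deduced rather than assumed. The key observation is that $\bc_{R_i} = \bc'_{R_i}$ implies $c_i = c_i'$, so the kernel of the projection $\bc \mapsto \bc_{R_i}$ is contained in the kernel of $\bc \mapsto c_i$ as subspaces of $\cC$. This kernel inclusion is exactly what allows the induced map on the image to be realized by a linear functional, delivering the desired low-weight dual codeword and closing the lower bound.
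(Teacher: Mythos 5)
The paper does not prove this lemma at all: it is imported verbatim from \cite{TP}, so there is no in-paper argument to compare yours against. Your proof is correct and is essentially the standard one underlying the result in \cite{TP}: the upper bound comes from using a minimum-weight parity check $\mathbf{h}\in\cC^{\perp}$ through coordinate $i$ to solve for $c_i$ (the $1$-design hypothesis with $\lambda^{\perp}\geq 1$ guaranteeing such an $\mathbf{h}$ exists for every $i$), and the lower bound comes from showing that any repair set $R_i$ forces a nonzero dual codeword of weight at most $|R_i|+1$ supported in $R_i\cup\{i\}$ with a nonzero $i$-th entry, whence $|R_i|\geq d^{\perp}-1$. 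You correctly identified and closed the one genuinely nontrivial point, namely that the a priori arbitrary recovery function must induce a linear functional because $\ker(\pi_{R_i})\subseteq\ker(\pi_i)$ on the subspace $\cC$; no gaps remain.
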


Similarly to the classical linear codes, there exist some tradeoffs among the parameters of LRCs.
We list two famous bounds on LRCs in the following.

\begin{lemma}[\cite{GH}, Singleton-like bound]\label{lem-Slbound}
For any $(n,k,d,q;r)$-LRC,
\begin{eqnarray}\label{eqn-Slbound}
d \leq n-k- \left \lceil \frac{k}{r} \right \rceil +2.
\end{eqnarray}
\end{lemma}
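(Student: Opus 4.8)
The plan is to reduce the bound to a statement about the ranks of coordinate-restrictions of a generator matrix, and then to exploit locality to exhibit a large coordinate set on which the code's projection is rank-deficient. Fix a generator matrix $G$ of $\cC$ and, for a subset $T \subseteq [n]$, write $\cC|_T$ for the projection of $\cC$ onto the coordinates in $T$ and $\rank(T) := \dim(\cC|_T)$ (equivalently, the rank of the columns of $G$ indexed by $T$). The first observation I would record is the rank--nullity identity
\begin{eqnarray*}
\dim\{\bc \in \cC : \support(\bc) \subseteq T\} = k - \rank([n]\setminus T),
\end{eqnarray*}
since the kernel of the projection $\cC \to \cC|_{[n]\setminus T}$ consists exactly of the codewords supported inside $T$. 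Consequently, if I can find a set $S \subseteq [n]$ with $\rank(S) \le k-1$, then the projection $\cC \to \cC|_S$ has nontrivial kernel, so there is a nonzero codeword vanishing on $S$; its support lies in $[n]\setminus S$ and hence $d \le n - |S|$. The whole problem thus becomes: \emph{make $|S|$ as large as possible subject to $\rank(S) \le k-1$.}

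Next I would build such an $S$ greedily, feeding in one local repair group at a time. Because $\cC$ is an $(n,k,d,q;r)$-LRC and is linear, for each coordinate $i$ the value $c_i$ is an $\gf_q$-linear combination of the coordinates in a repair set $R_i$ with $|R_i| \le r$; equivalently the columns of $G$ indexed by the local group $\Gamma_i := \{i\} \cup R_i$ satisfy a linear dependency, so $\rank(\Gamma_i) \le |R_i| \le r$ while $|\Gamma_i| \le r+1$. Starting from $S = \emptyset$, I would repeatedly select a coordinate $i \notin S$ whose addition strictly increases the rank and adjoin its entire local group $\Gamma_i$ to $S$. Each full group contributes at most $r$ to the rank but up to $r+1$ coordinates, so every group creates a surplus of at least one coordinate beyond the rank it adds. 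I would continue this process, stopping as soon as $\rank(S)$ reaches $k-1$ (topping up with individual rank-increasing coordinates at the very end if the last group would overshoot).

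Finally I would carry out the bookkeeping. Writing $t = \lceil k/r \rceil$, using $t-1$ full local groups accumulates rank $(t-1)r \le k-1$ and $|S| = (t-1)(r+1)$; padding with $(k-1)-(t-1)r$ further rank-increasing coordinates brings the rank to exactly $k-1$ and the size to
\begin{eqnarray*}
|S| = (t-1)(r+1) + \big[(k-1)-(t-1)r\big] = k + \left\lceil \frac{k}{r} \right\rceil - 2.
\end{eqnarray*}
Combining with the reduction from the first paragraph yields $d \le n - |S| = n - k - \lceil k/r\rceil + 2$, as claimed. The step I expect to be the main obstacle is making the greedy argument rigorous near the boundary: a newly chosen group may overlap coordinates already in $S$, so the net gains in size and rank must be tracked jointly, and one has to argue that as long as $\rank(S) < k-1$ there is always an uncovered coordinate that increases the rank (so the process does not stall), while never letting the rank jump past $k-1$. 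Handling these overlap and termination subtleties --- rather than the inequality itself --- is where the real care is needed; the trivial case $r \ge k$ (where the statement degenerates to the ordinary Singleton bound) should be dispatched separately.
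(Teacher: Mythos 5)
The paper offers no proof of this lemma: it is imported verbatim from \cite{GH}, so there is nothing internal to compare against. Your argument is essentially the original Gopalan--Huang--Simitci--Yekhanin proof for linear codes, and it is correct; moreover the two places you flag as delicate are exactly the right ones, and both close. First, the passage from ``$c_i$ is \emph{some} function of $\bc_{R_i}$'' to ``column $\bg_i$ lies in the span of the columns indexed by $R_i$'' deserves a line: applying the repair function to $\mathbf{0}\in\cC$ gives $f_i(\mathbf{0})=0$, so every codeword vanishing on $R_i$ also vanishes at $i$; dually, $(\mathrm{span}\{\bg_j\}_{j\in R_i})^{\perp}\subseteq\{\bg_i\}^{\perp}$, hence $\bg_i\in\mathrm{span}\{\bg_j\}_{j\in R_i}$ and $\rank(\Gamma_i)\le|R_i|\le r$. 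Second, the overlap bookkeeping works because the right invariant is the surplus $|S|-\rank(S)$: if $A=\Gamma_i\setminus S$ is the set of newly adjoined coordinates, then $i\in A$ and $R_i\subseteq(S\cup A)\setminus\{i\}$, so $\rank(S\cup A)\le\rank(S)+|A|-1$, i.e.\ each group step increases the surplus by at least one while raising the rank by at most $r$; thus $t-1$ group steps never push the rank past $(t-1)r\le k-1$, the process cannot stall while $\rank(S)<k$ since the $n$ columns of $G$ span $\gf_q^k$, and padding with single rank-increasing coordinates preserves the surplus. This yields $|S|\ge(k-1)+(t-1)=k+\lceil k/r\rceil-2$ with $\rank(S)\le k-1$, and your first reduction gives $d\le n-|S|$, which is the claimed bound (the case $r\ge k$ needs no separate treatment: it is the run with zero group steps).
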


LRCs are said to be distance-optimal ($d$-optimal for short) when they achieve  the Singleton-like bound. LRCs are said to be almost distance-optimal (almost $d$-optimal for short) when they meet the Singleton-like bound minus one with equality.

\begin{lemma}[\cite{Upper}, Cadambe-Mazumdar bound]\label{lem-CMbound}
For any $(n,k,d,q;r)$-LRC,
\begin{eqnarray}\label{eqn-CMbound}
k \leq \mathop{\min}_{t \in \mathbb{Z}^{+}} [rt+k_{opt}^{(q)}(n-t(r+1),d)],
\end{eqnarray}
where $k_{opt}^{(q)}(n,d)$ is the largest possible dimension of a linear code with length $n$, minimum distance $d$ and alphabet size $q$, and $\mathbb{Z}^{+}$ denotes the set of all positive integers.
\end{lemma}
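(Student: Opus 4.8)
The plan is to establish the equivalent statement that
$k \le rt + k_{opt}^{(q)}(n-t(r+1),d)$ for every fixed $t \in \mathbb{Z}^{+}$, since the asserted bound is merely the minimum over $t$. The central idea is to split $\cC$ into a ``locally determined'' part, whose contribution to the dimension is throttled by the locality $r$, and a residual part that is itself a bona fide linear code of minimum distance at least $d$, to which the definition of $k_{opt}^{(q)}$ applies. For a coordinate set $\Gamma \subseteq [n]$ write $\rank(\cC|_\Gamma)$ for the dimension of the projection of $\cC$ onto $\Gamma$, and let $\cC_\Gamma$ be the shortened code obtained by keeping the codewords that vanish on $\Gamma$ and deleting those coordinates. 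First I would record the two facts about shortening that I need: (i) $\cC_\Gamma$ has length $n-|\Gamma|$ and minimum distance at least $d$, because any nonzero codeword of $\cC_\Gamma$ lifts to a codeword of $\cC$ supported inside $[n]\setminus\Gamma$, hence of weight at least $d$; and (ii) the projection $\cC \to \gf_q^{\Gamma}$ has image of dimension $\rank(\cC|_\Gamma)$ and kernel (after deleting the zero coordinates) equal to $\cC_\Gamma$, so that $\dim \cC_\Gamma = k - \rank(\cC|_\Gamma)$. Combining (i) and (ii) with the definition of $k_{opt}^{(q)}$ yields the master inequality
\begin{equation*}
k = \rank(\cC|_\Gamma) + \dim \cC_\Gamma \le \rank(\cC|_\Gamma) + k_{opt}^{(q)}\!\left(n-|\Gamma|,\,d\right).
\end{equation*}

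The task then reduces to producing, for each $t$, a coordinate set $\Gamma$ with $|\Gamma| \ge t(r+1)$ and $\rank(\cC|_\Gamma) \le rt$: indeed $k_{opt}^{(q)}(\cdot,d)$ is non-decreasing in the length (one may append a zero coordinate to any code without altering its dimension or distance), so $|\Gamma|\ge t(r+1)$ forces $k_{opt}^{(q)}(n-|\Gamma|,d) \le k_{opt}^{(q)}(n-t(r+1),d)$, and the desired bound drops out. To build such a $\Gamma$ I would grow it greedily by adjoining repair groups. Since $\cC$ is linear, for each $i$ the symbol $c_i$ is a \emph{linear} combination of the symbols indexed by its repair set $R_i$, with $|R_i|=r$; hence adjoining $\{i\}\cup R_i$ to the current set enlarges its size by at most $r+1$ but raises the rank by at most $r$, because the freshly added coordinate $i$ is already determined by $R_i$. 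Concretely I would maintain the invariant $\rank(\cC|_\Gamma) \le \tfrac{r}{r+1}|\Gamma|$: a step that adjoins $b$ new coordinates has $b\le r+1$ and raises the rank by at most $b-1$ (the locally determined symbol being free), and $b-1\le \tfrac{r}{r+1}b$ precisely when $b\le r+1$, so the invariant survives. Stopping at $|\Gamma|=t(r+1)$ then forces $\rank(\cC|_\Gamma)\le rt$, which is exactly what the master inequality consumes.

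The main obstacle, and the place where care is genuinely required, is the bookkeeping when repair groups overlap: a chosen group $\{i\}\cup R_i$ may already meet $\Gamma$, so it contributes fewer than $r+1$ new coordinates and the running size may overshoot the target $t(r+1)$. Because the invariant only guarantees $\rank \le rt$ at the exact size $|\Gamma|=t(r+1)$, I would control this by always picking the next $i$ outside $\Gamma$ and, near the end, completing the count one coordinate at a time while preferring rank-redundant coordinates (such as the locally determined symbol of an already-covered group), so that the invariant is preserved down to the precise size. A secondary point to verify is that the growth can actually reach $t(r+1)$ coordinates, which holds whenever $n\ge t(r+1)$; for larger $t$ the bound is vacuous. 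Finally, although I have phrased everything for linear codes, where ``rank'' is literally a dimension, the same scheme runs for an arbitrary $(n,k,d,q;r)$-LRC after replacing dimensions by $q$-ary entropies of a uniformly random codeword, the locality condition $H(X_i\mid X_{R_i})=0$ taking over the role of the linear dependence used above.
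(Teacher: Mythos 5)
The paper offers no proof of this lemma---it is quoted directly from Cadambe--Mazumdar---so I am judging your argument on its own terms. Your overall strategy (the rank--nullity identity $k=\rank(\cC|_\Gamma)+\dim\cC_\Gamma$, the observation that the shortened code $\cC_\Gamma$ keeps minimum distance at least $d$, and the greedy growth of $\Gamma$ by repair groups) is exactly the standard route to this bound, and the master inequality together with the per-step estimate ``$b$ new coordinates, rank increase at most $b-1$'' is correct.

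The gap is in the completion step, and it is not merely bookkeeping. Your invariant $\rank(\cC|_\Gamma)\le\frac{r}{r+1}|\Gamma|$ survives the adjunction of a full repair group but cannot survive the one-coordinate-at-a-time padding you propose: a single appended coordinate can raise the rank by $1$ while raising $\frac{r}{r+1}|\Gamma|$ only by $\frac{r}{r+1}<1$, so the invariant is preserved only if the appended coordinate is rank-redundant, and nothing guarantees such a coordinate exists (after $t$ greedy steps every already-covered local symbol is already inside $\Gamma$). Concretely, with $r=2$, $t=3$ and three greedy steps contributing $3,1,1$ new coordinates, your invariant only certifies $\rank\le\lfloor 10/3\rfloor=3$ at size $5$, and padding the remaining $4$ coordinates then only gives $\rank\le 7>tr=6$. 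The repair is to keep the sharper running estimate $\rank(\cC|_{\Gamma_j})\le|\Gamma_j|-j$ after $j$ full greedy steps (each step contributes at least one coordinate, namely the fresh index $i$, whose symbol is determined by the others, so the rank deficit grows by at least one per step). After $t$ steps one has $|\Gamma_t|\le t(r+1)$ and $\rank(\cC|_{\Gamma_t})\le|\Gamma_t|-t$; padding with \emph{arbitrary} coordinates up to size exactly $t(r+1)$ then costs at most $t(r+1)-|\Gamma_t|$ in rank and yields $\rank\le\bigl(|\Gamma_t|-t\bigr)+\bigl(t(r+1)-|\Gamma_t|\bigr)=tr$, which is what the master inequality needs. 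With that substitution (and the caveat you already noted that $t$ must satisfy $t(r+1)\le n$, the bound being vacuous otherwise), the proof closes; your final remark on replacing dimensions by entropies to cover nonlinear codes is also the right generalization.
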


LRCs are said to be dimension-optimal ($k$-optimal for short) when they achieve the Cadambe-Mazumdar bound. LRCs are said to be almost dimension-optimal (almost $k$-optimal for short) when they meet the Cadambe-Mazumdar bound minus one with equality. A locally recoverable code that achieves either of these two bounds is called optimal.

\subsection{The objectives of this paper}
As linear  codes have nice applications in  distributed storage, combinatorics, cryptography and so on,
constructing linear codes with desirable properties is an interesting research topic in coding theory.
In order to illustrate the main purposes of this paper, we first recall a well-known method for constructing linear codes, i.e., the defining set method. Let $D=\{d_1, d_2, \cdots, d_n\} \subseteq \gf_{q^m}$. Define the trace function from $\gf_{q^m}$ to $\gf_q$ by $\tr_{q^m/q}(x)=x+x^q+x^{q^2}+\cdots+x^{q^{m-1}},\ x \in \gf_{q^m}.$ Define
\begin{eqnarray*}\label{eq-CD}
\cC_D=\left\{\left(\tr_{q^m/q}(bd_1), \tr_{q^m/q}(bd_2), \cdots, \tr_{q^m/q}(bd_n)\right):b\in \gf_{q^m}\right\}
\end{eqnarray*}
  which is a linear code of length $n$ over $\gf_q$. The set $D$ is called the defining set of $\cC_D$. In recent years, a large number of good linear codes have been constructed by special defining sets \cite{D1, D2, D3, D4, HZ4, LC}.
  The augmented code of $\cC_D$ is defined by
\begin{eqnarray}\label{eq-CD1}
\overline{\cC_{D}}=\{(\tr_{q^m/q}(bx))_{x \in D} + c\mathbf{1}:b \in \gf_{q^m}, c \in \gf_q\},
\end{eqnarray}
where $\mathbf{1}$ is the all-1 vector of length $|D|$. If $\mathbf{1}\not\in \cC_D$, then $\overline{\cC_{D}}$ have larger dimension than $\cC_D$. This means that the code rate of $\overline{\cC_{D}}$ is bigger than that of $\cC_D$. As was pointed out by Ding and Tang in \cite{D}, we may require information of the complete weight distribution of the original code $\cC_D$ to determine the minimal distance and weight distribution of the augmented code $\overline{\cC_{D}}$. However, it is in general very difficult to determine the complete weight distribution of a linear code. Hence, it is usually a challenge to determine the augmented code of a linear code.

The first objective of this paper is to study the augmented code $\overline{\cC_D}$  in (\ref{eq-CD1}) with the defining set
\begin{eqnarray}\label{eq-D}
D=\{x \in \gf_{q^m} : \tr_{q^t/q}(x^N)=0\},
\end{eqnarray}
where $N$ is a positive integer and $t$ is the least positive integer such that $\frac{q^m-1}{\text{gcd}(q^m-1, N)} \mid (q^t-1)$.
Note that $D$ is constructed from the monomial function $g(x)=x^N$ over $\gf_{q^m}$.
Firstly, we list some known results on the code $\cC_{D\setminus \{0\}}$ in the following:
\begin{enumerate}[l]
  \item[$\bullet$] If $m=2r$, $N=q^r+1$ and $q$ is an odd prime, the code $\cC_{D\setminus \{0\}}$ is a two-weight code and its complete weight enumerator was given in \cite{LC}.
  \item[$\bullet$] If $m=2r$, $N=q^r+1$ and $q$ is an any prime power, then the code $\cC_{D\setminus \{0\}}$ is a two-weight code and its weight enumerator was given in \cite{HZ4}.
  \item[$\bullet$] If $N=2$ and $q$ is an odd prime, then the weight distribution of code $\cC_{D\setminus \{0\}}$ was given in \cite{D4}.
\end{enumerate}
In Section \ref{sec4}, we  will prove that the locality of $\overline{\cC_D}$ is $2$ for general positive integer $N$  if $q>2$.
Some optimal or almost optimal LRCs are obtained.  If $q=2$, an infinite family of almost $k$-optimal LRCs is derived.
For $N=2$ and odd prime power $q$ or $N=q^r+1$ and any prime power $q$,  $\overline{\cC_D}$ is proved to be self-orthogonal and its weight distribution is determined.
From the weight distribution of $\overline{\cC_D}$, we find that the minimum distance of $\overline{\cC_D}$ is the same with that of $\cC_{D\setminus \{0\}}$. However, the dimension and the code rate of $\overline{\cC_D}$ are both larger than those of $\cC_{D\setminus \{0\}}$.
If $q=2$ and $N=q^r+1$, it is also proved that $\overline{\cC_D}$ holds $2$-designs.
Besides, if $N=2$ or $N=q^r+1$, then $\overline{\cC_D}$ is proved to be an optimally or almost optimally extendable code by choosing suitable generator matrix.

The second objective of this paper is to construct another family of linear codes from weakly regular bent functions over finite fields.
Let $p$ be an odd prime and $f(x)$ be a weakly regular bent function from $\gf_q$ to $\gf_p$ with $f(0)=0$. We
define a $p$-ary linear code as
\begin{eqnarray}\label{eq-Cfbar}
\cC_f=\left\{\bc_{(a,b)}=(af(x)+\tr_{q/p}(bx)+c)_{x \in \gf_q}: a \in \gf_p, b \in \gf_q, c \in \gf_p\right\}.
 \end{eqnarray}
We remark that $\cC_f$ is the augmented code of extended code of
\begin{eqnarray*}
  \cC_f^* =\left\{\bc_{(a,b)}=(af(x)-\tr_{q/p}(bx))_{x \in \gf_q^*}: a \in \gf_p, b \in \gf_q\right\},
\end{eqnarray*}
which was studied in \cite{MS}. In Section \ref{sec5}, the parameters and weight distribution of $\cC_f$ are determined. This family of linear codes contains a subfamily holding $2$-designs. From the weight distribution of $\cC_f$, we note that the minimum distance of $\cC_f$ is the same with that of $\cC_f^*$. However, the dimension and the code rate of $\cC_f$ are both larger than those of $\cC_f^*$. Besides, $\cC_f$ is also proved to be self-orthogonal.
By choosing suitable generator matrix of $\cC_f$, $\cC_f$ is proved to be optimally or almost optimally extendable.
Furthermore, $\cC_f$ is proved to have locality $3$ for some cases and is conjectured to have locality $2$ for other cases. Two families of optimal LRCs are derived from this family of linear codes.

\section{Preliminaries}\label{sec2}
In this section, we will introduce some known results on characters, Gaussian sums over finite fields, weakly regular bent functions and cyclotomic fields.
\subsection{Characters over finite fields}
Let $q=p^e$ with $p$ a prime. Denote by $\zeta_p$ the primitive $p$-th root of complex unity.
 Define the additive character of $\gf_q$ by the homomorphism from the additive group $\gf_q$ to the complex multiplicative group $\mathbb{C}^*$ such that
$
\phi(x+y)=\phi(x)\phi(y)
$ for all $x, y \in \gf_q$.
For any $a \in \gf_q$,
an additive character of $\gf_q$ can be defined by the function $\phi_a(x)=\zeta_p^{\tr_{q/p}(ax)},\ x\in \gf_q$, where $\tr_{q/p}(x)$ is the trace function from $\gf_q$ to $\gf_p$. In addition, the set $\widehat{\gf_q}=\{\phi_a:a\in \gf_q\}$ gives all $q$ different additive characters of $\gf_q$. By definition, we have $\phi_a(x)=\phi_1(ax)$. In particular, $\phi_0$ is referred to as the trivial additive character of $\gf_q$ and $\phi_1$ is called the canonical additive character of $\gf_q$. The orthogonal relation of additive characters \cite{L} is given by
\begin{eqnarray*}
\sum_{x\in \gf_q}\phi_a(x)=\begin{cases}
q    &\text{if $a=0,$ }\\
0     &\text{otherwise .}
\end{cases}
\end{eqnarray*}

Let $\gf_q^*=\langle\beta\rangle$. For each $0\leq j\leq q-2$, a multiplicative character $\psi(x)$ of $\gf_q$ is defined as the homomorphism from the multiplicative group $\gf_q^*$ to the complex multiplicative group $\mathbb{C}^*$ such that
$
\psi(xy)=\psi(x)\psi(y)
$ for all $x, y \in \gf_q^*$.
The function
$\psi_j(\beta^k)=\zeta_{q-1}^{jk}$ for $k=0,1,\cdots,q-2$
 gives a multiplicative character, where $0\leq j \leq q-2$. The set $\widehat{\gf_q^*}=\{\psi_j:j=0,1,\cdots,q-2\}$ gives all multiplicative characters of $\gf_q$ and  is a multiplicative group of order $q-1$. In particular, $\psi_0$ is called the trivial multiplicative character and $\eta:=\psi_{\frac{q-1}{2}}$ is referred to as the quadratic multiplicative character of $\gf_q$ if $q$ is odd. The orthogonal relation of multiplicative characters (see \cite{L}) is given by
\begin{eqnarray*}
\sum_{x\in\gf_q^*}\psi_j(x)=\begin{cases}
q-1    &\text{if $j=0,$ }\\
0    &\text{if $j \neq 0$.}
\end{cases}
\end{eqnarray*}

\subsection{Gaussian sums over finite fields}
For an additive character $\phi$ and a multiplicative character $\psi$ of $\gf_q$, the \emph{Gaussian sum} $G(\psi, \phi)$ over $\gf_q$ is defined by
$$G(\psi, \phi)=\sum_{x\in \gf_q^*}\psi(x)\phi(x).$$
Specially, $G(\eta,\phi)$ is referred to as the \emph{quadratic} \emph{Gaussian sum} over $\gf_q$ for nontrivial $\phi$.

The explicit values of quadratic \emph{Gaussian sum} are given as follows.
\begin{lemma}[\cite{L}, Theorem 5.15]\label{quadGuasssum1}
Let $q=p^e$ with $p$ odd. Let $\phi_1$ be a canonical additive character of $\gf_q$ and $\eta$ be a quadratic multiplicative character of $\gf_q$. Then
\begin{eqnarray*}
G(\eta,\phi_1)=(-1)^{e-1}(\sqrt{-1})^{(\frac{p-1}{2})^2e}\sqrt{q}
=\left\{
\begin{array}{lll}
(-1)^{e-1}\sqrt{q}    &   \mbox{ for }p\equiv 1\pmod{4},\\
(-1)^{e-1}(\sqrt{-1})^{e}\sqrt{q}    &   \mbox{ for }p\equiv 3\pmod{4}.
\end{array}
\right.
\end{eqnarray*}
\end{lemma}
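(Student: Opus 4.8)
The plan is to reduce the general case $q=p^e$ to the prime-field case $e=1$ by means of the Hasse--Davenport lifting relation, so that the only genuinely delicate point becomes the determination of the sign of the Gauss sum over $\gf_p$.

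First I would settle the base case $e=1$. Put $g=G(\eta,\phi_1)=\sum_{x\in\gf_p^*}\eta(x)\zeta_p^{x}$, where $\eta$ is the quadratic character of $\gf_p$. Using the orthogonality relations one computes $g\,\overline{g}=p$, so $|g|=\sqrt p$; and substituting $y=xt$ in $g^2=\sum_{x,y}\eta(xy)\zeta_p^{x+y}$ and summing over $x$ gives $g^2=\eta(-1)\,p=(-1)^{(p-1)/2}p$. Hence $g=\pm\sqrt p$ when $p\equiv 1\pmod 4$ and $g=\pm\sqrt{-1}\,\sqrt p$ when $p\equiv 3\pmod 4$, with the sign still undetermined.

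The main obstacle is precisely this sign. The elementary identities above pin $g$ down only up to a factor $\pm 1$, and establishing that the sign is always $+$ is Gauss's celebrated theorem on the sign of the quadratic Gauss sum. To resolve it I would pass to the equivalent form $g=\sum_{n=0}^{p-1}\zeta_p^{\,n^2}$, which holds because each $t\in\gf_p^*$ has $1+\eta(t)$ square roots, and then fix the sign by one of the classical arguments: either Gauss's evaluation of a finite product of the form $\prod_{k}(\zeta_p^{k}-\zeta_p^{-k})$, or Schur's eigenvalue analysis of the discrete Fourier transform matrix $(\zeta_p^{jk})_{0\le j,k\le p-1}$, whose trace is $g$ and whose known spectrum of fourth roots of $p$ forces the positive choice. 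This yields $g=\sqrt p$ for $p\equiv1\pmod4$ and $g=\sqrt{-1}\,\sqrt p$ for $p\equiv3\pmod4$.

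Finally I would lift to $\gf_{p^e}$. Since the norm $\Norm_{p^e/p}\colon\gf_{p^e}^*\to\gf_p^*$ is surjective, $\eta\circ\Norm_{p^e/p}$ is a nontrivial character of order dividing $2$ on the cyclic group $\gf_{p^e}^*$, hence equals the quadratic character of $\gf_{p^e}$; likewise the canonical additive character of $\gf_{p^e}$ is $x\mapsto\zeta_p^{\tr_{p^e/p}(x)}$, the lift through the trace of the canonical additive character of $\gf_p$. The Hasse--Davenport relation $-G(\chi',\lambda')=\bigl(-G(\chi,\lambda)\bigr)^{e}$ for the lifted characters then gives $G(\eta,\phi_1)=(-1)^{e-1}g^{e}$ over $\gf_{p^e}$. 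Substituting the two base-case values and observing that $\bigl(\tfrac{p-1}{2}\bigr)^2\equiv 0\pmod 4$ for $p\equiv1\pmod4$ while $\bigl(\tfrac{p-1}{2}\bigr)^2\equiv 1\pmod 4$ for $p\equiv3\pmod4$ collapses the two cases into the single closed form $(-1)^{e-1}(\sqrt{-1})^{(\frac{p-1}{2})^2 e}\sqrt q$, as claimed.
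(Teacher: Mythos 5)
Your argument is correct, but note that the paper offers no proof of this statement at all: it is quoted verbatim as Theorem 5.15 of the cited reference \cite{L}. What you have written is essentially the standard proof found in that reference --- the elementary identities $|g|^2=p$ and $g^2=\eta(-1)p$ over $\gf_p$, Gauss's sign theorem (the one genuinely hard step, which you correctly isolate and may legitimately invoke rather than reprove), and the Hasse--Davenport lifting $-G(\chi',\lambda')=(-G(\chi,\lambda))^e$ to pass to $\gf_{p^e}$; your parity check on $\bigl(\tfrac{p-1}{2}\bigr)^2 e$ correctly reconciles the two displayed cases with the closed form. So the proposal is a sound reconstruction of the omitted classical proof rather than a deviation from anything in the paper.
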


The following lemmas give the explicit values of some special exponential sums.
\begin{lemma}[\cite{L}, Theorem 5.33]\label{lem-weil}
Let $\phi$ be a nontrivial additive character of $\gf_q$, where $q$ is power of an odd prime. Let $f(x)=a_2x^2+a_1x+a_0 \in \gf_q[x]$, where $a_2 \neq 0$. Then
\begin{eqnarray*}
\sum_{c \in \gf_q}\phi\left(f(c)\right) = \phi\left(a_0-a_1^2(4a_2)^{-1}\right)\eta(a_2)G(\eta, \phi).
\end{eqnarray*}
\end{lemma}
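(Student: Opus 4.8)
The plan is to reduce the sum over a general quadratic to a sum over a pure square by completing the square, and then to evaluate the resulting square-sum using the quadratic character together with the orthogonality of additive characters. First I would exploit that $p$ is odd, so that $2$, and hence $4a_2$, are invertible in $\gf_q$; this lets me write
\begin{eqnarray*}
f(x)=a_2x^2+a_1x+a_0=a_2\left(x+\frac{a_1}{2a_2}\right)^2+\left(a_0-\frac{a_1^2}{4a_2}\right).
\end{eqnarray*}
Since $x\mapsto x+\frac{a_1}{2a_2}$ is a bijection of $\gf_q$, replacing $x$ by the shifted variable and using the defining property $\phi(x+y)=\phi(x)\phi(y)$ turns the sum into
\begin{eqnarray*}
\sum_{c\in\gf_q}\phi(f(c))=\phi\left(a_0-\frac{a_1^2}{4a_2}\right)\sum_{y\in\gf_q}\phi(a_2y^2).
\end{eqnarray*}
This isolates the constant factor $\phi\left(a_0-a_1^2(4a_2)^{-1}\right)$ exactly as in the statement and reduces everything to computing $S:=\sum_{y\in\gf_q}\phi(a_2y^2)$.

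Next I would evaluate $S$ by counting how often each value $u=y^2$ is attained. With the convention $\eta(0)=0$, the number of $y\in\gf_q$ solving $y^2=u$ equals $1+\eta(u)$ for every $u\in\gf_q$ (one solution when $u=0$, two for nonzero squares, none for nonsquares). Hence
\begin{eqnarray*}
S=\sum_{u\in\gf_q}(1+\eta(u))\phi(a_2u)=\sum_{u\in\gf_q}\phi(a_2u)+\sum_{u\in\gf_q^*}\eta(u)\phi(a_2u).
\end{eqnarray*}
The first sum vanishes because $a_2\neq 0$ makes $u\mapsto\phi(a_2u)$ a nontrivial additive character of $\gf_q$, so the orthogonality relation for additive characters applies. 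For the second sum I would substitute $v=a_2u$; since $\eta$ is valued in $\{\pm1\}$ we have $\eta(a_2^{-1})=\eta(a_2)$, which yields
\begin{eqnarray*}
\sum_{u\in\gf_q^*}\eta(u)\phi(a_2u)=\eta(a_2)\sum_{v\in\gf_q^*}\eta(v)\phi(v)=\eta(a_2)G(\eta,\phi).
\end{eqnarray*}
Combining the two displays gives $S=\eta(a_2)G(\eta,\phi)$, and multiplying back by the constant factor extracted earlier produces the claimed identity.

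The genuinely substantive step is the passage $S=\eta(a_2)G(\eta,\phi)$, that is, recognizing the pure square-sum as a twisted quadratic Gaussian sum; everything else is bookkeeping. The only points needing care are the convention $\eta(0)=0$, so that the counting formula $1+\eta(u)$ is uniformly valid including at $u=0$, and the identity $\eta(a_2^{-1})=\eta(a_2)$ used in the change of variable. I would not, at this level, need the explicit evaluation $G(\eta,\phi_1)$ of Lemma \ref{quadGuasssum1}; the result is stated purely in terms of $G(\eta,\phi)$, so the argument stops once the Gaussian sum has been isolated.
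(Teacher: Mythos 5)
Your proof is correct. The paper gives no proof of this statement --- it is quoted directly from Lidl--Niederreiter (Theorem 5.33) --- and your argument (complete the square using that $2a_2$ is invertible in odd characteristic, then evaluate $\sum_{y}\phi(a_2y^2)$ via the root-counting identity $\#\{y: y^2=u\}=1+\eta(u)$, orthogonality of additive characters, and the substitution $v=a_2u$ with $\eta(a_2^{-1})=\eta(a_2)$) is exactly the standard textbook derivation of that result.
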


\begin{lemma}\label{lem-sums}\cite{L}
  Let $\gf_q$ be a finite field with characteristic $p$ and let
  \begin{eqnarray*}
    f(x) &=& a_rx^{p^r}+a_{r-1}x^{p^{r-1}}+\cdots+a_1x^p+a_0x+a
  \end{eqnarray*}
  be an affine $p$-polynomial over $\gf_q$. Let $\phi_b$, $b \in \gf_q^*$, be a nontrivial additive character of $\gf_q$. Then
  \begin{eqnarray*}
    \sum_{c \in \gf_q}\phi_b(f(c)) &=& \begin{cases}
                                         \phi_b(a)q & \mbox{if } ba_r+b^pa_{r-1}^{p} + \cdots +b^{p^{r-1}}a_1^{p^{r-1}}+b^{p^r}a_0^{p^r}=0, \\
                                         0 & \mbox{otherwise}.
                                       \end{cases}
  \end{eqnarray*}
\end{lemma}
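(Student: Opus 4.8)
This is a classical character-sum evaluation (a version of it is recorded in \cite{L}, which is why the statement carries that citation), so one option is simply to cite it; nevertheless, here is how I would prove it. The plan is to peel off the affine constant, recognize the remaining sum as a sum over a single additive character, and read off the vanishing condition from the orthogonality relation of additive characters recalled earlier. First I would write $f(x) = L(x) + a$, where $L(x) = a_r x^{p^r} + \cdots + a_1 x^p + a_0 x$ is the linearized ($p$-)polynomial attached to $f$. Since $\phi_b$ is an additive character, $\phi_b(f(c)) = \phi_b(a)\,\phi_b(L(c))$, whence $\sum_{c\in\gf_q}\phi_b(f(c)) = \phi_b(a)\sum_{c\in\gf_q}\phi_b(L(c))$, and it suffices to evaluate the inner sum.

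Next I would observe that $c \mapsto \chi(c) := \phi_b(L(c))$ is itself an additive character of $\gf_q$. Indeed, a linearized polynomial induces an $\gf_p$-linear endomorphism of $\gf_q$, and $\tr_{q/p}$ is $\gf_p$-linear, so $c \mapsto \tr_{q/p}(bL(c))$ is an $\gf_p$-linear functional $\gf_q \to \gf_p$; hence $\chi(c) = \zeta_p^{\tr_{q/p}(bL(c))}$ is a group homomorphism $(\gf_q,+)\to\bC^*$. By the orthogonality relation for additive characters quoted in Section \ref{sec2}, $\sum_{c}\chi(c)$ equals $q$ when $\chi$ is the trivial character and $0$ otherwise.

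It then remains to decide when $\chi$ is trivial, that is, when $\tr_{q/p}(bL(c)) = 0$ for all $c\in\gf_q$. Here I would exploit the Frobenius-invariance of the trace, $\tr_{q/p}(w) = \tr_{q/p}(w^{p^j})$, to shift the $p^i$-power off $c$ in each summand: writing $q = p^e$, we get $\tr_{q/p}(ba_i c^{p^i}) = \tr_{q/p}\bigl((ba_i)^{p^{e-i}} c^{p^e}\bigr) = \tr_{q/p}\bigl((ba_i)^{p^{e-i}} c\bigr)$ since $c^{p^e}=c$. Summing over $i$ gives $\tr_{q/p}(bL(c)) = \tr_{q/p}(c\,\Lambda(b))$ with $\Lambda(b) = \sum_{i=0}^r (ba_i)^{p^{e-i}}$, the adjoint of $L$ with respect to the nondegenerate trace bilinear form. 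By nondegeneracy, this vanishes for every $c$ if and only if $\Lambda(b) = 0$. Finally, since $x\mapsto x^{p^r}$ is an automorphism of $\gf_q$, the condition $\Lambda(b)=0$ is equivalent to $\Lambda(b)^{p^r} = \sum_{i=0}^r (ba_i)^{p^{r-i}} = ba_r + b^p a_{r-1}^p + \cdots + b^{p^{r-1}}a_1^{p^{r-1}} + b^{p^r}a_0^{p^r} = 0$, which is exactly the displayed criterion. Combining this with the first two steps yields the claimed case distinction.

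The only genuinely delicate point is this last Frobenius bookkeeping: correctly identifying the adjoint $\Lambda$ and matching its exponents to the precise normalization in the statement. Everything else is the reduction to one additive character plus orthogonality. I would also note that the vanishing condition is insensitive to an overall application of Frobenius, which is why the particular form $ba_r + b^p a_{r-1}^p + \cdots + b^{p^r}a_0^{p^r}$ entails no loss of generality.
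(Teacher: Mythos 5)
Your proof is correct. The paper gives no argument for this lemma at all --- it is quoted directly from Lidl and Niederreiter \cite{L} (it is their Theorem 5.34) --- so there is nothing internal to compare against; your derivation is essentially the standard textbook proof: split off the constant $a$, recognize $c\mapsto\phi_b(L(c))$ as an additive character, apply orthogonality, and identify triviality of that character with the vanishing of the adjoint linearized polynomial $\Lambda(b)=\sum_{i=0}^{r}(ba_i)^{p^{e-i}}$ via nondegeneracy of the trace form, finally normalizing by the Frobenius power $p^r$ to match the stated criterion. The Frobenius bookkeeping in your last step checks out (including the harmless reduction of exponents modulo $e$ when $r>e$), so the argument is complete.
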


The following lemma can be obtained directly from  \cite[Lemma $7$]{HZ2}.
\begin{lemma}\label{lem-2k}
Let $m=2r$ be an even integer. Then we have
\begin{eqnarray*}
\sum_{x \in \gf_{q^{m}}}\zeta_p^{\tr_{q^r/p}(ax^{q^r+1})+\tr_{q^{m}/p}(bx)}=
-q^r\zeta_p^{-\tr_{q^r/p}(\frac{b^{q^r+1}}{a})},
\end{eqnarray*}
where $a \in \gf_{q^r}^*$, $b \in \gf_{q^{m}}$ and $\zeta_p$ is the complex $p$-th root of unity.
\end{lemma}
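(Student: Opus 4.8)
The plan is to collapse the two traces into a single trace over the subfield $\gf_{q^r}$, recognize the exponent $x^{q^r+1}$ as the norm from $\gf_{q^m}$ down to $\gf_{q^r}$, and then complete the square. Write $s=q^r$, so that $\gf_{q^m}=\gf_{s^2}$, $\gf_{q^r}=\gf_s$, and $x^{q^r+1}=\Norm_{s^2/s}(x)\in\gf_s$ for every $x\in\gf_{s^2}$. First I would apply transitivity of the trace to write $\tr_{q^m/p}(bx)=\tr_{s/p}\bigl(\tr_{s^2/s}(bx)\bigr)=\tr_{s/p}(bx+b^sx^s)$. Since $a\in\gf_s$ and $x^{s+1}\in\gf_s$, the term $ax^{s+1}$ already lies in $\gf_s$, so the entire exponent becomes the single trace $\tr_{s/p}\bigl(ax^{s+1}+bx+b^sx^s\bigr)$ over $\gf_s$.

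Next I would complete the square. A direct expansion gives the identity
$$ax^{s+1}+bx+b^sx^s=a\left(x+\tfrac{b^s}{a}\right)\left(x^s+\tfrac{b}{a}\right)-\frac{b^{s+1}}{a}.$$
Putting $y=x+b^s/a$ and using the Frobenius fixed-point identities $a^s=a$ (as $a\in\gf_s$) and $b^{s^2}=b$ (as $b\in\gf_{s^2}$), one checks that $y^s=x^s+b/a$, so the product above collapses to $ay^{s+1}$. Because $x\mapsto y$ is a bijection of $\gf_{s^2}$ and $b^{s+1}=\Norm(b)\in\gf_s$, the original sum factors as
$$\sum_{x\in\gf_{s^2}}\zeta_p^{\tr_{s/p}(ax^{s+1}+bx+b^sx^s)}=\zeta_p^{-\tr_{s/p}(b^{s+1}/a)}\sum_{y\in\gf_{s^2}}\zeta_p^{\tr_{s/p}(ay^{s+1})},$$
which already produces the claimed character factor $\zeta_p^{-\tr_{q^r/p}(b^{q^r+1}/a)}$.

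It then remains to evaluate the reduced sum $\sum_{y\in\gf_{s^2}}\zeta_p^{\tr_{s/p}(a\Norm(y))}$ and show it equals $-s=-q^r$. Here I would use that the norm map $\Norm\colon\gf_{s^2}^*\to\gf_s^*$ is surjective and exactly $(s+1)$-to-one, together with $\Norm(0)=0$, to rewrite the sum as $1+(s+1)\sum_{z\in\gf_s^*}\zeta_p^{\tr_{s/p}(az)}$. Since $a\neq0$, the map $z\mapsto\zeta_p^{\tr_{s/p}(az)}$ is a nontrivial additive character of $\gf_s$, so the orthogonality relation for additive characters forces $\sum_{z\in\gf_s}\zeta_p^{\tr_{s/p}(az)}=0$ and hence $\sum_{z\in\gf_s^*}\zeta_p^{\tr_{s/p}(az)}=-1$. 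This gives $1-(s+1)=-s$, and combining with the character factor from the previous step yields exactly $-q^r\zeta_p^{-\tr_{q^r/p}(b^{q^r+1}/a)}$.

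I do not expect a serious obstacle, since the statement is essentially a norm-form Gaussian sum. The only point that genuinely requires care is the completion-of-square substitution: one must invoke $a^s=a$ and $b^{s^2}=b$ to verify $y^s=x^s+b/a$, so that the cross terms in $bx+b^sx^s$ cancel cleanly and the sum decouples into the character factor times a norm-fiber sum. Everything else is a routine application of the orthogonality relations recalled in Section~\ref{sec2}.
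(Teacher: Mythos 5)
Your proof is correct and complete. Note that the paper itself does not prove this lemma at all: it simply states that the result ``can be obtained directly from'' Lemma~7 of \cite{HZ2}, so there is no in-paper argument to compare against. Your blind proof supplies exactly the standard self-contained derivation one would expect behind that citation: the factorization $ax^{s+1}+bx+b^sx^s=a\bigl(x+\tfrac{b^s}{a}\bigr)\bigl(x^s+\tfrac{b}{a}\bigr)-\tfrac{b^{s+1}}{a}$ expands correctly, the substitution $y=x+b^s/a$ satisfies $y^s=x^s+b/a$ precisely because $a^s=a$ and $b^{s^2}=b$ (the one point you rightly flag as needing care), and the reduced sum $\sum_{y\in\gf_{s^2}}\zeta_p^{\tr_{s/p}(a\Norm_{s^2/s}(y))}=1+(s+1)\sum_{z\in\gf_s^*}\zeta_p^{\tr_{s/p}(az)}=1-(s+1)=-s$ follows from the $(s+1)$-to-one surjectivity of the norm and orthogonality of the nontrivial additive character $z\mapsto\zeta_p^{\tr_{s/p}(az)}$. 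The only thing your write-up buys beyond the paper is self-containedness; there is no gap.
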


\subsection{The number of roots of quadratic polynomial over finite fields}
The number of the roots of the quadratic polynomial over finite field of characteristic $2$ is given in the following lemma.
\begin{lemma}\label{lem-root}\cite{BE}
  Let $\gf_q$ be a finite field of characteristic $2$ and $f(x)=ax^2+bx+c \in \gf_q[x]$ be a polynomial of degree $2$. Then
  \begin{enumerate}[1.]
    \item[\textcircled{$1$}] $f(x)$ has exactly one root in $\gf_q$ if and only if $b=0$;
    \item[\textcircled{$2$}] $f(x)$ has exactly two root in $\gf_q$ if and only if $b \neq 0$ and $\tr_{q/2}(\frac{ac}{b^2})=0$;
    \item[\textcircled{$3$}] $f(x)$ has no root in $\gf_q$ if and only if $b \neq 0$ and $\tr_{q/2}(\frac{ac}{b^2})=1$.
  \end{enumerate}
\end{lemma}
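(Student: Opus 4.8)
The plan is to split the argument according to whether the linear coefficient $b$ vanishes, and to reduce the genuinely quadratic case to an Artin--Schreier equation governed by the absolute trace. Since the three stated conditions ($b=0$; $b\neq0$ with $\tr_{q/2}(ac/b^2)=0$; $b\neq0$ with $\tr_{q/2}(ac/b^2)=1$) partition all degree-$2$ polynomials, and the three asserted root counts $1,2,0$ are distinct, it suffices to prove that each condition forces the corresponding root count; the ``if and only if'' statements then follow automatically. First I would dispose of the case $b=0$: here $f(x)=ax^2+c$, and because $\gf_q$ has characteristic $2$ the Frobenius map $x\mapsto x^2$ is a bijection of $\gf_q$. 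Hence $x^2=c/a$ has precisely one solution, giving the single root of \textcircled{$1$}.

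Next, assuming $b\neq0$, I would normalize. Dividing $f$ by $a$ and applying the substitution $x=(b/a)y$, then multiplying through by $a^2/b^2$, transforms $f(x)=0$ into $y^2+y=ac/b^2$; since $b/a\neq0$ this substitution is a bijection of $\gf_q$, so $f$ and the normalized equation have the same number of roots. The problem is thereby reduced to counting solutions of the Artin--Schreier equation $y^2+y=\alpha$ with $\alpha=ac/b^2$.

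The heart of the argument is the analysis of the map $T\colon\gf_q\to\gf_q$, $T(y)=y^2+y$. I would check that $T$ is $\gf_2$-linear, as $(y+z)^2+(y+z)=(y^2+y)+(z^2+z)$ in characteristic $2$. Its kernel is $\{y:y(y+1)=0\}=\{0,1\}$, of size $2$, so $\image(T)$ has size $q/2$ and every value attained is attained by exactly two elements $y$. Moreover, for every $y$ one computes $\tr_{q/2}(T(y))=\tr_{q/2}(y^2)+\tr_{q/2}(y)=0$, using that the trace is Frobenius-invariant so $\tr_{q/2}(y^2)=\tr_{q/2}(y)$. Hence $\image(T)$ is contained in the trace-zero hyperplane $\ker(\tr_{q/2})$, which also has size $q/2$, forcing equality. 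Therefore $y^2+y=\alpha$ has exactly two solutions when $\tr_{q/2}(\alpha)=0$ and none when $\tr_{q/2}(\alpha)=1$; substituting $\alpha=ac/b^2$ yields \textcircled{$2$} and \textcircled{$3$}.

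The only delicate point is the identification $\image(T)=\ker(\tr_{q/2})$, which is exactly where the trace condition on $ac/b^2$ enters; everything else is routine substitution together with the bijectivity of Frobenius. I expect that identification to be the main (though standard) obstacle, and it hinges on the dimension count showing that the a priori inclusion $\image(T)\subseteq\ker(\tr_{q/2})$ is in fact an equality.
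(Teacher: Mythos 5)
The paper gives no proof of this lemma at all---it is quoted directly from the reference [BE] of Berlekamp, Rumsey and Solomon---so there is nothing to compare against line by line. Your argument is correct and complete: the bijectivity of Frobenius settles the case $b=0$; the substitution $x=(b/a)y$ correctly reduces the case $b\neq 0$ to the Artin--Schreier equation $y^2+y=ac/b^2$; and the dimension count identifying the image of $y\mapsto y^2+y$ with the trace-zero hyperplane (via the inclusion plus the equality of cardinalities $q/2$) is exactly the standard proof of this classical fact. The opening observation that the three conditions partition all cases and the three root counts are distinct, so that one-directional implications suffice for the stated equivalences, is also sound.
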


\subsection{Weakly regular bent functions and the cyclotomic fields}
Let $f(x)$ be a function from $\gf_{p^e}$ to $\gf_p$. Define the Walsh transform of $f(x)$ as follows:
\begin{eqnarray*}
\text{W}_f(\beta):=\sum_{x \in \gf_{p^e}}\zeta_p^{f(x)-\tr_{p^e/p}(\beta x)},\ \beta \in \gf_{p^e}.
\end{eqnarray*}
The function $f(x)$ is referred to as a $p$-ary bent function if $|\text{W}_f(\beta)|=p^{\frac{e}{2}}$ for any $\beta \in \gf_{p^e}$. For a bent function $f(x)$, if there exists some $p$-ary function $f^*(x)$ such that $\text{W}_f(\beta)=p^{\frac{e}{2}}\zeta_p^{f^*(\beta)}$ for any $\beta \in \gf_{p^e}$, then $f(x)$ is called a regular bent function. If there exists some $p$-ary function $f^*(x)$ and a complex $u$ with unit magnitude satisfying $\text{W}_f(\beta)=up^{\frac{e}{2}}\zeta_p^{f^*(\beta)}$ for any $\beta \in \gf_{p^e}$, then $f(x)$ is called a weakly regular bent function, where $f^*(x)$ is called the dual of $f(x)$. By \cite{Helleseth1} and \cite{Helleseth2}, if $f(x)$ is a weakly regular bent function, then
\begin{eqnarray}\label{eq-Wf}
\text{W}_f(\beta)=\varepsilon \sqrt{p^*}^e\zeta_p^{f^*(\beta)},
\end{eqnarray}
where $\varepsilon=\pm1$ is known as the sign of the Walsh transform of $f(x)$ and $p^*=(-1)^{\frac{p-1}{2}}p$. Note that the dual of a weakly regular bent function $f(x)$ is also a weakly regular bent function and $(f^*)^*(x)=f(-x)$. The sign of the Walsh transform of $f^*(x)$ is $\eta_0^e(-1)\varepsilon$.
\begin{definition}
Let $\mathcal{RF}$ denote the set of all $p$-ary weakly regular bent functions $f(x)$ satisfying $f(0)=0$ and $f(ax)=a^hf(x)$ for any $x \in \gf_{p^e}$ and $a \in \gf_p^*$, where $h$ is a positive even integer with $\gcd(h-1, p-1)=1$.
\end{definition}

Note that almost all known weakly regular bent functions are contained in $\mathcal{RF}$. The readers are referred to \cite[Table 1]{MS} for known weakly regular bent functions.

\begin{lemma}\cite{T}\label{lem-f*1}
  If $f(x) \in \cRF$, then $f^*(0)=0$.
\end{lemma}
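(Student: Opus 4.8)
The statement to prove is Lemma~\ref{lem-f*1}: if $f(x) \in \cRF$, then $f^*(0)=0$.

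\medskip

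The plan is to exploit the defining homogeneity property $f(ax) = a^h f(x)$ for $a \in \gf_p^*$ together with the weakly regular bent structure, and read off the value of the dual at $\beta = 0$ directly from the Walsh transform. First I would write down the Walsh transform at $\beta = 0$, namely $\text{W}_f(0) = \sum_{x \in \gf_{p^e}} \zeta_p^{f(x)}$, and compare it with the weakly regular expression $\text{W}_f(0) = \varepsilon \sqrt{p^*}^e \zeta_p^{f^*(0)}$ from (\ref{eq-Wf}). Since $\varepsilon \sqrt{p^*}^e$ is a fixed real or purely imaginary scalar times a sign, the quantity $\zeta_p^{f^*(0)}$ is determined by the argument of the character sum $\sum_{x} \zeta_p^{f(x)}$. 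The goal is therefore to show that this character sum is a \emph{real} multiple of $\varepsilon \sqrt{p^*}^e$, i.e.\ that $\zeta_p^{f^*(0)} = 1$, forcing $f^*(0) = 0$.

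\medskip

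The key step is to use the homogeneity to partition $\gf_{p^e}$ into $\gf_p^*$-orbits. For $a \in \gf_p^*$, the substitution $x \mapsto ax$ is a bijection of $\gf_{p^e}$, so $\sum_{x} \zeta_p^{f(ax)} = \sum_x \zeta_p^{f(x)} = \text{W}_f(0)$; but $f(ax) = a^h f(x)$, which means $\sum_x \zeta_p^{a^h f(x)} = \text{W}_f(0)$ for every $a \in \gf_p^*$. Because $h$ is even and $\gcd(h-1,p-1)=1$, as $a$ ranges over $\gf_p^*$ the exponent $a^h$ ranges over the nonzero squares (or is otherwise controlled), and I would sum this identity over all $a \in \gf_p^*$ to collapse the inner character values. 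Averaging over $a$ groups the terms $f(x) \ne 0$ into complete sums of $p$-th roots of unity that either vanish or reinforce, isolating the contribution of $x$ with $f(x)=0$ and showing $\text{W}_f(0)$ is real and has the same sign as $\varepsilon \sqrt{p^*}^e$ up to the known factor. Comparing with (\ref{eq-Wf}) then yields $\zeta_p^{f^*(0)} = 1$, hence $f^*(0)=0$.

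\medskip

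The main obstacle I anticipate is handling the character-sum sign and the precise effect of the exponent $a^h$ carefully: since $\sqrt{p^*}^e$ can be purely imaginary when $e$ and $(p-1)/2$ interact unfavorably, I must track whether $\text{W}_f(0)$ is real or imaginary and match it against $\varepsilon \sqrt{p^*}^e \zeta_p^{f^*(0)}$ without sign ambiguity. The homogeneity condition with $h$ even is exactly what guarantees $f(-x) = f(x)$ (taking $a=-1$, since $(-1)^h = 1$), which forces the character sum $\sum_x \zeta_p^{f(x)}$ to be real-valued, because pairing $x$ with $-x$ makes the imaginary parts cancel. Once realness is established, $\zeta_p^{f^*(0)}$ must itself be real and of absolute value $1$, so it equals $\pm 1$; and since $\zeta_p^{f^*(0)}$ is a $p$-th root of unity with $p$ odd, the only real value it can take is $1$, giving $f^*(0)=0$. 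I expect the cleanest argument routes entirely through this parity/realness observation rather than through a full evaluation of the Gaussian-type sum.
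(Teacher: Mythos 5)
Your proof has a genuine gap at its central step. You claim that because $h$ is even we have $f(-x)=f(x)$, and that ``pairing $x$ with $-x$ makes the imaginary parts cancel,'' so that $W_f(0)=\sum_{x}\zeta_p^{f(x)}$ is real. This confuses equality of terms with complex conjugacy of terms: the conjugate of $\zeta_p^{f(x)}$ is $\zeta_p^{-f(x)}$, so the pairing $x\leftrightarrow -x$ would cancel imaginary parts only if $f(-x)=-f(x)$; the evenness $f(-x)=f(x)$ makes the two paired terms identical, so their imaginary parts add rather than cancel. Worse, the conclusion itself is false in general: when $p\equiv 3\pmod 4$ and $e$ is odd, $\sqrt{p^*}^e$ is purely imaginary, so the lemma's assertion $f^*(0)=0$ makes $W_f(0)=\varepsilon\sqrt{p^*}^e$ purely imaginary, not real (for instance $f(x)=\tr_{27/3}(x^2)$ gives $W_f(0)=-3\sqrt{-3}$). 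Since realness fails, the final step ``a real $p$-th root of unity must equal $1$'' (which is fine in itself) has nothing to apply to. The preliminary averaging idea also does not deliver what you hope: for $f(x)=j\ne 0$ the multiset $\{a^hj: a\in\gf_p^*\}$ consists of $\gcd(h,p-1)\ge 2$ copies of a coset of the proper subgroup of $h$-th powers, so $\sum_{a\in\gf_p^*}\zeta_p^{a^hj}$ is a Gaussian-period-type sum, not a complete (vanishing) sum of $p$-th roots of unity.

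For comparison, the paper gives no proof of this lemma; it is quoted from \cite{T}. The mechanism that does work goes through the Galois action of Lemma \ref{lem-cyclo} rather than through realness: computing $\sigma_a(W_f(\beta))$ once from (\ref{eq-Wf}) and once via the substitution $x\mapsto bx$ with $b^{h-1}=a$ (possible since $\gcd(h-1,p-1)=1$) yields the homogeneity $f^*(a\beta)=a^{l}f^*(\beta)$ of Lemma \ref{lem-f*2}; setting $\beta=0$ gives $f^*(0)=a^{l}f^*(0)$ for every $a\in\gf_p^*$, which forces $f^*(0)=0$ as soon as some $a^{l}\ne 1$. The residual degenerate case (when $(p-1)\mid h$, e.g.\ $p=3$, $h=2$) requires an integrality argument: expand $W_f(0)=\sum_{j}N_j\zeta_p^j$ with $N_j=|f^{-1}(j)|$ in the integral basis of Lemma \ref{lem-cyclo}, note that $N_0$ is odd and $N_j$ is even for $j\ne0$ because $f(0)=0$ and $f(-x)=f(x)$, and compare parities with $\varepsilon\sqrt{p^*}^e\zeta_p^{f^*(0)}$. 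The realness claim itself cannot be repaired.
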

\begin{lemma}\cite{T}\label{lem-f*2}
  If $f(x) \in \cRF$, then for any $a \in \gf_p^*$ and $x \in \gf_{p^e}$, there exists a positive even integer $l$ with $\gcd(l-1, p-1)=1$ such that $f^*(ax)=a^lf^*(x)$.
\end{lemma}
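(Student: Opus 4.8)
The plan is to turn the scaling law $f(ax)=a^hf(x)$ into a functional equation for $f^*$ by exploiting the Galois action of $\mathrm{Gal}(\mathbb{Q}(\zeta_p)/\mathbb{Q})$ on the quadratic Gauss sum that is hidden inside $\sqrt{p^*}^e$, and then to invert the resulting exponent using $\gcd(h-1,p-1)=1$. First I would fix $a\in\gf_p^*$ and substitute $x\mapsto ax$ in the Walsh transform. Since $\tr_{p^e/p}$ is $\gf_p$-linear and $f(ax)=a^hf(x)$, this gives
\begin{equation*}
\mathrm{W}_f(\beta)=\sum_{x\in\gf_{p^e}}\zeta_p^{a^hf(x)-a\,\tr_{p^e/p}(\beta x)}=\sum_{x\in\gf_{p^e}}\zeta_p^{a^h(f(x)-\tr_{p^e/p}(a^{1-h}\beta x))}.
\end{equation*}
Writing $s=a^h\in\gf_p^*$, the exponent is $s$ times the exponent of an ordinary Walsh sum, so the right-hand side is exactly $\sigma_s(\mathrm{W}_f(a^{1-h}\beta))$, where $\sigma_s$ is the automorphism of $\mathbb{Q}(\zeta_p)$ determined by $\zeta_p\mapsto\zeta_p^{s}$.

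Next I would apply $\sigma_s$ to the closed form $\mathrm{W}_f(\gamma)=\varepsilon\sqrt{p^*}^e\zeta_p^{f^*(\gamma)}$ of (\ref{eq-Wf}). The sign $\varepsilon$ and $\zeta_p^{f^*(\gamma)}$ transform transparently, and the only delicate factor is the Gauss sum: since $\sqrt{p^*}=\sum_{t\in\gf_p^*}\eta_0(t)\zeta_p^{t}$ (Lemma \ref{quadGuasssum1} with $e=1$), the change of variable $t\mapsto s^{-1}t$ shows $\sigma_s(\sqrt{p^*})=\eta_0(s)\sqrt{p^*}$, hence $\sigma_s(\sqrt{p^*}^e)=\eta_0(s)^e\sqrt{p^*}^e$. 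Because $h$ is even, $s=a^h$ is a square in $\gf_p^*$, so $\eta_0(s)=1$ and the twist disappears. Equating the two expressions for $\mathrm{W}_f(\beta)$ and cancelling $\varepsilon\sqrt{p^*}^e$ yields $f^*(\beta)=a^hf^*(a^{1-h}\beta)$ in $\gf_p$ for all $\beta$; replacing $\beta$ by $a^{h-1}\beta$ produces the clean scaling law
\begin{equation*}
f^*(a^{h-1}\beta)=a^hf^*(\beta),\qquad a\in\gf_p^*,\ \beta\in\gf_{p^e}.
\end{equation*}

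Finally I would promote this to genuine homogeneity. Since $\gcd(h-1,p-1)=1$, the map $a\mapsto a^{h-1}$ permutes $\gf_p^*$, so every $c\in\gf_p^*$ is uniquely $a^{h-1}$, and the law above reads $f^*(c\beta)=c^{\,l}f^*(\beta)$ with $l\equiv h(h-1)^{-1}\pmod{p-1}$. To finish I must verify the two required properties of such an $l$. From $l-1\equiv h(h-1)^{-1}-1\equiv(h-1)^{-1}\pmod{p-1}$ and the invertibility of $h-1$ modulo $p-1$ we get $\gcd(l-1,p-1)=\gcd((h-1)^{-1},p-1)=1$; and reducing this same congruence modulo $2$ (legitimate because $p-1$ is even and $h-1$ is odd) shows $l-1$ is odd, so $l$ is even. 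Taking the positive even representative of this residue class then supplies the desired $l$. I expect the main obstacle to be the bookkeeping of the Gauss-sum twist: one must confirm that the \emph{even} parity of $h$ is precisely what forces $\eta_0(a^h)=1$, since an odd exponent there would leave a sign $\eta_0(a)^e$ and break the scaling identity entirely.
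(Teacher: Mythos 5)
Your proof is correct: the substitution $x\mapsto ax$ turns $\mathrm{W}_f(\beta)$ into $\sigma_{a^h}\bigl(\mathrm{W}_f(a^{1-h}\beta)\bigr)$, the evenness of $h$ kills the twist $\eta_0(a^h)^e$ on $\sqrt{p^*}^e$ via Lemma \ref{lem-cyclo}, and the inversion of $h-1$ modulo $p-1$ together with the parity argument for $l\equiv h(h-1)^{-1}\pmod{p-1}$ is sound. The paper itself gives no proof of this lemma (it is simply quoted from \cite{T}), and your argument is precisely the standard Galois-action derivation used there, so there is nothing to flag.
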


Lemmas \ref{lem-f*1} and \ref{lem-f*2} indicate that  $f^*(x) \in \cRF$ if $f(x) \in \cRF$.

The following lemma gives some results on the cyclotomic field $\mathbb{Q}(\zeta_p)$.
\begin{lemma}\cite{I}\label{lem-cyclo}
Let $K=\mathbb{Q(}\zeta_p)$ denote the $p$-th cyclotomic field over the rational number field $\mathbb{Q}$. Then the followings hold.
\begin{itemize}
  \item The ring of integers in $K$ is $O_K=\mathbb{Z}(\zeta_p)$ and $\{\zeta_p^i: 1 \leq i \leq p-1\}$ is an integral basis of $O_K$, where $\zeta_p$ is the primitive $p$-th root of complex unity.
  \item The field extension $K/\mathbb{Q}$ is Galois of degree $p-1$ and Galois group $\text{Gal}(K/\mathbb{Q})=\{\sigma_a: a \in \gf_p^*\},$
  where the automorphism $\sigma_a$ of $K$ is defined by $\sigma_a(\zeta_p)=\zeta_p^a$.
  \item The field $K$ has a unique quadratic subfield $L=\mathbb{Q}(\sqrt{p^*})$. For $1 \leq a \leq p-1$, $\sigma_a(\sqrt{p^*})=\eta_0(a)\sqrt{p^*}$, where $p^*=(-1)^{\frac{p-1}{2}}p$ and $\eta_0$ is the quadratic multiplicative character of $\gf_p$. Hence, the Galois group $\text{Gal}(L/\mathbb{Q})=\{1, \sigma_\gamma\}$, where $\gamma$ is a nonsquare in $\gf_p^*$.
\end{itemize}
\end{lemma}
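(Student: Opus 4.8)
The plan is to prove the three assertions in turn, using the cyclotomic polynomial, Galois theory, and the quadratic Gauss sum. Write $\Phi_p(x)=\frac{x^p-1}{x-1}=1+x+\cdots+x^{p-1}$ for the $p$-th cyclotomic polynomial, which has $\zeta_p$ as a root. First I would settle the integral basis. The key preliminary fact is that $\Phi_p$ is irreducible over $\mathbb{Q}$; this follows by applying Eisenstein's criterion at the prime $p$ to the shifted polynomial $\Phi_p(x+1)$, whose non-leading coefficients are the binomial coefficients $\binom{p}{j}$ for $1\le j\le p-1$ and whose constant term is $p$. Irreducibility gives $[K:\mathbb{Q}]=\deg\Phi_p=p-1$, so $\{1,\zeta_p,\dots,\zeta_p^{p-2}\}$ is a $\mathbb{Q}$-basis of $K$; using the relation $1+\zeta_p+\cdots+\zeta_p^{p-1}=0$ one checks that $\{\zeta_p^i:1\le i\le p-1\}$ is again a basis. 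To upgrade this to $O_K=\mathbb{Z}(\zeta_p)$ I would compute the discriminant of $\Phi_p$, which equals $\pm p^{p-2}$, so that the index $[O_K:\mathbb{Z}(\zeta_p)]$ is a power of $p$; then I would exploit the total ramification of $p$, namely the factorisation $(p)=(1-\zeta_p)^{p-1}$ coming from $\Phi_p(1)=p$ and $\Phi_p(x)\equiv(x-1)^{p-1}\pmod p$, to show $p\nmid[O_K:\mathbb{Z}(\zeta_p)]$, forcing the index to be $1$. This discriminant-and-ramification step is the technical crux of the whole lemma.

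Next I would identify the Galois group. Since every root of $x^p-1$ is a power of $\zeta_p$, the field $K$ is the splitting field of a separable polynomial over $\mathbb{Q}$, hence $K/\mathbb{Q}$ is Galois of degree $p-1$. Any $\sigma\in\text{Gal}(K/\mathbb{Q})$ must carry $\zeta_p$ to another primitive $p$-th root of unity, so $\sigma(\zeta_p)=\zeta_p^a$ for a unique $a\in\{1,\dots,p-1\}$; conversely, irreducibility of $\Phi_p$ guarantees that each such assignment extends to an automorphism $\sigma_a$. A direct computation shows $\sigma_a\sigma_b=\sigma_{ab}$, so $a\mapsto\sigma_a$ is a group isomorphism $\gf_p^*\to\text{Gal}(K/\mathbb{Q})$, which yields the stated description.

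Finally I would pin down the quadratic subfield. As $\gf_p^*$ is cyclic of even order $p-1$, it has a unique subgroup of index $2$, namely the squares; by the Galois correspondence $K$ therefore has a unique quadratic subfield $L$, the fixed field of $\{\sigma_a:\eta_0(a)=1\}$. To see $L=\mathbb{Q}(\sqrt{p^*})$ I would invoke the quadratic Gauss sum $G(\eta_0,\phi_1)=\sum_{x\in\gf_p^*}\eta_0(x)\zeta_p^x\in K$, which satisfies $G(\eta_0,\phi_1)^2=p^*$ by Lemma~\ref{quadGuasssum1} with $e=1$; hence $\sqrt{p^*}\in K$ and $\mathbb{Q}(\sqrt{p^*})$ is the unique quadratic subfield. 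Applying $\sigma_a$ and reindexing the sum by $x\mapsto a^{-1}x$ gives $\sigma_a\bigl(G(\eta_0,\phi_1)\bigr)=\eta_0(a^{-1})G(\eta_0,\phi_1)=\eta_0(a)G(\eta_0,\phi_1)$, since $\eta_0$ takes values in $\{\pm1\}$; that is, $\sigma_a(\sqrt{p^*})=\eta_0(a)\sqrt{p^*}$. In particular $\sigma_a$ fixes $\sqrt{p^*}$ exactly when $a$ is a square, which recovers $\text{Gal}(L/\mathbb{Q})=\{1,\sigma_\gamma\}$ for a nonsquare $\gamma$ and confirms the description of $L$.
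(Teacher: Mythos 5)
The paper does not prove this lemma at all; it is quoted directly from the reference \cite{I} (Ireland--Rosen) as a standard fact about cyclotomic fields. Your proof is the classical textbook argument for exactly these statements --- Eisenstein on $\Phi_p(x+1)$ for irreducibility, the discriminant $\pm p^{p-2}$ combined with total ramification of $p$ to get $O_K=\mathbb{Z}[\zeta_p]$, the Galois correspondence applied to the unique index-$2$ subgroup of the cyclic group $\gf_p^*$, and the quadratic Gauss sum to realise $\sqrt{p^*}$ inside $K$ --- and it is correct as written. The only remark worth making is that you do not need the full sign determination of Lemma \ref{quadGuasssum1} to conclude $G(\eta_0,\phi_1)^2=p^*$; the elementary identities $G\overline{G}=p$ and $\overline{G}=\eta_0(-1)G$ already give this, which keeps the quadratic-subfield step independent of the harder evaluation of the Gauss sum.
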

According to Lemma \ref{lem-cyclo}, we have
\begin{eqnarray}\label{eq-sigma}
\sigma_a(\zeta_p^b)=\zeta_p^{ab} \mbox{ and } \sigma_a(\sqrt{p^*}^e)=\eta_0^e(a)\sqrt{p^*}^e.
\end{eqnarray}

\section{The first family of linear codes $\overline{\cC_D}$}\label{sec4}
In this section, let $\overline{\cC_D}$ be the augmented code defined in Equation (\ref{eq-CD1}) with the defining set $D=\{x \in \gf_{q^m}: \tr_{q^t/q}(x^N)=0\}$, where
$t$ is the least positive integer such that $\frac{q^m-1}{\text{gcd}(q^m-1, N)} \mid (q^t-1)$.
We will first determine the dual distance and the locality of $\overline{\cC_D}$. Then we study  $\overline{\cC_D}$ for $N=2$ and $N=q^r+1$, respectively.

From now on, denote by $\chi_1$, $\phi_1$ and $\lambda_1$ the canonical additive characters of $\gf_{q^m}$, $\gf_q$ and $\gf_p$, respectively. Let $\eta'$, $\eta$ and $\eta_0$ respectively denote the quadratic multiplicative characters of $\gf_{q^m}$, $\gf_q$ and $\gf_p$.

\subsection{The dual distance and the locality of $\overline{\cC_D}$}
In this subsection, we will determine the dual distance and the locality of $\overline{\cC_D}$.
The conventional definition of linear locally recoverable codes is presented  as follows.

\begin{definition}\label{Def-locality}\cite{HY}
Let $\cC$ be a linear code over $\gf_q$ with a generator matrix $G=[\bg_1, \bg_2, \cdots, \bg_n]$. If $\bg_i$ is a linear combination of $l (\leq r)$ other columns of $G$, then
$r$ is called the locality of the $i$-th symbol of each codeword of $\cC$. Besides, $\cC$ is called a locally recoverable code with locality $r$ if all the symbols of codeword of $\cC$ have locality $r$.
\end{definition}

In the following, we will use Definition \ref{Def-locality} to determine the locality of $\overline{\cC_D}$.

\begin{theorem}\label{loc}
Let $q >2$ be a prime power and $m \geq 2$ be an integer. Then  $\overline{\cC_D}$ is a locally recoverable code with locality $2$.
Moreover, the dual of the augmented code $\overline{\cC_D}$ has minimum distance $3$.
\end{theorem}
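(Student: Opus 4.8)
The plan is to prove the two assertions in tandem, since by Lemma~\ref{lem-locality} the locality of $\overline{\cC_D}$ is governed by the dual distance $d^{\perp}$ together with the fact that $\cC_D^{\perp}$ supports a $1$-design. Concretely, I would show $d(\overline{\cC_D}^{\perp})=3$ and that $(\mathcal{P}(\overline{\cC_D}^{\perp}),\mathcal{B}_3(\overline{\cC_D}^{\perp}))$ is a $1$-design; Lemma~\ref{lem-locality} then yields locality $d^{\perp}-1=2$. Equivalently, one can argue directly from Definition~\ref{Def-locality}: I must exhibit, for each coordinate, two other columns of a generator matrix whose linear span contains it, and also rule out locality $1$.

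First I would fix a generator matrix of $\overline{\cC_D}$. The codewords are $(\tr_{q^m/q}(bx))_{x\in D}+c\mathbf 1$ for $b\in\gf_{q^m}$, $c\in\gf_q$, so the columns of $G$ are indexed by the points $x\in D$, each column being the vector $(\tr_{q^m/q}(\beta^j x))_{j}$ stacked with a $1$ coming from the all-one row $\mathbf 1$. The key structural fact is that $0\in D$ (since $\tr_{q^t/q}(0)=0$), giving a column that is entirely zero except for its all-one entry; this is exactly the column that forces the extra dimension contributed by $c\mathbf 1$. To get locality $2$ I would show that for any three distinct points $x_1,x_2,x_3\in D$ the corresponding three columns are linearly dependent, i.e. some nontrivial $\gf_q$-combination $\sum a_i\big(\tr_{q^m/q}(\beta^j x_i)\big)_j=\mathbf 0$ with $\sum a_i$ also forcing cancellation in the all-one row. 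Since $\tr_{q^m/q}$ is $\gf_q$-linear, the trace part vanishes iff $\sum a_i x_i=0$, and the all-one row vanishes iff $\sum a_i=0$. So the real claim reduces to: one can always find $a_1,a_2,a_3\in\gf_q$, not all zero, with $\sum a_i x_i=0$ and $\sum a_i=0$. This is a $2\times3$ homogeneous $\gf_q$-system over the $(m+1)$-dimensional target, but since $D$ is a union of lines through the origin one can choose the three points cleverly (e.g. using $0\in D$ together with a pair $x,\,\gamma x\in D$) so that such a relation exists; establishing that $D$ is large enough and closed under the relevant scalings to guarantee three such points is where the hypotheses $q>2$ and $m\ge2$ enter.

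Second, for the dual-distance claim $d^{\perp}=3$ I would argue both bounds. The inequality $d^{\perp}\le 3$ follows from the dependency just produced, which gives a weight-$3$ codeword in $\overline{\cC_D}^{\perp}$ (a $3$-term linear relation among columns of $G$). For $d^{\perp}\ge 3$ I must rule out dual words of weight $1$ and $2$: a weight-$1$ dual word would force some single column of $G$ to be zero, impossible because every column has a $1$ in the all-one row; a weight-$2$ dual word would force two distinct columns to be $\gf_q$-proportional, i.e. $a_1 x_1=-a_2 x_2$ in the trace part \emph{and} $a_1+a_2=0$ in the all-one row, which together force $x_1=x_2$, a contradiction. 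This simultaneously shows no coordinate has locality $1$, so the locality is exactly $2$ rather than $1$, and it shows the design is a genuine $1$-design with $\lambda^\perp\ge1$.

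The main obstacle I anticipate is the combinatorial step of guaranteeing, for \emph{every} coordinate $x\in D$, a valid recovering pair inside $D$ — that is, producing three points of $D$ in general linear-plus-affine position with the prescribed coordinate among them, uniformly in $N$. The role of $q>2$ is essential here: over $\gf_2$ the only scalar is $1$, so the condition $\sum a_i=0$ with $a_i\in\{0,1\}$ is far more rigid, which is precisely why the paper treats $q=2$ separately (locality is weakened to the almost-$k$-optimal LRC statement). I would therefore spend most of the effort verifying that $D\setminus\{0\}$ is nonempty and closed under enough multiplicative structure (it is a level set of $x\mapsto\tr_{q^t/q}(x^N)$, invariant under $x\mapsto ux$ for suitable $u$) to supply the needed triples; once that is in place the linear-algebra and character arguments above close out both conclusions.
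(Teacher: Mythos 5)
Your proposal is correct and follows essentially the same route as the paper: fix the generator matrix with the all-one row and the zero column coming from $0\in D$, use the $\gf_q^*$-invariance of $D$ to write the column of $d_i$ as $u\,\bg_{u^{-1}d_i}+(1-u)\,\bg_0$ for some $u\in\gf_q\setminus\{0,1\}$ (this is exactly where $q>2$ enters), and get $d^{\perp}=3$ by showing any two columns are independent via nondegeneracy of the trace form. The only cosmetic issue is your opening claim that \emph{any} three columns are dependent, which is false as stated but which you immediately correct by restricting to the cleverly chosen triples $\{0,x,\gamma x\}$.
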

\begin{proof}
Let $\gf_{q^m}^*=\langle\alpha\rangle$. Then $\{\alpha^0, \alpha^1, \cdots, \alpha^{m-1}\}$ is a $\gf_q$-basis of $\gf_{q^m}$. Let $d_1, d_2, \cdots, d_{n-1},d_n$ be all the elements in $D$. For convenience, let  $d_n = 0$ due to $0\in \cC_D$.
By definition, the generator matrix $G$ of $\overline{\cC_D}$ is given by
\begin{eqnarray*}
G:=\left[
\begin{array}{cccc}
1&1&\cdots&1 \\
 \tr_{q^m/q}(\alpha^{0}d_{1})& \tr_{q^m/q}(\alpha^{0}d_{2})& \cdots &\tr_{q^m/q}(\alpha^{0}d_{n}) \\
\tr_{q^m/q}(\alpha^{1}d_{1})& \tr_{q^m/q}(\alpha^{1}d_{2})& \cdots &\tr_{q^m/q}(\alpha^{1}d_{n}) \\
\vdots &\vdots &\ddots &\vdots \\
\tr_{q^m/q}(\alpha^{m-1}d_{1})& \tr_{q^m/q}(\alpha^{m-1}d_{2})& \cdots &\tr_{q^m/q}(\alpha^{m-1}d_{n})
\end{array}\right].
\end{eqnarray*}

In the following, we first determine the locality of $\overline{\cC_D}$.
For convenience, we assume that $\bg_i$ denotes the $i$-th column of $G$, i.e., $\bg_i=(1, \tr_{q^m/q}(\alpha^0d_i), \tr_{q^m/q}(\alpha^1 d_i), \cdots, \tr_{q^m/q}(\alpha^{m-1}d_i))^T$, where $1 \leq i \leq n$.
 Note that $kd_i\in D$ for any $k\in \gf_q^*$ if $d_i\in D$. For fixed $\bg_i$ $(1 \leq i \leq n-1)$ and any $u\in \gf_q\backslash\{0,1\}$, then there exists $v=1-u$ such that $d_j := u^{-1}d_i\in D$ and
 \begin{eqnarray*}
\left\{\begin{array}{ccc}
1&=&u+v,\\
  \tr_{q^m/q}(\alpha^{0}d_{i})&=&u\tr_{q^m/q}(\alpha^{0}d_j)+v\tr_{q^m/q}(\alpha^{0}d_n), \\
  \tr_{q^m/q}(\alpha^{1}d_{i})&=&u\tr_{q^m/q}(\alpha^{1}d_j)+v\tr_{q^m/q}(\alpha^{1}d_n), \\
 & \vdots &\\
  \tr_{q^m/q}(\alpha^{m-1}d_{i})&=&u\tr_{q^m/q}(\alpha^{m-1}d_j)+v\tr_{q^m/q}(\alpha^{m-1}d_n),
\end{array}\right.
\end{eqnarray*}
where $d_n=0$. This means that $\bg_i$ is a linear combination of $\bg_j$ and $\bg_n$, where $1 \leq i \leq n-1$.
Since $v\neq 0,1$, it is clear that $\bg_n$ is also a linear combination of $\bg_i$ and $\bg_j$.
Then $\overline{\cC_D}$ is a locally recoverable code with locality $2$ according to Definition \ref{Def-locality}.

Then we determine the dual distance of $\overline{\cC_D}$.
Let $d^{\perp}$ denote the minimum distance of $\overline{\cC_D}^{\perp}$. Assume that $\bg_i$ and $\bg_j$ are any two columns of $G$. Then $\bg_i$ and $\bg_j$ are $\gf_q$-linearly dependent if and only if
\begin{eqnarray}\label{system2}
\left\{\begin{array}{c}
  \tr_{q^m/q}(\alpha^{0}d_{i})=\tr_{q^m/q}(\alpha^{0}d_{j}), \\
  \tr_{q^m/q}(\alpha^{1}d_{i})=\tr_{q^m/q}(\alpha^{1}d_{j}), \\
  \vdots \\
  \tr_{q^m/q}(\alpha^{m-1}d_{i})=\tr_{q^m/q}(\alpha^{m-1}d_{j}),
\end{array}\right.
\end{eqnarray}
where $d_i, d_j \in D$ and $d_i \neq d_j$.
For any $x=\sum_{l=0}^{m-1}k_l\alpha^l \in \gf_{q^m}, k_l \in \gf_q$, the System (\ref{system2}) implies
$$
\tr_{q^m/q}(x(d_i-d_j))=\sum_{l=0}^{m-1}k_l \tr_{q^m/q}(\alpha^l(d_i-d_j))=0.
$$
This contradicts with the fact that $|\ker(\tr_{q^m/q})|=q^{m-1}$.
Hence, $\bg_i$ and $\bg_j$ are $\gf_q$-linearly independent and $d^{\perp} \geq 3$.
Besides, there exist three columns which are $\gf_q$-linearly dependent by the discussions above.
Then $d^{\perp}=3$.
\end{proof}

\begin{remark}
In Theorem \ref{loc}, if $q=2$, it is easy to verify that the locality of $\cC_D$ is at least $3$.
However, it is difficult to explicitly determine the locality of $\cC_D$ and the minimal distance of $\cC_D^\perp$ for $q=2$.
\end{remark}

\subsection{The case that $N=q^r+1$}
Let $q$ be a power of a prime $p$. Let $s, m$ be two positive integers with $s \mid m$ and $\Norm_{q^m/q^s}$ the norm function from $\gf_{q^m}$ to $\gf_{q^s}$ defined by
$\Norm_{q^m/q^s}(x)=x\cdot x^{q^s}\cdot x^{q^{2s}} \cdots x^{q^{(m/s-1)s}}=x^{\frac{q^m-1}{q^s-1}}, x \in \gf_{q^m}.$
In this subsection, let $q$ be a prime power, $m, r$ be two positive integers such that $m=2r$ and $r \geq 2$. Then $\Norm_{q^m/q^r}(x)=x^{q^r+1}$.
Let $N=q^r+1$. Then the defining set is given by
\begin{eqnarray}\label{eq-Dr}
  D=\{x \in \gf_{q^m}: \tr_{q^r/q}(x^{q^r+1})=0\}.
\end{eqnarray}
Firstly, we will determine the parameters and weight distribution of $\overline{\cC_D}$ and prove that $\overline{\cC_D}$ is a family of self-orthogonal codes. Then we will prove that  $\overline{\cC_D}$ is optimally extendable or almost optimally extendable.

\begin{lemma}\label{lem-length}
The length $n$ of $\overline{\cC_{D}}$ is given by $n=q^{r-1}(q^r-q+1)$.
\end{lemma}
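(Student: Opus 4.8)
The plan is to count the size of the defining set
$$
D=\{x \in \gf_{q^m}: \tr_{q^r/q}(x^{q^r+1})=0\}
$$
directly, using additive characters and the exponential sum evaluated in Lemma~\ref{lem-2k}. First I would express the indicator of the condition $\tr_{q^r/q}(x^{q^r+1})=0$ via the orthogonality relation of additive characters. Writing $\phi_1$ for the canonical additive character of $\gf_q$ (and noting $\tr_{q^r/q}$ maps into $\gf_q$), we have
$$
n = |D| = \frac{1}{q}\sum_{x \in \gf_{q^m}}\sum_{y \in \gf_q}\phi_1\bigl(y\,\tr_{q^r/q}(x^{q^r+1})\bigr)
= q^{m-1} + \frac{1}{q}\sum_{y \in \gf_q^*}\sum_{x \in \gf_{q^m}}\phi_1\bigl(y\,\tr_{q^r/q}(x^{q^r+1})\bigr),
$$
where the $y=0$ term contributes $q^{m-1}=q^{2r-1}$.

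Next I would reduce the inner sum over $x$ to the shape handled by Lemma~\ref{lem-2k}. The trick is the transitivity of the trace: for $y \in \gf_q^*$ one has $\phi_1(y\,\tr_{q^r/q}(x^{q^r+1})) = \zeta_p^{\tr_{q/p}(y\,\tr_{q^r/q}(x^{q^r+1}))} = \zeta_p^{\tr_{q^r/p}(y\,x^{q^r+1})}$, so that
$$
\sum_{x \in \gf_{q^m}}\phi_1\bigl(y\,\tr_{q^r/q}(x^{q^r+1})\bigr)
= \sum_{x \in \gf_{q^m}}\zeta_p^{\tr_{q^r/p}(y\,x^{q^r+1})}.
$$
This is exactly the sum in Lemma~\ref{lem-2k} with $m=2r$, $a=y \in \gf_{q^r}^* \subseteq \gf_q^*$ (here I use $q \mid q^r$ so $\gf_q \subseteq \gf_{q^r}$, giving $y \in \gf_{q^r}^*$) and $b=0$; the lemma evaluates it to $-q^r$ (the exponent $\tr_{q^r/p}(b^{q^r+1}/a)$ vanishes when $b=0$).

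Putting the pieces together, each of the $q-1$ values of $y \in \gf_q^*$ contributes $-q^r$, so
$$
n = q^{2r-1} + \frac{1}{q}(q-1)(-q^r) = q^{2r-1} - q^{r-1}(q-1) = q^{r-1}\bigl(q^r - q + 1\bigr),
$$
which is the claimed length. I expect the only genuine subtlety to be the reduction step via trace transitivity together with the verification that $y$ lies in $\gf_{q^r}^*$ so that Lemma~\ref{lem-2k} applies with a valid $a$; everything else is routine character orthogonality and arithmetic. One small point to double-check is the hypothesis $r \ge 2$ in Lemma~\ref{lem-2k} versus the statement, but since the evaluation $-q^r$ holds for the stated range, the final formula follows without further obstruction.
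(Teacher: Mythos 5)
Your proof is correct, but it takes a different route from the paper's. The paper's argument is a one-line direct count: since $\Norm_{q^m/q^r}(x)=x^{q^r+1}$ is a surjective homomorphism from $\gf_{q^m}^*$ onto $\gf_{q^r}^*$ with fibers of size $q^r+1$, and $\tr_{q^r/q}$ is a surjective $\gf_q$-linear map whose kernel meets $\gf_{q^r}^*$ in $q^{r-1}-1$ elements, one gets $|D|=1+(q^r+1)(q^{r-1}-1)=q^{r-1}(q^r-q+1)$, the extra $1$ accounting for $x=0$. You instead run the standard character-sum machinery, reducing via orthogonality and trace transitivity to the Weil-type sum of Lemma~\ref{lem-2k} with $b=0$; all your steps check out (the reduction $\tr_{q/p}(y\,\tr_{q^r/q}(z))=\tr_{q^r/p}(yz)$ for $y\in\gf_q$ is valid, $y\in\gf_q^*\subseteq\gf_{q^r}^*$ is a legitimate choice of $a$, and the arithmetic is right). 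What your approach buys is uniformity with the rest of the section --- the identical computation reappears as the term $S_2$ in the proof of Lemma~\ref{lem-Ns} --- at the cost of invoking a nontrivial exponential-sum evaluation where elementary fiber counting suffices. Your worry about an $r\ge 2$ hypothesis is moot: Lemma~\ref{lem-2k} as stated requires only that $m=2r$ be even.
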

\begin{proof}
Since the trace function $\tr_{q^r/q}$  is  a homomorphism from $\gf_{q^r}$ to $\gf_q$ and the norm function $\Norm_{q^m/q^r}$ is also a homomorphism from $\gf_{q^m}^*$ to $\gf_{q^r}^*$, the conclusion follows.
\end{proof}

\begin{lemma}\label{lem-Ns}
Let $s \in \gf_q^*$ and $b \in \gf_{q^m}^*$, where  $m=2r$ and $q$ is a prime power. Let $N_s$ denote the number of solutions in $\gf_{q^m}$ of the following system of equations as
\begin{eqnarray*}\label{eq-01}
\left\{
\begin{array}{l}
\tr_{q^m/q}\left(bx\right)=s, \\
\tr_{q^r/q}(x^{q^r+1})=0.
\end{array}
\right.
\end{eqnarray*}
Then
\begin{eqnarray*}
N_s=
\begin{cases}
q^{m-2}     &\text{if $\tr_{q^r/q}\left(b^{q^r+1}\right)=0$,}\\
q^{m-2}-q^{r-1}    &\text{if $\tr_{q^r/q}\left(b^{q^r+1}\right)\neq0$.}
\end{cases}
\end{eqnarray*}

\end{lemma}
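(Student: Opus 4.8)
We want to count the number $N_s$ of $x\in\gf_{q^m}$ (with $m=2r$) satisfying simultaneously $\tr_{q^m/q}(bx)=s$ and $\tr_{q^r/q}(x^{q^r+1})=0$, for a fixed $s\in\gf_q^*$ and $b\in\gf_{q^m}^*$. The natural approach is to express both constraints as character sums over $\gf_q$ and then reduce everything to the exponential sum evaluated in Lemma~\ref{lem-2k}. Writing $\phi_1$ for the canonical additive character of $\gf_q$, the standard orthogonality device gives
\begin{eqnarray*}
N_s=\frac{1}{q^2}\sum_{x\in\gf_{q^m}}\left(\sum_{y\in\gf_q}\phi_1\!\left(y(\tr_{q^m/q}(bx)-s)\right)\right)\left(\sum_{z\in\gf_q}\phi_1\!\left(z\,\tr_{q^r/q}(x^{q^r+1})\right)\right).
\end{eqnarray*}
Expanding the product and swapping the order of summation, I would organize the computation according to whether $y$ and $z$ vanish, which splits $N_s$ into four pieces: the $(y=0,z=0)$ term contributes $q^{m-2}$; the $(y\neq0,z=0)$ and $(y=0,z\neq0)$ terms each vanish after summing over $x$ (since $\tr_{q^m/q}(b\cdot)$ and $\tr_{q^r/q}((\cdot)^{q^r+1})$ are nonconstant, so their additive character sums over $\gf_{q^m}$ are zero); and the real work is the $(y\neq0,z\neq0)$ term.

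For the main term I would write, for $y,z\in\gf_q^*$,
\begin{eqnarray*}
S(y,z):=\sum_{x\in\gf_{q^m}}\phi_1\!\left(z\,\tr_{q^r/q}(x^{q^r+1})+y\,\tr_{q^m/q}(bx)\right),
\end{eqnarray*}
and then use transitivity of the trace, namely $\phi_1\circ\tr_{q^r/q}=\chi'$ and $\phi_1\circ\tr_{q^m/q}=\chi$ where $\chi',\chi$ are the canonical additive characters of $\gf_{q^r}$ and $\gf_{q^m}$, to recast $S(y,z)$ into exactly the shape of the sum in Lemma~\ref{lem-2k}. Concretely, setting $a:=z\in\gf_q^*\subseteq\gf_{q^r}^*$ (after absorbing $\zeta_p$-exponents through the chain $\gf_q\hookrightarrow\gf_{q^r}$) and replacing $b$ by $yb$, Lemma~\ref{lem-2k} yields
\begin{eqnarray*}
S(y,z)=-q^r\,\zeta_p^{-\tr_{q^r/p}\!\left(\frac{(yb)^{q^r+1}}{z}\right)}=-q^r\,\lambda_1\!\left(-\tfrac{y^2}{z}\,\tr_{q^r/q}(b^{q^r+1})\right),
\end{eqnarray*}
after simplifying $(yb)^{q^r+1}=y^{q^r+1}b^{q^r+1}=y^2 b^{q^r+1}$ (since $y\in\gf_q$ forces $y^{q^r}=y$) and collapsing the inner trace to $\gf_q$. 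The last piece of the count is then
\begin{eqnarray*}
\frac{1}{q^2}\sum_{y\in\gf_q^*}\sum_{z\in\gf_q^*}\phi_1(-ys)\,S(y,z),
\end{eqnarray*}
and its value hinges entirely on the quantity $w:=\tr_{q^r/q}(b^{q^r+1})\in\gf_q$.

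I expect the case split on $w$ to be where the two stated answers separate, and this is the step I would treat most carefully. When $w=0$, the factor $\lambda_1(-\tfrac{y^2}{z}w)=1$ identically, so the double sum factors as $-q^{r-2}\bigl(\sum_{y\in\gf_q^*}\phi_1(-ys)\bigr)\bigl(\sum_{z\in\gf_q^*}1\bigr)=-q^{r-2}\cdot(-1)\cdot(q-1)$, combining with the leading $q^{m-2}$ to give $q^{m-2}-q^{r-1}\cdot\frac{-(q-1)}{q}$; I would verify that the bookkeeping resolves cleanly to $q^{m-2}$. When $w\neq0$, the inner character $\phi_1(-\tfrac{y^2}{z}w)$ genuinely depends on $y,z$, and after the substitution the nested sum over $z\in\gf_q^*$ becomes an incomplete additive-character sum that contributes $-1$ rather than $q-1$; carrying this through should produce the correction term $-q^{r-1}$, giving $N_s=q^{m-2}-q^{r-1}$. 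The only real obstacle is organizing the double sum over $y,z\in\gf_q^*$ so that the orthogonality cancellations are transparent and the exponent $-y^2/z$ is handled correctly as $y,z$ range over $\gf_q^*$; once the dependence is isolated as a function of $w$, both claimed values follow by a routine evaluation of one-dimensional character sums over $\gf_q$.
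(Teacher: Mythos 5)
Your overall strategy is the same as the paper's (orthogonality over two auxiliary variables in $\gf_q$, then Lemma~\ref{lem-2k} for the genuinely mixed term), but there is a concrete error in the decomposition: the $(y=0,\,z\neq 0)$ piece does \emph{not} vanish. Your justification --- that $x\mapsto\tr_{q^r/q}(x^{q^r+1})$ is nonconstant, hence its additive character sum over $\gf_{q^m}$ is zero --- is a fallacy: nonconstancy kills a character sum only when the composite map is itself a nontrivial additive character, i.e.\ an $\gf_q$-linear surjection. Here the map is the trace of a norm, a quadratic form, and the fibers of $x\mapsto x^{q^r+1}$ over $\gf_{q^r}$ have sizes $1$ (over $0$) and $q^r+1$ (over each nonzero element), so for $z\in\gf_q^*$,
\begin{eqnarray*}
\sum_{x\in\gf_{q^m}}\zeta_p^{\tr_{q^r/p}(zx^{q^r+1})}=1+(q^r+1)\sum_{u\in\gf_{q^r}^*}\zeta_p^{\tr_{q^r/p}(zu)}=-q^r,
\end{eqnarray*}
which is exactly Lemma~\ref{lem-2k} with $b=0$. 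Summing over $z\in\gf_q^*$ and dividing by $q^2$, this piece contributes $-q^{r-2}(q-1)$, matching the paper's $S_2=-q^r(q-1)$.

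This is not a cosmetic omission: with your accounting the totals come out wrong. In the case $w:=\tr_{q^r/q}(b^{q^r+1})=0$ your main term correctly evaluates to $+q^{r-2}(q-1)$, so dropping the $-q^{r-2}(q-1)$ leaves $N_s=q^{m-2}+q^{r-2}(q-1)$ rather than $q^{m-2}$; the cancellation you hope ``resolves cleanly'' is precisely between these two pieces. Likewise in the case $w\neq 0$ your main term gives $-q^{r-2}$, and without the missing piece you would land on $q^{m-2}-q^{r-2}$ instead of $q^{m-2}-q^{r-1}$. The rest of your outline (transitivity of trace to invoke Lemma~\ref{lem-2k}, the identity $(yb)^{q^r+1}=y^2b^{q^r+1}$ for $y\in\gf_q$, and the evaluation of the inner sum over $z$ as $-1$ when $w\neq 0$) is sound and agrees with the paper's computation of $S_3$.
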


\begin{proof}
 By the orthogonal relation of additive characters, we have
\begin{eqnarray}\label{Ns}
N_s &=&| \{x\in\gf_{q^m}: \tr_{q^m/q}\left(bx\right)=s, \tr_{q^r/q}(x^{q^r+1})=0\}|\nonumber\\
&=& \frac{1}{q^2}\sum_{u\in\gf_q}\sum_{v\in\gf_q}\sum_{x\in\gf_{q^m}}\zeta_p^{\tr_{q/p}\left(\tr_{q^m/q}\left(ubx\right)-us\right)}
\zeta_p^{\tr_{q/p}\left(\tr_{q^r/q}\left(vx^{q^r+1}\right)\right)}\nonumber\\
&=&\frac{1}{q^2}\left(\sum_{u\in\gf_q^*}\sum_{x\in\gf_{q^m}}\chi_1\left(ubx\right)\phi_1\left(-us\right) +
\sum_{v\in\gf_q^*}\sum_{x\in\gf_{q^m}}\zeta_p^{\tr_{q^r/p}\left(vx^{q^r+1}\right)} + \right.\nonumber\\
&&\left.\sum_{v\in\gf_q^*}\sum_{u\in\gf_q^*}\phi_1(-us)\sum_{x\in\gf_{q^m}}\zeta_p^{\tr_{q^m/p}(ubx)+\tr_{q^r/p}\left(vx^{q^r+1}\right)}+ q^m\right).
\end{eqnarray}
By the orthogonal relation of additive characters, we have
\begin{eqnarray}\label{Ns1}
S_1:=\sum_{u\in\gf_q^*}\sum_{x\in\gf_{q^m}}\chi_1\left(ubx\right)\phi_1\left(-us\right)
=\sum_{u\in\gf_q^*}\phi_1\left(-us\right)\sum_{x\in\gf_{q^m}}\chi_1\left(ubx\right)=0.
\end{eqnarray}
By Lemma \ref{lem-2k},
\begin{eqnarray}\label{Ns2}
S_2:=\sum_{v\in\gf_q^*}\sum_{x\in\gf_{q^m}}\zeta_p^{\tr_{q^r/p}\left(vx^{q^r+1}\right)}
=\sum_{v\in\gf_q^*}\left(-q^r\right)
=-q^r\left(q-1\right),
\end{eqnarray}
and
\begin{eqnarray}\label{Ns3}
S_3&:=&\sum_{v\in\gf_q^*}\sum_{u\in\gf_q^*}\phi_1(-us)\sum_{x\in\gf_{q^m}}\zeta_p^{\tr_{q^m/p}(ubx)+\tr_{q^r/p}\left(vx^{q^r+1}\right)}\nonumber\\
&=&\sum_{v\in\gf_q^*}\sum_{u\in\gf_q^*}\phi_1\left(-us\right)(-q^r)\zeta_p^{-\tr_{q^r/p}\left(\frac{u^2b^{q^r+1}}{v}\right)}\nonumber\\
&=&-q^r\sum_{v\in\gf_q^*}\sum_{u\in\gf_q^*}\phi_1\left(-us\right)\phi_1\left(\frac{-\tr_{q^r/q}\left(b^{q^r+1}\right)}{v}u^2\right)\nonumber\\
&=&-q^r\sum_{v\in\gf_q^*}\sum_{u\in\gf_q^*}\phi_1\left(\frac{-\tr_{q^r/q}\left(b^{q^r+1}\right)}{v}u^2-su\right)\nonumber\\
&=&\begin{cases}
q^r(q-1) &\text{if $\tr_{q^r/q}\left(b^{q^r+1}\right)=0$,}\\
-q^r\sum\limits_{u\in\gf_q^*}\phi_1(-su)\sum\limits_{v\in\gf_q^*}\phi_1\left(\frac{-\tr_{q^k/q}\left(b^{q^k+1}\right)u^2}{v}\right)     &\text{if $\tr_{q^r/q}\left(b^{q^r+1}\right)\neq0$,}
\end{cases}\nonumber\\
&=&\begin{cases}
q^r(q-1) &\text{if $\tr_{q^r/q}\left(b^{q^r+1}\right)=0$,}\\
-q^r     &\text{if $\tr_{q^r/q}\left(b^{q^r+1}\right)\neq0$.}
\end{cases}
\end{eqnarray}
Substituting Equations (\ref{Ns1}), (\ref{Ns2}) and (\ref{Ns3}) into Equation (\ref{Ns}) yields the desired conclusion.
\end{proof}

\begin{theorem}\label{tem-wtCD1}
Let $q = p^e$, where $p$ is a prime and $e$ is a positive integer. Let $m, r$ be positive integers such that $m=2r$ with $r \geq 2$. Then the augmented code $\overline{\cC_{D}}$ with the defining set $D=\{x \in \gf_{q^m} : \tr_{q^r/q}(\Norm_{q^m/q^r}(x))=0\}$ has parameters $[q^{r-1}(q^r-q+1), m+1, q^{r-1}(q^{r}-q^{r-1}-q+1)]$ and its weight distribution is listed in Table \ref{tab2}. Besides, if $q > 2$, $\overline{\cC_{D}}^{\perp}$ has parameters $[q^{r-1}(q^r-q+1), q^{r-1}(q^r-q+1)-m-1, 3]$ and is at least almost optimal according to the sphere-packing bound. If $q=2$ and $r > 2$, $\overline{\cC_{D}}^{\perp}$ has parameters $[q^{r-1}(q^r-q+1), q^{r-1}(q^r-q+1)-m-1, 4]$ and is optimal according to the sphere-packing bound. 
\begin{table}[h!]
\begin{center}
\caption{The weight distribution of $\overline{\cC_{D}}$ in Theorem \ref{tem-wtCD1}.}\label{tab2}
\begin{tabular}{@{}ll@{}}
\toprule
Weight & Frequency  \\
\midrule
$0$ & $1$ \\
$q^{r-1}(q^{r}-q^{r-1}-q+1)$ &  $q^{r-1}(2q^{r+1}-2q^r-q^2+3q-2)-q+1$ \\
$q^{r-1}(q^r-q^{r-1}-q+2)$ &  $q^{r-1}(q-1)(q^{r+1}-q^r+q-1)$ \\
$q^{2r-2}(q-1)$ &  $q^{r-1}(q^{r}-q+1)-1$ \\
$q^{r-1}(q^r-q+1)$ &  $q-1$ \\
\bottomrule
\end{tabular}
\end{center}
\end{table}
\end{theorem}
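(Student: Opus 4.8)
The plan is to compute the Hamming weight of every codeword $\bc_{(b,c)}=(\tr_{q^m/q}(bx)+c)_{x\in D}$ as an explicit function of $(b,c)$, then read off the dimension, the minimum distance and the full weight distribution, and finally treat the dual separately for $q>2$ and for $q=2$. A coordinate of $\bc_{(b,c)}$ vanishes exactly when $\tr_{q^m/q}(bx)=-c$, so its weight is $n-Z(b,-c)$, where $Z(b,s):=|\{x\in D:\tr_{q^m/q}(bx)=s\}|$. I would split into three cases. If $b=0$ the codeword is $c\bone$, of weight $0$ when $c=0$ and of weight $n=q^{r-1}(q^r-q+1)$ (Lemma \ref{lem-length}) otherwise. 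If $b\neq 0$ and $c\neq 0$, then $-c\in\gf_q^*$ and $Z(b,-c)=N_{-c}$ is supplied directly by Lemma \ref{lem-Ns}, taking one of two values according to whether $\tr_{q^r/q}(b^{q^r+1})$ is zero. If $b\neq 0$ and $c=0$, I would recover $Z(b,0)$ from the identity $\sum_{s\in\gf_q}N_s=|D|=n$ together with the fact (Lemma \ref{lem-Ns}) that $N_s$ is constant over $s\in\gf_q^*$, giving $Z(b,0)=n-(q-1)N_1$. Substituting $n=q^{r-1}(q^r-q+1)$ and $q^{m-2}=q^{2r-2}$ then yields precisely the four nonzero weights of Table \ref{tab2}.

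Next I would pin down the frequencies by counting $b\in\gf_{q^m}^*$ according to the value of $\tr_{q^r/q}(b^{q^r+1})$. Since $b\mapsto b^{q^r+1}=\Norm_{q^m/q^r}(b)$ is $(q^r+1)$-to-one from $\gf_{q^m}^*$ onto $\gf_{q^r}^*$, and $|\{y\in\gf_{q^r}^*:\tr_{q^r/q}(y)=0\}|=q^{r-1}-1$, there are $n_0=(q^r+1)(q^{r-1}-1)$ elements $b$ with $\tr_{q^r/q}(b^{q^r+1})=0$ and $n_1=q^{r-1}(q-1)(q^r+1)$ with it nonzero. Pairing each weight value with the admissible choices of $(b,c)$ (letting $c$ range over $\gf_q^*$ in the relevant subcases) produces the frequencies in Table \ref{tab2}; as a consistency check they sum to $q^{m+1}-1$. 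Because all four nonzero weights are strictly positive for $q\geq 2$ and $r\geq 2$, the equation $\bc_{(b,c)}=\bzero$ forces $(b,c)=(0,0)$, so the assignment $(b,c)\mapsto\bc_{(b,c)}$ is injective; equivalently $\bone\notin\cC_D$ since $0\in D$, and hence $\dim\overline{\cC_D}=m+1$. Comparing the four weights shows $q^{r-1}(q^r-q^{r-1}-q+1)$ is the smallest, which gives the claimed minimum distance.

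For the dual, whose dimension is automatically $n-m-1$, the distance is $3$ for $q>2$ directly from Theorem \ref{loc}. The case $q=2$ I would handle through minimal column dependencies of the generator matrix $G$: no column is $\bzero$ (its first entry is $1$) and no two columns coincide (the argument of Theorem \ref{loc}), so $d^\perp\geq 3$; moreover, over $\gf_2$ any three columns summing to $\bzero$ would force $1+1+1=0$ in the first coordinate, which is impossible, so in fact $d^\perp\geq 4$. To obtain $d^\perp\leq 4$ I would exhibit four distinct elements $x_1,x_2,x_3,x_4\in D$ with $x_1+x_2+x_3+x_4=0$: by pigeonhole, whenever $\binom{n}{2}>q^m-1$ two distinct (hence necessarily disjoint) pairs of elements of $D$ have the same sum, and the resulting four columns are then $\gf_2$-dependent. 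This strict inequality holds exactly when $r>2$ (it degenerates to equality at $r=2$, $q=2$), which is the reason for that hypothesis.

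Finally, the optimality claims would follow from the sphere-packing bound (Lemma \ref{sphere}) applied to a hypothetical $[n,\,n-m-1,\,5]$ code: with $t=2$ one checks that $V(2)=1+n(q-1)+\binom{n}{2}(q-1)^2$ exceeds $q^{m+1}$ for all $q\geq 2$, $r\geq 2$ (the term $\binom{n}{2}(q-1)^2\sim q^{4r-2}$ dominates $q^{2r+1}$), so no such code exists. Consequently the dual is at least almost optimal when $q>2$ and optimal when $q=2$, $r>2$. I expect the main obstacle to be the $q=2$ dual-distance argument, because it simultaneously needs the characteristic-$2$ obstruction ruling out three-term dependencies and the pigeonhole construction of a four-term dependency, and it is precisely there that the restriction $r>2$ becomes essential; the remaining steps are routine once Lemmas \ref{lem-length} and \ref{lem-Ns} are in hand.
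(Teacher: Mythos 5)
Your proposal is correct, and the weight computation for $b\neq 0$, $c\neq 0$ and the frequency count match the paper's proof step for step (the paper likewise splits on $\tr_{q^r/q}(b^{q^r+1})$ and counts $N_b=(q^{r-1}-1)(q^r+1)$ via surjectivity of the norm and trace). You diverge in two places, both legitimately. First, for the $c=0$ weights the paper imports \cite[Theorem 1]{HZ4}, whereas you recover $Z(b,0)$ from $\sum_{s\in\gf_q}N_s=|D|$ and the $s$-independence of $N_s$ in Lemma \ref{lem-Ns}; this is more self-contained and gives the same two values. Second, and more substantially, for the dual distance the paper computes $A_3^{\perp}$ explicitly from the Pless power moments, observes it vanishes exactly when $q=2$, and then invokes the fifth power moment to show $A_4^{\perp}>0$ iff $r>2$; you instead quote Theorem \ref{loc} for $q>2$ (which the paper itself does in the analogous Theorem \ref{tem-wtCDN2}, so this is consistent with its own practice) and, for $q=2$, argue directly on column dependencies: the all-ones first row kills any three-term $\gf_2$-dependency, and the pigeonhole on pair-sums in $D$ (two distinct pairs with equal sum are automatically disjoint in characteristic $2$) produces a weight-$4$ dual codeword precisely when $\binom{n}{2}>2^{2r}-1$, i.e.\ $r>2$. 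Your route is more elementary and explains structurally why $r=2$ must be excluded, where the power-moment route instead detects it numerically through $A_4^{\perp}=0$; the paper's route has the advantage of yielding the exact value of $A_4^{\perp}$. Your explicit sphere-packing verification at $t=2$ correctly supports both optimality claims, which the paper leaves to the reader.
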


\begin{proof}
The length of $\overline{\cC_{D}}=\{\bc_{(b,c)}=(\tr_{q^m/q}(bx))_{x \in D} + c\mathbf{1}:b \in \gf_{q^m}, c \in \gf_q\}$ is $n=q^{r-1}(q^r-q+1)$ by Lemma \ref{lem-length}.

For $b=0$, it is obvious that the codeword $\bc_{(b,c)}$ has Hamming weight
\begin{eqnarray*}
\text{wt}(\bc_{(b,c)})=
\begin{cases}
0,&\text{if $c=0$,}\\
q^{m-1}-q^{r-1}(q-1),     &\text{if $c \neq 0$.}
\end{cases}
\end{eqnarray*}

For $b \in \gf_q^*$, we denote by $N_{(b,c)}=|\{x \in D:\tr_{q^m/q}(bx)+c=0\}|=|\{x \in \gf_{q^m}:\tr_{q^r/q}\left(x^{q^r+1}\right)=0 \mbox{ and } \tr_{q^m/q}(bx)+c=0\}|.$ Then the Hamming weight of a codeword $\bc_{(b,c)}$ is $\text{wt}(\bc_{(b,c)})=n-N_{(b,c)}$ for $b \in \gf_q^*, c \in \gf_q$. For $c \in \gf_q^*$, we can find that the value of $N_{(b,c)}$ is equal to the value of $N_s$ in Lemma \ref{lem-Ns} for $s=-c$. Then by Lemma \ref{lem-Ns}, we have
\begin{eqnarray*}
\text{wt}(\bc_{(b,c)})=n-N_{(b,c)}
=\begin{cases}
q^{m-1}-q^{m-2}-q^{r-1}(q-1)     &\text{if $\tr_{q^r/q}\left(b^{q^r+1}\right)=0$,}\\
q^{m-1}-q^{m-2}-q^r+2q^{r-1}     &\text{if $\tr_{q^r/q}\left(b^{q^r+1}\right)\neq0$.}
\end{cases}
\end{eqnarray*}
For $c=0$, by \cite[Theorem 1]{HZ4}, we have
\begin{eqnarray*}
\text{wt}(\bc_{(b,c)})&=&\begin{cases}
q^{m-1}-q^{m-2} &\text{if $\tr_{q^r/q}\left(b^{q^r+1}\right)=0$,}\\
q^{m-1}-q^{m-2}-q^{r-1}(q-1)     &\text{if $\tr_{q^r/q}\left(b^{q^r+1}\right)\neq0$.}
\end{cases}
\end{eqnarray*}

Summarizing the above discussions, we derive that
\begin{eqnarray*}
\text{wt}(\bc_{(b,c)})&=&\begin{cases}
0 & \mbox{if $b=0$ and $c=0$,}\\
q^{m-1}-q^{r-1}(q-1) & \mbox{if $b=0$ and $c \neq 0$,}\\
q^{m-1}-q^{m-2} & \mbox{if $b \in \gf_{q^m}^*, \tr_{q^r/q}\left(b^{q^r+1}\right)=0$ and $c=0$,}\\
q^{m-1}-q^{m-2}-q^{r-1}(q-1)   & \mbox{if $b \in \gf_{q^m}^*, \tr_{q^r/q}\left(b^{q^r+1}\right)=0$ and $c\neq0$,}\\
 & \mbox{or $\tr_{q^r/q}\left(b^{q^r+1}\right)\neq0$ and $c=0$,}\\
q^{m-1}-q^{m-2}-q^r+2q^{r-1} & \mbox{if $\tr_{q^r/q}\left(b^{q^r+1}\right)\neq0$ and $c\neq0$.}
\end{cases}
\end{eqnarray*}

Denote by $N_b=|\{b \in \gf_{q^m}^*:\tr_{q^r/q}\left(b^{q^r+1}\right)=0\}|=|\{b \in \gf_{q^m}^*:\tr_{q^r/q}\left(\Norm_{q^m/q^r}(b)\right)=0\}|$. Obviously, $N_b=(q^{r-1}-1)(q^r+1)=q^{r-1}(q^{r}-q+1)-1$ as the trace and norm functions are both surjective homomorphisms.
 Then the weight distribution can be given in Table \ref{tab2}.

Finally, we will determine the minimum distance of $\overline{\cC_{D}}^{\perp}$. Note that the weight enumerator of $\overline{\cC_{D}}$ is
\begin{eqnarray}
A(z)=1+A_{w_1}z^{q^{r-1}(q^{r}-q^{r-1}-q+1)}+A_{w_2}z^{q^{r-1}(q^r-q^{r-1}-q+2)}+
A_{w_3}z^{q^{2r-2}(q-1)}+A_{w_4}z^{q^{r-1}(q^r-q+1)},
\end{eqnarray}
where $A_{w_1}$, $A_{w_2}$, $A_{w_3}$ and $A_{w_4}$ are given in Table \ref{tab2}. Besides, by the second, third and fourth Pless power moments in \cite[Page 260]{H}, we have
\begin{eqnarray*}
A_1^{\perp}=A_2^{\perp}=0,\ A_3^{\perp}=\frac{q^r(q^r+1)(q-1)(q-2)(q^r-q^2+q)(q^r-q)}{6q^3}.
\end{eqnarray*}
When $q = 2$, we have $A_3^{\perp} = 0$. By the fifth Pless power moment in \cite[Page 260]{H}, we have
$$A_4^{\perp}=\frac{2^r(2^r-1)(2^r-2)(2^r-4)(2^r+2)(2^r+1)}{384}.$$
Then $A_4^{\perp} > 0$ for $r >2$ and  $A_4^{\perp} = 0$ for $r = 2$. Hence, $\overline{\cC_{D}}^{\perp}$ has parameters $[q^{r-1}(q^r-q+1), q^{r-1}(q^r-q+1)-m-1, 4]$ when $q=2,r>2$. 
When $q > 2$, it is easy to deduce that $A_3^{\perp} > 0$ as $r \geq 2$. In this case, $\overline{\cC_{D}}^{\perp}$ has parameters $[q^{r-1}(q^r-q+1), q^{r-1}(q^r-q+1)-m-1, 3]$.
The proof is completed by the sphere packing bound in Lemma \ref{sphere}.
\end{proof}

To prove that $\overline{\cC_{D}}$ is self-orthogonal, we determine the following system of equations at first.

\begin{lemma}\label{lem-Nst}
Let $m=2r$ and $(s,t) \in \gf_q^* \times \gf_q^*$ with $q$ a power of $2$. Let $N_{s,t}$ denote the number of solutions in $\gf_{q^m}$ of the following system of equations:
\begin{eqnarray*}\label{eq-03}
\left\{
\begin{array}{l}
\tr_{q^m/q}\left(bx\right)=s, \\
\tr_{q^m/q}\left(b'x\right)=t, \\
\tr_{q^r/q}(x^{q^r+1})=0,
\end{array}
\right.
\end{eqnarray*}
where $b,b' \in \gf_{q^m}^*$. If $b$ and $b'$ are $\gf_q$-linearly independent, then the value of $N_{s,t}$ is listed in Table \ref{tab1}, where $$a_1=\tr_{q^r/q}\left(b'^{q^r+1}\right), a_2=\tr_{q^r/q}\left(b^{q^r+1}\right), a_3=\tr_{q^r/q}\left(bb'^{q^r}+b'b^{q^r}\right) \mbox{ and } \Delta=\tr_{q/2}\left(\frac{a_1a_2}{a_3^2}\right).$$
\begin{table}[!h]
\begin{center}
\caption{The value of $N_{s,t}$ in Lemma \ref{lem-Nst}. \label{tab1}}
\begin{tabular}{ll}
\toprule
 $N_{s,t}$ &   Condition  \\
\midrule
$q^{2r-3}$& for $a_1=0$ and $a_3=0$ \\
$q^{r-2}\left(q^{r-1}+1\right)$ & for $a_1=0$, $a_2=0$ and $a_3\neq0$ \\
$q^{r-1}\left(q^{r-2}-1\right)$     &for $a_1=0$, $a_2\neq0$, $a_3\neq0$ and $ta_2+sa_3=0$\\
$q^{2r-3}$     &for $a_1=0$, $a_2\neq0$, $a_3\neq0$ and $ta_2+sa_3 \neq 0$\\
$q^{2r-3}$     &for $a_1\neq0$, $a_2=0$ and $a_3=0$\\
$q^{r-1}\left(q^{r-2}-1\right)$     &for $a_1\neq0$, $a_2=0$, $a_3\neq0$ and $sa_1+ta_3=0$\\
$q^{2r-3}$     &for $a_1\neq0$, $a_2=0$, $a_3\neq0$ and $sa_1+ta_3\neq 0$\\
$q^{r-2}\left(q^{r-1}-1\right)$    &for $a_1\neq0$, $a_2\neq0$, $a_3=0$ and $s^2a_1=t^2a_2$\\
$q^{2r-3}$    &for $a_1\neq0$, $a_2\neq0$, $a_3=0$ and $s^2a_1 \neq t^2a_2$\\
$q^{r-2}(q^{r-1}-(q-1)(-1)^{\Delta}-1)$   &for $a_1\neq0$, $a_2\neq0$, $a_3\neq0$ and $a_2t^2+a_3st+a_1s^2=0$ \\
$q^{r-2}(q^{r-1}+(-1)^{\Delta}-1)$   &for $a_1\neq0$, $a_2\neq0$, $a_3\neq0$ and $a_2t^2+a_3st+a_1s^2\neq0$ \\
\bottomrule
\end{tabular}
\end{center}
\end{table}
\end{lemma}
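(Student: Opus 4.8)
The plan is to evaluate $N_{s,t}$ by the orthogonality of additive characters, reducing the three defining equations to an exponential sum and then applying Lemma \ref{lem-2k}, in the same spirit as the proof of Lemma \ref{lem-Ns}. Writing each equation as an average over $\gf_q$ and using $\tr_{q/p}(u\,\tr_{q^m/q}(bx))=\tr_{q^m/p}(ubx)$ together with $\tr_{q/p}(w\,\tr_{q^r/q}(x^{q^r+1}))=\tr_{q^r/p}(wx^{q^r+1})$, I obtain
\[
N_{s,t}=\frac{1}{q^3}\sum_{u,v,w\in\gf_q}\phi_1(-us-vt)\sum_{x\in\gf_{q^m}}\zeta_p^{\tr_{q^m/p}((ub+vb')x)+\tr_{q^r/p}(wx^{q^r+1})}.
\]
The contribution of $w=0$ is $q^m$ exactly when $ub+vb'=0$; since $b,b'$ are $\gf_q$-linearly independent this forces $u=v=0$, so the $w=0$ part equals $q^{m-3}=q^{2r-3}$. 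For each $w\in\gf_q^*\subseteq\gf_{q^r}^*$ I apply Lemma \ref{lem-2k} with $a=w$ and $ub+vb'$ in place of $b$, evaluating the inner sum over $x$ to $-q^r\zeta_p^{-\tr_{q^r/p}((ub+vb')^{q^r+1}/w)}$.

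The next step is to linearize the exponent. Expanding in characteristic $2$ gives $(ub+vb')^{q^r+1}=u^2b^{q^r+1}+v^2b'^{q^r+1}+uv(b^{q^r}b'+bb'^{q^r})$, and each of the three coefficients lies in $\gf_{q^r}$. Hence, pulling the scalars $u^2/w,v^2/w,uv/w\in\gf_q$ through $\tr_{q^r/q}$ and using transitivity of the trace, $\tr_{q^r/p}((ub+vb')^{q^r+1}/w)=\tr_{q/p}(Q(u,v)/w)$ with $Q(u,v)=a_2u^2+a_3uv+a_1v^2$. Since $-1=1$ in characteristic $2$ all signs disappear, and I arrive at $N_{s,t}=q^{2r-3}-q^{r-3}T$, where $T=\sum_{w\in\gf_q^*}\sum_{u,v\in\gf_q}\phi_1(su+tv+Q(u,v)/w)$. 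To handle $T$ cleanly I would sum over $w$ first: for fixed $(u,v)$ the sum $\sum_{w\in\gf_q^*}\phi_1(Q(u,v)/w)$ equals $q-1$ if $Q(u,v)=0$ and $-1$ otherwise, so $T=q\sum_{Q(u,v)=0}\phi_1(su+tv)-\sum_{u,v}\phi_1(su+tv)$; the last sum vanishes because $s,t\in\gf_q^*$, leaving the compact formula
\[
N_{s,t}=q^{2r-3}-q^{r-2}\!\!\sum_{\substack{(u,v)\in\gf_q^2\\ Q(u,v)=0}}\!\!\phi_1(su+tv).
\]

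Everything now reduces to summing the additive character $\phi_1(su+tv)$ over the zero locus of the binary quadratic form $Q$, and I would split into cases according to which of $a_1,a_2,a_3$ vanish. When $a_3=0$ the form is $a_2u^2+a_1v^2=(\sqrt{a_2}\,u+\sqrt{a_1}\,v)^2$ (with $\sqrt{\cdot}$ the unique square root in characteristic $2$), whose zero set is a single line, a coordinate line, or all of $\gf_q^2$, always through the origin; when $a_3\neq0$ but $a_1$ or $a_2$ vanishes, $Q$ factors as a product of two distinct linear forms, so its zero set is a union of two lines. In each case the sum over a line $\{(cv,v)\}$ equals $q$ when $sc+t=0$ and $0$ otherwise, and one subtracts the single overlap at the origin; this reproduces the conditions $s^2a_1=t^2a_2$, $ta_2+sa_3=0$ and $sa_1+ta_3=0$ recorded in Table \ref{tab1}.

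The hard case, and the main obstacle, is $a_1,a_2,a_3$ all nonzero. Here $Q(u,v)=0$ with $v\neq0$ is governed by the quadratic $a_2z^2+a_3z+a_1=0$ in $z=u/v$, to which I would apply Lemma \ref{lem-root}: it has two roots or none according as $\Delta=\tr_{q/2}(a_1a_2/a_3^2)$ equals $0$ or $1$, which is precisely where the factor $(-1)^{\Delta}$ enters. When two roots exist the zero locus is two lines through the origin, and the character sum collects $q$ from a line exactly when its slope $z$ satisfies $sz+t=0$, i.e. when $t/s$ is a root of $a_2z^2+a_3z+a_1$; clearing denominators, this is the condition $a_2t^2+a_3st+a_1s^2=0$ appearing in the table, whereas the anisotropic case $\Delta=1$ forces this quantity to be nonzero and leaves only the origin. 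The delicate point is to combine the contributions of the (up to two) lines with the single correction at the origin consistently across the split ($\Delta=0$) and anisotropic ($\Delta=1$) subcases, which is what yields the final closed forms; I would cross-check the resulting counts on a small instance such as $q=2$, $r=2$ to fix the origin term unambiguously.
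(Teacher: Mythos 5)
Your reduction is correct and takes a genuinely different route from the paper. The paper expands $N_{s,t}$ into eight pieces ($q^m$ plus seven sums $N_1,\dots,N_7$ according to which of $u,v,w$ vanish) and then evaluates the triple sum $N_7$ over $\gf_q^*\times\gf_q^*\times\gf_q^*$ case by case, using Lemma \ref{lem-sums} to solve for $v$ in terms of $w$ inside sums constrained to $v\in\gf_q^*$. You instead keep the full sum over $\gf_q^3$, apply Lemma \ref{lem-2k} for every $w\neq 0$, and sum over $w$ first; this collapses everything to $N_{s,t}=q^{2r-3}-q^{r-2}\sum_{Q(u,v)=0}\phi_1(su+tv)$ with $Q(u,v)=a_2u^2+a_3uv+a_1v^2$, after which only the elementary geometry of the zero locus of a binary quadratic form over $\gf_q$ remains. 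You leave the final case analysis as a sketch, but each case is a one-line computation from the compact formula, and your use of Lemma \ref{lem-root} and of $\Delta$ in the $a_1a_2a_3\neq0$ case is exactly right; what your approach buys is the complete avoidance of sums restricted to $\gf_q^*$ in the $u,v$ variables.

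The important point is that your formula does \emph{not} reproduce Table \ref{tab1}: it differs from the stated values by $q^{r-2}$ (by $q^{r-2}(q-1)$ in row 8) in rows 3, 4, 6, 7, 8, 10 and 11, and direct verification shows that \emph{your} values are the correct ones. For instance, take $q=2$, $r=2$, $\gf_{16}=\gf_2(\alpha)$ with $\alpha^4=\alpha+1$, $b=\alpha^7$, $b'=1$, $s=t=1$: then $a_1=0$, $a_2=a_3=1$, $ta_2+sa_3=0$, the set $\{x:\tr_{4/2}(x^5)=0\}$ is $\{0,1,\alpha^3,\alpha^6,\alpha^9,\alpha^{12}\}$, and exactly one element ($x=\alpha^6$) satisfies all three equations, matching your $q^{2r-3}-q^{r-2}(q-1)=1$ rather than the table's $q^{r-1}(q^{r-2}-1)=0$; similarly $b=\alpha$, $b'=\alpha^2$ gives $a_1=a_2=a_3=1$, $\Delta=1$, actual count $1$ versus the table's $q^{r-2}(q^{r-1}+(-1)^{\Delta}-1)=0$. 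The source of the discrepancy is in the paper's evaluation of $N_7$: when the sum over $(v,w)\in\gf_q^*\times\gf_q^*$ subject to $a_2/w=(a_3v/w+s)^2$ is rewritten as a sum over $w\in\gf_q^*$ alone, the unique $w=a_2/s^2$ whose associated solution $v$ equals $0$ must be excluded, and the paper omits this correction (its Cases 4, 6 and 8); its Case 7 also has an exponent slip ($-q^r\cdot q\cdot q=-q^{r+2}$, written as $-q^{r+1}$). So your approach is not only sound but exposes errors in the lemma as stated; you should carry the case analysis through explicitly and record the corrected table (the corrections cancel in the weighted sums $\sum_{s,t}N_{s,t}st$, so Theorem \ref{tem-so1} survives, but the table itself needs fixing).
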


\begin{proof}
By the orthogonal relation of additive characters, we have
\begin{eqnarray}\label{Nst}
N_{s,t} &=& |\{x\in\gf_{q^m}: \tr_{q^m/q}\left(bx\right)=s, \tr_{q^m/q}\left(b'x\right)=t, \tr_{q^r/q}(x^{q^r+1})=0\}|\nonumber\\
&=&\frac{1}{q^3}\sum_{u\in\gf_q}\sum_{v\in\gf_q}\sum_{w\in\gf_q}\sum_{x\in\gf_{q^m}}\zeta_p^{\tr_{q/p}\left(\tr_{q^m/q}\left(ubx\right)-us\right)}
\zeta_p^{\tr_{q/p}\left(\tr_{q^m/q}\left(vb'x\right)-vt\right)}\zeta_p^{\tr_{q/p}\left(\tr_{q^r/q}\left(wx^{q^r+1}\right)\right)}\nonumber\\
&=&\frac{1}{q^3}\left(q^m+\sum_{u\in\gf_q^*}\sum_{x\in\gf_{q^m}}\chi_1\left(ubx\right)\phi_1\left(-us\right) +
\sum_{v\in\gf_q^*}\sum_{x\in\gf_{q^m}}\phi_1\left(-vt\right)\chi_1\left(vb'x\right) + \right.\nonumber\\
&&\left.\sum_{w\in\gf_q^*}\sum_{x\in\gf_{q^m}}\zeta_p^{\tr_{q^r/p}\left(wx^{q^r+1}\right)} +\sum_{v\in\gf_q^*}\sum_{u\in\gf_q^*}\sum_{x\in\gf_{q^m}}\phi_1\left(-us-vt\right)\chi_1\left(ubx+vb'x\right)+\right.\nonumber\\
&&\left.\sum_{u\in\gf_q^*}\sum_{w\in\gf_q^*}\sum_{x\in\gf_{q^m}}\phi_1\left(-us\right)\chi_1\left(ubx\right)\zeta_p^{\tr_{q^r/p}\left(wx^{q^r+1}\right)}+\right.\nonumber\\
&&\left.\sum_{v\in\gf_q^*}\sum_{w\in\gf_q^*}\sum_{x\in\gf_{q^m}}\phi_1\left(-vt\right)\chi_1\left(vb'x\right)\zeta_p^{\tr_{q^r/p}\left(wx^{q^r+1}\right)}+\right.\nonumber\\
&&\left.\sum_{u\in\gf_q^*}\sum_{v\in\gf_q^*}\sum_{w\in\gf_q^*}\sum_{x\in\gf_{q^m}}\phi_1\left(-us-vt\right)\chi_1\left(ubx+vex\right)\zeta_p^{\tr_{q^r/p}\left(wx^{q^r+1}\right)}\right).
\end{eqnarray}
By the orthogonal relation  of the additive characters, we have
\begin{eqnarray}\label{N1}
N_1:=\sum_{u\in\gf_q^*}\sum_{x\in\gf_{q^m}}\chi_1\left(ubx\right)\phi_1\left(-us\right)=0
\end{eqnarray}
and
\begin{eqnarray}\label{N2}
N_2:=\sum_{v\in\gf_q^*}\sum_{x\in\gf_{q^m}}\chi_1\left(vb'x\right)\phi_1\left(-vt\right)=0.
\end{eqnarray}
By Lemma \ref{lem-2k},
\begin{eqnarray}\label{N3}
N_3:=\sum_{w\in\gf_q^*}\sum_{x\in\gf_{q^m}}\zeta_p^{\tr_{q^r/p}\left(wx^{q^r+1}\right)}=-q^r(q-1).
\end{eqnarray}
By the orthogonal relation  of the additive characters,
\begin{eqnarray}\label{N4}
\nonumber N_4&:=&\sum_{u\in\gf_q^*}\sum_{v\in\gf_q^*}\sum_{x\in\gf_{q^m}}\phi_1\left(-us-vt\right)\chi_1\left(ubx+vb'x\right)\\
\nonumber &=&\sum_{u\in\gf_q^*}\sum_{v\in\gf_q^*}\phi_1\left(-us-vt\right)\sum_{x\in\gf_{q^m}}\chi_1\left((ub+vb')x\right)\\
 &=&0,
\end{eqnarray}
where $ub+vb'\neq 0$ as $b$ and $b'$ are $\gf_q$-linearly independent. Similarly to the calculation of Equation $(\ref{Ns3})$, we can obtain
\begin{eqnarray}\label{N5}
\nonumber N_5&:=&\sum_{u\in\gf_q^*}\sum_{w\in\gf_q^*}\sum_{x\in\gf_{q^m}}\phi_1\left(-us\right)\chi_1\left(ubx\right)\zeta_p^{\tr_{q^r/p}\left(wx^{q^r+1}\right)}\\
&=&\begin{cases}
q^r(q-1) &\text{if $\tr_{q^r/q}\left(b^{q^r+1}\right)=0$,}\\
-q^r     &\text{if $\tr_{q^r/q}\left(b^{q^r+1}\right)\neq0$,}
\end{cases}
\end{eqnarray}
and
\begin{eqnarray}\label{N6}
\nonumber N_6&:=&\sum_{v\in\gf_q^*}\sum_{w\in\gf_q^*}\sum_{x\in\gf_{q^m}}\phi_1\left(-vt\right)\chi_1\left(vb'x\right)\zeta_p^{\tr_{q^r/p}\left(wx^{q^r+1}\right)}\\
&=&\begin{cases}
q^r(q-1) &\text{if $\tr_{q^r/q}\left(b'^{q^r+1}\right)=0$,}\\
-q^r     &\text{if $\tr_{q^r/q}\left(b'^{q^r+1}\right)\neq0$.}
\end{cases}
\end{eqnarray}
By Lemma \ref{lem-2k},
\begin{eqnarray}\label{N7}
N_7&:=&\sum_{u\in\gf_q^*}\sum_{v\in\gf_q^*}\sum_{w\in\gf_q^*}\sum_{x\in\gf_{q^m}}\phi_1\left(-us-vt\right)\chi_1\left(ubx+vb'x\right)\zeta_p^{\tr_{q^r/p}\left(wx^{q^r+1}\right)}\nonumber\\
&=&\sum_{u\in\gf_q^*}\sum_{v\in\gf_q^*}\sum_{w\in\gf_q^*}\phi_1\left(-us-vt\right)\sum_{x\in\gf_{q^m}}\zeta_p^{\tr_{q^r/p}\left(wx^{q^r+1}\right)+\tr_{q^m/p}\left((ub+vb')x\right)}\nonumber\\
&=&-q^r\sum_{u\in\gf_q^*}\sum_{v\in\gf_q^*}\sum_{w\in\gf_q^*}\phi_1\left(-us-vt\right)\zeta_p^{-\tr_{q^r/p}\left(\frac{(ub+vb')^{q^r+1}}{w}\right)}\nonumber\\
&=&-q^r\sum_{u\in\gf_q^*}\sum_{v\in\gf_q^*}\sum_{w\in\gf_q^*}\phi_1\left(-us-vt\right)\phi_1\left(-\frac{1}{w}\left(\tr_{q^r/q}\left(b^{q^r+1}\right)u^2+\right.\right.\nonumber\\
&&\left.\left.\tr_{q^r/q}\left(bb'^{q^r}+b'b^{q^r}\right)uv+\tr_{q^r/q}\left(b'^{q^r+1}\right)v^2\right)\right).
\end{eqnarray}
From now on, let $a_1=\tr_{q^r/q}\left(b'^{q^r+1}\right)$, $a_2=\tr_{q^r/q}\left(b^{q^r+1}\right)$ and $a_3=\tr_{q^r/q}\left(bb'^{q^r}+b'b^{q^r}\right)$. In order to calculate $N_7$, we consider the following cases.
\begin{enumerate}[1.]
\item If $a_1=0, a_2=0$ and $a_3=0,$ then
\begin{eqnarray*}
N_7=-q^r\sum_{u\in\gf_q^*}\sum_{v\in\gf_q^*}\sum_{w\in\gf_q^*}\phi_1\left(-us-vt\right)=-q^r(q-1).
\end{eqnarray*}

\item If $a_1=0, a_2=0$ and $a_3\neq 0,$ then
\begin{eqnarray*}
N_7=-q^r\sum_{u\in\gf_q^*}\phi_1\left(-us\right)\sum_{v\in\gf_q^*}\phi_1\left(-vt\right)\sum_{w\in\gf_q^*}\phi_1\left(-\frac{a_3uv}{w}\right)=q^r.
\end{eqnarray*}

\item If $a_1=0, a_2\neq0$ and $a_3=0,$ then
\begin{eqnarray*}
N_7&=&-q^r\sum_{u\in\gf_q^*}\left(-us\right)\sum_{v\in\gf_q^*}\phi_1\left(-vt\right)\sum_{w\in\gf_q^*}\phi_1\left(-\frac{a_2u^2}{w}\right)
=q^r.
\end{eqnarray*}

\item If $a_1=0, a_2\neq0$ and $a_3\neq0,$ then
\begin{eqnarray*}
N_7&=&-q^r\sum_{u\in\gf_q^*}\sum_{v\in\gf_q^*}\sum_{w\in\gf_q^*}\phi_1\left(-us-vt\right)\phi_1\left(-\frac{a_2u^2}{w}-
\frac{a_3uv}{w}\right)\\
&=&-q^r\sum_{v\in\gf_q^*}\sum_{w\in\gf_q^*}\phi_1\left(-vt\right)\sum_{u\in\gf_q}\phi_1\left(-\frac{a_2}{w}u^2-\left(\frac{a_3v}{w}+s\right)u\right)
+q^r\sum_{v\in\gf_q^*}\sum_{w\in\gf_q^*}\phi_1\left(-vt\right)\\
&=&-q^{r+1}\sum_{\substack{w, v\in\gf_q^*\\-\frac{a_2}{w}=\left(\frac{a_3v}{w}+s\right)^2}}\phi_1(-vt)-q^r(q-1)\\
&=& -q^{r+1}\sum_{w \in \gf_q^*}\phi_1\left(\frac{sw+a_2^{\frac{q}{2}}w^{\frac{q}{2}}}{a_3}t\right)-q^r(q-1)\\
&=& -q^{r+1}\sum_{w \in \gf_q^*}\phi_1\left(\frac{a_2^{\frac{q}{2}}t}{a_3}w^{\frac{q}{2}}+\frac{st}{a_3}w\right)-q^r(q-1)\\
&=& \begin{cases}
      -q^{r+1}(q-1)-q^r(q-1) & \mbox{if } ta_2+sa_3=0 \\
      q^{r+1}-q^r(q-1) & \mbox{otherwise}
    \end{cases}\\
&=& \begin{cases}
      -q^r(q^2-1) & \mbox{if } ta_2+sa_3=0, \\
      q^r & \mbox{otherwise},
    \end{cases}
\end{eqnarray*}
where the third and sixth equalities hold due to Lemma \ref{lem-sums} and the fourth equality holds due to the fact that
the solution $v$ of the equation $-\frac{a_2}{w}=\left(\frac{a_3v}{w}+s\right)^2$ is unique by Lemma \ref{lem-root} and $v=\frac{sw+a_2^{\frac{q}{2}}w^{\frac{q}{2}}}{a_3}$.

\item If $a_1\neq0, a_2=0$ and $a_3=0,$ then
\begin{eqnarray*}
N_7&=&-q^r\sum_{u\in\gf_q^*}\left(-us\right)\sum_{v\in\gf_q^*}\phi_1\left(-vt\right)\sum_{w\in\gf_q^*}\phi_1\left(-\frac{a_1v^2}{w}\right)=q^r.
\end{eqnarray*}

\item If $a_1 \neq0, a_2 =0$ and $a_3 \neq0,$ then
\begin{eqnarray*}
N_7&=& \begin{cases}
      -q^r(q^2-1) & \mbox{if } sa_1+ta_3=0, \\
      q^r & \mbox{otherwise},
    \end{cases}
\end{eqnarray*}
by the similar calculation method in Case 4.

\item If $a_1 \neq0, a_2 \neq0$ and $a_3=0,$ then
\begin{eqnarray*}
N_7&=&-q^r\sum_{u\in\gf_q^*}\sum_{v\in\gf_q^*}\sum_{w\in\gf_q^*}\phi_1\left(-us-vt\right)\phi_1\left(-\frac{a_2u^2}{w}-
\frac{a_1v^2}{w}\right)\\
&=&-q^r\sum_{u\in\gf_q^*}\sum_{v\in\gf_q^*}\sum_{w\in\gf_q^*}\phi_1\left(-\frac{a_2}{w}u^2-su\right)
\phi_1\left(-\frac{a_1}{w}v^2-tv\right)\\
&=&-q^r\sum_{w\in\gf_q^*}\sum_{u\in\gf_q}\phi_1\left(-\frac{a_2}{w}u^2-su\right)
\sum_{v\in\gf_q}\phi_1\left(-\frac{a_1}{w}v^2-tv\right)+
q^r\sum_{w\in\gf_q^*}\sum_{v\in\gf_q^*}\phi_1\left(-\frac{a_1}{w}v^2
-tv\right)+\\
&&q^r\sum_{w\in\gf_q^*}\sum_{u\in\gf_q^*}\phi_1\left(-\frac{a_2}{w}u^2-su\right)+q^r(q-1)\\
&=&\begin{cases}
     -q^{r+1}+q^r+q^r+q^r(q-1) & \mbox{if } s^2a_1=t^2a_2 \\
     0+q^r+q^r+q^r(q-1) & \mbox{otherwise}
   \end{cases}\\
&=&\begin{cases}
     q^r & \mbox{if } s^2a_1=t^2a_2, \\
     q^r(q+1) & \mbox{otherwise},
   \end{cases}
\end{eqnarray*}
where the fourth equality holds due to Lemma \ref{lem-sums}.

\item If $a_1 \neq 0, a_2 \neq 0$ and $a_3 \neq 0,$ then
\begin{eqnarray*}
N_7&=&-q^r\sum_{u\in\gf_q^*}\sum_{v\in\gf_q^*}\sum_{w\in\gf_q^*}\phi_1\left(-us-vt\right)\phi_1\left(-\frac{1}{w}(a_2u^2+
a_3uv+a_1v^2)\right)\\
&=&-q^r\sum_{w\in\gf_q^*}\sum_{v\in\gf_q^*}\phi_1\left(-vt-\frac{a_1}{w}v^2\right)
\sum_{u\in\gf_q}\phi_1\left(-\frac{a_2}{w}u^2-\left(\frac{va_3}{w}+s\right)u\right)+\\
&&q^r\sum_{w\in\gf_q^*}\sum_{v\in\gf_q^*}\phi_1\left(-vt-\frac{a_1}{w}v^2\right)\\
&=&q^r-q^r\sum_{\substack{w,v\in\gf_q^*\\-\frac{a_2}{w}=\left(\frac{a_3v}{w}+s\right)^2}}\phi_1\left(-vt-\frac{a_1}{w}v^2\right)\\
&=&q^r-q^{r+1}\sum_{w \in \gf_q^*}\phi_1\left(\frac{a_2^{\frac{q}{2}}t}{a_3}w^{\frac{q}{2}}+\left(\frac{st}{a_3}+\frac{a_1s^2}{a_3^2}\right)w+\frac{a_1a_2}{a_3^2}\right)\\
&=&\begin{cases}
     q^r-q^{r+1}(q-1)\phi_1\left(\frac{a_1a_2}{a_3^2}\right) & \mbox{if } a_2t^2+a_3st+a_1s^2=0 \\
     q^r-q^{r+1}\left(0-\phi_1\left(\frac{a_1a_2}{a_3^2}\right)\right) & \mbox{if } a_2t^2+a_3st+a_1s^2 \neq 0
   \end{cases}\\
&=&\begin{cases}
     q^r-q^{r+1}(q-1)(-1)^{\tr_{q/2}\left(\frac{a_1a_2}{a_3^2}\right)} & \mbox{if } a_2t^2+a_3st+a_1s^2=0, \\
     q^r+q^{r+1}(-1)^{\tr_{q/2}\left(\frac{a_1a_2}{a_3^2}\right)} & \mbox{if } a_2t^2+a_3st+a_1s^2 \neq 0,
   \end{cases}
\end{eqnarray*}
where the third and fifth equalities hold due to Lemma \ref{lem-sums} and the fourth equality holds due to the fact that
the solution $v$ of the equation $-\frac{a_2}{w}=\left(\frac{a_3v}{w}+s\right)^2$ is unique by Lemma \ref{lem-root} and $v=\frac{sw+a_2^{\frac{q}{2}}w^{\frac{q}{2}}}{a_3}$. Note that $\phi_1\left(\frac{a_1a_2}{a_3^2}\right)=\zeta_2^{\tr_{q/2}\left(\frac{a_1a_2}{a_3^2}\right)}=(-1)^{\tr_{q/2}\left(\frac{a_1a_2}{a_3^2}\right)}=\pm1$.
\end{enumerate}

Substituting Equations (\ref{N1}), (\ref{N2}), (\ref{N3}), (\ref{N4}), (\ref{N5}), (\ref{N6}) and (\ref{N7}) into Equation (\ref{Nst}) yields the values of $N_{s,t}$  listed in Table \ref{tab1}.
\end{proof}

\begin{theorem}\label{tem-so1}
Let $q = p^e$, where $p$ is a prime and $e$ is a positive integer. Let $m, r$ be positive integers such that $m=2r$ with $r \geq 2$ and $(q,r) \neq (2,2)$. Then the augmented code $\overline{\cC_{D}}$ with the defining set $D=\{x \in \gf_{q^m} : \tr_{q^r/q}(x^{q^r+1})=0\}$ is self-orthogonal.
\end{theorem}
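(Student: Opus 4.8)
The plan is to split on the parity of $p$, using in both cases that $\mathbf{1}\in\overline{\cC_D}$ (the all-one vector is the codeword obtained by taking $b=0$ and $c=1$ in (\ref{eq-CD1})). First suppose $p$ is odd. Inspecting Table \ref{tab2}, every nonzero weight of $\overline{\cC_D}$ carries the factor $q^{r-1}$, and since $r\ge 2$ this factor is divisible by $p$; hence $\overline{\cC_D}$ is $p$-divisible. As $\mathbf{1}\in\overline{\cC_D}$, Lemma \ref{lem-self-orthogonal} yields self-orthogonality at once, disposing of all odd characteristics with no further computation.

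The substantive case is $p=2$, where Lemma \ref{lem-self-orthogonal} does not apply. Here I would argue directly on the generator matrix $G$ displayed in the proof of Theorem \ref{loc}, whose rows are $\mathbf{1}$ together with $\mathbf{r}_i=(\tr_{q^m/q}(\alpha^i x))_{x\in D}$ for $0\le i\le m-1$. Since $\overline{\cC_D}\subseteq\overline{\cC_D}^{\perp}$ is equivalent to $GG^{T}=0$, it suffices to show these $m+1$ rows are pairwise orthogonal over $\gf_q$, diagonal included. For $\langle\mathbf{1},\mathbf{1}\rangle=n=q^{r-1}(q^r-q+1)$ the factor $q^{r-1}$ is even, so $n\equiv 0$. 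For $\langle\mathbf{1},\mathbf{r}_i\rangle=\sum_{x\in D}\tr_{q^m/q}(\alpha^i x)$ I would group by the value $s=\tr_{q^m/q}(\alpha^i x)$ and apply Lemma \ref{lem-Ns}, which shows the number of $x\in D$ with $\tr_{q^m/q}(\alpha^i x)=s$ is the same for every $s\in\gf_q^*$; hence the sum equals that common value times $\sum_{s\in\gf_q^*}s$, which is $0$ for $q>2$ since $\sum_{x\in\gf_q^*}x=0$, while for $q=2$ (where $(q,r)\ne(2,2)$ forces $r>2$) the common value $q^{m-2}$ or $q^{m-2}-q^{r-1}$ is even. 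The diagonal entries then come for free: in characteristic $2$ the Frobenius identity $(\sum a_x)^2=\sum a_x^2$ gives $\langle\mathbf{r}_i,\mathbf{r}_i\rangle=\sum_{x\in D}\tr_{q^m/q}(\alpha^i x)^2=\left(\sum_{x\in D}\tr_{q^m/q}(\alpha^i x)\right)^2=0$ by the previous step.

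The heart of the matter is the off-diagonal inner product $\langle\mathbf{r}_i,\mathbf{r}_j\rangle$ with $i\ne j$, so that $\alpha^i,\alpha^j$ are $\gf_q$-linearly independent, which is precisely the hypothesis of Lemma \ref{lem-Nst}. Grouping $\sum_{x\in D}\tr_{q^m/q}(\alpha^i x)\tr_{q^m/q}(\alpha^j x)$ by the pair of trace values $(s,t)$ and discarding the terms with $s=0$ or $t=0$, this reduces in $\gf_q$ to $\sum_{s,t\in\gf_q^*}(N_{s,t}\bmod 2)\,st$, where $N_{s,t}$ is read off Table \ref{tab1} with $b=\alpha^i,\ b'=\alpha^j$. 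I expect to finish by a parity analysis of the eleven cases of Lemma \ref{lem-Nst}. When $r>2$ every tabulated value of $N_{s,t}$ is even (each carries a factor $q^{r-2}$ or $q^{r-1}$ of positive exponent, and in the two remaining rows the $\pm1$ corrections cancel modulo $2$), so the sum is trivially $0$. When $r=2$ we must have $q\ge 4$, and only two rows give an odd $N_{s,t}$: the case $a_1=a_2=0,\ a_3\ne0$, where $N_{s,t}$ is odd for every $(s,t)$ and the sum factors as $\big(\sum_{s\in\gf_q^*}s\big)\big(\sum_{t\in\gf_q^*}t\big)=0$; and the case $a_1,a_2\ne0,\ a_3=0$, where $N_{s,t}$ is odd exactly on $s^2a_1=t^2a_2$, i.e. $s=\mu t$ for the unique $\mu$ with $\mu^2=a_2/a_1$, giving $\mu\sum_{t\in\gf_q^*}t^2=\mu\big(\sum_{t\in\gf_q^*}t\big)^2=0$.

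All of these vanish because $\sum_{x\in\gf_q^*}x=0$ for $q>2$, which establishes pairwise orthogonality of the rows and hence completes the proof. The genuine labour, and the only place the delicate hypothesis $(q,r)\ne(2,2)$ is needed, is exactly this characteristic-$2$ off-diagonal bookkeeping; the odd case and the $\langle\mathbf{1},\cdot\rangle$ and diagonal entries are comparatively immediate.
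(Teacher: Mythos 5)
Your proposal is correct and follows the same skeleton as the paper's proof: odd characteristic is dispatched by the $p$-divisibility of the weights together with Lemma \ref{lem-self-orthogonal}, and even characteristic rests on the counting Lemmas \ref{lem-Ns} and \ref{lem-Nst}. Where you differ is in how the characteristic-$2$ computation is organized, and your version is leaner. The paper verifies $\bc_{(b,c)}\cdot\bc_{(b',c')}=0$ for arbitrary pairs of codewords, which forces a four-way split on whether $b$ and $b'$ vanish, a separate treatment of $\gf_q$-linearly dependent $b,b'$, and seven subcases (2.1)--(2.7) in which the exact integer values of $N_{s,t}$ are carried through sums such as $\sum_{t}\bigl(q^{r-1}(q^{r-2}-1)s_t+q^{2r-3}\sum_{s\neq s_t}s\bigr)t$. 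By checking only the $m+1$ generator rows you remove the linearly dependent case entirely (distinct $\alpha^i,\alpha^j$ from a basis are automatically independent), the Frobenius identity disposes of the diagonal for free, and reducing $N_{s,t}$ modulo $2$ \emph{before} summing collapses the paper's seven subcases to the two rows of Table \ref{tab1} that are actually odd when $r=2$, and to nothing at all when $r>2$. The substance is identical---both arguments ultimately live off Table \ref{tab1} and the identities $\sum_{s\in\gf_q^*}s=\sum_{s\in\gf_q^*}s^2=0$ for $q>2$---but your mod-$2$ bookkeeping makes it considerably more transparent why $(q,r)\neq(2,2)$ is the only obstruction. The one cosmetic slip is your remark that for $r>2$ the ``$\pm1$ corrections cancel modulo $2$'' in the last two rows: there the prefactor $q^{r-2}$ is already even, so no cancellation is needed; this does not affect the argument.
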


\begin{proof}
  By definition, $\overline{\cC_{D}}=\{\mathbf{c}_{(b,c)}=(\tr_{q^m/q}(bx) + c)_{x \in {D}}:b \in \gf_{q^m}, c \in \gf_q\}$ and $\mathbf{1} \in \overline{\cC_{D}}$. Furthermore, we derive that $\overline{\cC_{D}}$ is $p$-divisible by Theorem \ref{tem-wtCD1}. Then by Lemma \ref{lem-self-orthogonal}, we have $\overline{\cC_{D}}$ is self-orthogonal for odd $q$.

If $q$ is even, we need to prove that $\bc_{(b,c)}\cdot \bc_{(b',c')} = 0$ for any $(b,c) \in \gf_{q^m} \times \gf_q$, $(b',c') \in \gf_{q^m} \times \gf_q$.
In the following, we consider some cases.

{Case 1:} Let $b=0,c \in \gf_q$, $b'=0, c' \in \gf_q$. Then
\begin{eqnarray*}
\bc_{(0,c)}\cdot \bc_{(0,c')} &=& \sum_{x\in D}cc' = \left(q^{m-1}-q^{r-1}(q-1)\right)cc'
= 0
\end{eqnarray*}
as $r\geq 2$ and $|D|=q^{r-1}(q^r-q+1)$ by Lemma \ref{lem-length}.

{Case 2:} Let $b=0,c \in \gf_q$, $b'\neq 0, c' \in \gf_q$. Then
\begin{eqnarray*}
\bc_{(0,c)}\cdot \bc_{(b',c')} &=& c\sum_{x\in D}\left(\tr_{q^m/q}\left(b'x\right)+c'\right) \\
&=&  c\sum_{x\in D}\tr_{q^m/q}\left(b'x\right)+\sum_{x\in D}cc'\\
&=&  c\sum_{s\in \gf_q}N_s s + \left(q^{m-1}-q^{r-1}(q-1)\right)cc'\\
&=&  c\sum_{s\in \gf_q^*}N_s s + \left(q^{m-1}-q^{r-1}(q-1)\right)cc'\\
&=&cN_s\sum_{s\in \gf_q^*} s + \left(q^{m-1}-q^{r-1}(q-1)\right)cc'=0,
\end{eqnarray*}
where $N_s$ was defined as in Lemma \ref{lem-Ns} and is independent on $s$, $r\geq2$ and $\sum_{s\in \gf_q^*} s=0$.

{Case 3:} Let $b\neq 0,c \in \gf_q$, $b'=0, c' \in \gf_q$. Then $\bc_{b,c}\cdot \bc_{0,c'}=0$ for $r \geq 2$ by the similar method in Case 2.

{Case 4:} Let $b\neq0,c \in \gf_q^*$, $b'\neq0, c' \in \gf_q^*$. Then
\begin{eqnarray*}
\bc_{(b,c)}\cdot \bc_{(b',c')} &=& \sum_{x\in D}\left(\left(\tr_{q^m/q}\left(bx\right)+c\right)\left(\tr_{q^m/q}\left(b'x\right)+c'\right)\right) \\
&=& \sum_{x\in D}\left(\tr_{q^m/q}\left(bx\right)\tr_{q^m/q}\left(b'x\right)\right) +
c\sum_{x\in D}\tr_{q^m/q}\left(b'x\right) + c'\sum_{x\in D}\tr_{q^m/q}\left(bx\right)+\sum_{x\in D}cc'\\
&=&  \sum_{x\in D}\left(\tr_{q^m/q}\left(bx\right)\tr_{q^m/q}\left(b'x\right)\right),
\end{eqnarray*}
where the last equality holds as $r \geq 2$ and $\sum_{s \in \gf_{q}^*} N_s s=0$ by Case 2.
\begin{enumerate}[1)]
\item If $b$ and $b'$ are $\gf_q$-linearly dependent, then there exists an element $l\in \gf_q^*$ such that $b = lb'$. We have
\begin{eqnarray*}
\bc_{(b,c)}\cdot \bc_{(b',c')} &=&  \sum_{x\in D}\left(\tr_{q^m/q}\left(lb'x\right)\right)\left(\tr_{q^m/q}\left(b'x\right)\right) \\
&=& l\sum_{s\in \gf_q^*}N_s s^2 \\
&=&lN_s\sum_{s\in \gf_q^*} s^2 =0
\end{eqnarray*}
as $r \geq 2$, $q\mid N_s$ and $N_s$ is independent on $s$ by Lemma \ref{lem-Ns}.

\item If $b$ and $b'$ are $\gf_q$-linearly independent, then
\begin{eqnarray*}
\bc_{(b,c)}\cdot \bc_{(b',c')} = \sum_{s,t\in \gf_q^*}N_{s, t}st,
\end{eqnarray*}
where $N_{s,t}$ was defined as in Lemma \ref{lem-Nst}.

\begin{itemize}
\item[2.1)] If $\tr_{q^r/q}\left(b'^{q^r+1}\right)=0$ and $\tr_{q^r/q}\left(bb'^{q^r}+b'b^{q^r}\right)=0,$ then by Lemma \ref{lem-Nst}, we have
\begin{eqnarray*}
\bc_{(b,c)}\cdot \bc_{(b',c')} &=&  q^{2r-3}\sum_{s,t\in \gf_q^*}st =0
\end{eqnarray*}
as $r \geq 2$.

\item[2.2)] If $\tr_{q^r/q}\left(b'^{q^r+1}\right)=0, \tr_{q^r/q}\left(b^{q^r+1}\right)=0$ and $\tr_{q^r/q}\left(bb'^{q^r}+b'b^{q^r}\right)\neq 0,$ then by Lemma \ref{lem-Nst} we have
\begin{eqnarray*}
\bc_{(b,c)}\cdot \bc_{(b',c')} &=&  q^{r-2}\left(q^{r-1}+1\right)\sum_{s,t\in \gf_q^*}st =0
\end{eqnarray*}
as $(q,r)\neq (2,2)$ and $\sum_{s,t \in \gf_q^*}st=0$.

\item[2.3)] If $\tr_{q^r/q}\left(b'^{q^r+1}\right)=0, \tr_{q^r/q}\left(b^{q^r+1}\right)\neq 0$ and $\tr_{q^r/q}\left(bb'^{q^r}+b'b^{q^r}\right)\neq 0,$ then by the third and fourth rows of Table \ref{tab1} in Lemma \ref{lem-Nst}, the value of  $N_{s,t}$ is dependent on the conditions
    $$t\tr_{q^r/q}\left(b^{q^r+1}\right)+ s\tr_{q^r/q}\left(bb'^{q^r}+b'b^{q^r}\right)=0$$ and $$t\tr_{q^r/q}\left(b^{q^r+1}\right) + s\tr_{q^r/q}\left(bb'^{q^r}+b'b^{q^r}\right) \neq 0.$$
    For fixed $t\in \gf_q^*$, let $s_t$ be the unique solution of the equation $t\tr_{q^r/q}(b^{q^r+1})+ s\tr_{q^r/q}(bb'^{q^r}+b'b^{q^r})=0$ with variable $s$. By Lemma \ref{lem-Nst}, we have
\begin{eqnarray*}
\bc_{(b,c)}\cdot \bc_{(b',c')} &=& \sum_{s,t\in \gf_q^*}N_{s,t}st \\
&=& \sum_{t\in \gf_q^*}\left(q^{r-1}\left(q^{r-2}-1\right)s_t+q^{2r-3}\sum_{s\in \gf_q^*\backslash\{s_t\}}s\right)t\\
&=& \sum_{t\in \gf_q^*}\left(q^{r-1}\left(q^{r-2}-1\right)s_t + q^{2r-3}(0-s_t)\right)t\\
&=& -q^{r-1}\sum_{t\in \gf_q^*}s_tt=0
\end{eqnarray*}
as $r \geq 2$.

\item[2.4)] If $\tr_{q^r/q}\left(b'^{q^r+1}\right)\neq0, \tr_{q^r/q}\left(b^{q^r+1}\right)= 0$ and $\tr_{q^r/q}\left(bb'^{q^r}+b'b^{q^r}\right)= 0,$ then by Lemma \ref{lem-Nst} we have
\begin{eqnarray*}
\bc_{(b,c)}\cdot \bc_{(b',c')} &=&  q^{2r-3}\sum_{s,t\in \gf_q^*}st
= 0
\end{eqnarray*}
as $r \geq 2$.

\item[2.5)] If $\tr_{q^r/q}\left(b'^{q^r+1}\right)\neq0, \tr_{q^r/q}\left(b^{q^r+1}\right)= 0$ and $\tr_{q^r/q}\left(bb'^{q^r}+b'b^{q^r}\right)\neq 0,$ then by the sixth and seventh rows of Table \ref{tab1} in Lemma \ref{lem-Nst}, the value of  $N_{s,t}$ is dependent on the conditions $s\tr_{q^r/q}\left(b'^{q^r+1}\right) + t\tr_{q^r/q}\left(bb'^{q^r}+b'b^{q^r}\right)=0$ and $s\tr_{q^r/q}\left(b'^{q^r+1}\right) + t\tr_{q^r/q}\left(bb'^{q^r}+b'b^{q^r}\right)\neq0$.
    Similarly to Case 2.3), we can also derive $\bc_{(b,c)}\cdot \bc_{(b',c')}=0$.

\item[2.6)] If $\tr_{q^r/q}\left(b'^{q^r+1}\right)\neq0, \tr_{q^r/q}\left(b^{q^r+1}\right)\neq 0$ and $\tr_{q^r/q}\left(bb'^{q^r}+b'b^{q^r}\right)= 0,$ then by the eighth and ninth rows of Table \ref{tab1} in Lemma \ref{lem-Nst}, the value of  $N_{s,t}$ is dependent on the conditions $s^2\tr_{q^r/q}\left(b'^{q^r+1}\right) = t^2\tr_{q^r/q}\left(b^{q^r+1}\right)$ and $s^2\tr_{q^r/q}\left(b'^{q^r+1}\right) \neq t^2\tr_{q^r/q}\left(b^{q^r+1}\right)$. For fixed $t\in \gf_q^*$, the equation $s^2\tr_{q^r/q}\left(b'^{q^r+1}\right) = t^2\tr_{q^r/q}\left(b^{q^r+1}\right)$ with variable $s$ has unique solution by Lemma \ref{lem-root} which is denoted by $s_t$. Furthermore, $s_t=\frac{\tr_{q^r/q}\left(b^{q^r+1}\right)^{\frac{q}{2}}t}{\tr_{q^r/q}\left(b'^{q^r+1}\right)^{\frac{q}{2}}}$. Then by Lemma \ref{lem-Nst}, we have
    \begin{eqnarray*}
& &\bc_{(b,c)}\cdot \bc_{(b',c')} \\
&=& \sum_{s,t\in \gf_q^*}N_{s,t}st \\
&=& \sum_{t\in \gf_q^*}\left(q^{r-2}(q^{r-1}-1)s_t + q^{2r-3}\sum_{s\in \gf_q^*\backslash \{s_t\}}s\right)t \\
&=&-q^{r-2}\sum_{t\in \gf_q^*}s_tt\\
&=&-q^{r-2}\frac{\tr_{q^r/q}\left(b^{q^r+1}\right)^{\frac{q}{2}}}{\tr_{q^r/q}\left(b'^{q^r+1}\right)^{\frac{q}{2}}}\sum_{t\in \gf_q^*}t^2=0
    \end{eqnarray*}
    as  $(q,r) \neq (2,2)$.

\item[2.7)] If $\tr_{q^r/q}\left(b'^{q^r+1}\right)\neq0$, $\tr_{q^r/q}\left(b^{q^r+1}\right)\neq0$ and $\tr_{q^r/q}\left(bb'^{q^r}+b'b^{q^r}\right)\neq 0$, by the tenth and eleventh rows of Table \ref{tab1} in  Lemma \ref{lem-Nst}, the value of  $N_{s,t}$ is dependent on the conditions
    $$t^2\tr_{q^r/q}\left(b^{q^r+1}\right)+st\tr_{q^r/q}\left(bb'^{q^r}+b'b^{q^r}\right)+s^2\tr_{q^r/q}\left(b'^{q^r+1}\right)=0$$
    and
        $$t^2\tr_{q^r/q}\left(b^{q^r+1}\right)+st\tr_{q^r/q}\left(bb'^{q^r}+b'b^{q^r}\right)+s^2\tr_{q^r/q}\left(b'^{q^r+1}\right)\neq 0.$$
Similarly to Case 2.6), we also have $\bc_{(b,c)}\cdot \bc_{(b',c')}=0$.
\end{itemize}
\end{enumerate}

Similarly to Case 4, we can also verify that $\bc_{(b,0)}\cdot \bc_{(b',0)}=0$, $\bc_{(b,c)}\cdot \bc_{(b',0)}=0$ and $\bc_{(b,0)}\cdot \bc_{(b',c')}=0$.
Then we conclude that $\overline{\cC_{D}}$ is also self-orthogonal for even $q$.
\end{proof}

Note that the locality of $\overline{\cC_D}$ was determined in Theorem \ref{loc} for the case $q > 2$. In the following, we will determine the locality of $\overline{\cC_D}$ for the case $q=2$ if $D$ is given in Equation $(\ref{eq-Dr})$. Here we first give a corollary which shows that $\overline{\cC_D}$ in Theorem \ref{tem-wtCD1} holds 2-designs
for $q=2$.

\begin{corollary}\label{col-design1}
  Let $q=2$ and $r \geq 3$. Then the linear code $\overline{\cC_{D}}$ with $D$ in Equation $(\ref{eq-Dr})$
   has parameters $[2^{r-1}(2^r-1), m+1, 2^{r-1}(2^{r-1}-1)]$ and weight enumerator
  \begin{eqnarray*}
  A(z)=1+(2^{2r}-1)z^{2^{r-1}(2^{r-1}-1)}+(2^{2r}-1)z^{2^{2r-2}}+z^{2^{r-1}(2^r-1)}.
  \end{eqnarray*}
  The dual code $\overline{\cC_{D}}^{\perp}$ has parameters $[2^{r-1}(2^r-1),2^{r-1}(2^r-1)-m-1, 4]$. Then the minimum weight codewords in $\overline{\cC_D}$ hold a $2$-$(2^{r-1}(2^r-1), 2^{r-1}(2^{r-1}-1), 2^{2r-2}-2^{r-1}-1)$ design. The minimum weight codewords in $\overline{\cC_D}^{\perp}$ hold a $2$-$(2^{r-1}(2^r-1), 4, 2^{2r-3}-2^{r-2}-1)$ design.
\end{corollary}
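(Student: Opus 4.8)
The plan is to first specialize Theorem \ref{tem-wtCD1} to $q=2$, then invoke the Assmus--Mattson theorem to produce the two $2$-designs, and finally read off the parameters $\lambda$ from the standard block-counting identity. The parameters and dual parameters $[2^{r-1}(2^r-1),\,m+1,\,2^{r-1}(2^{r-1}-1)]$ and $[2^{r-1}(2^r-1),\,2^{r-1}(2^r-1)-m-1,\,4]$ (the latter valid since $r>2$) come directly from Theorem \ref{tem-wtCD1}.

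First I would substitute $q=2$ into Table \ref{tab2} and simplify. The key point is that the two middle weights $q^{r-1}(q^r-q^{r-1}-q+2)$ and $q^{2r-2}(q-1)$ both collapse to $2^{2r-2}$ when $q=2$, so their frequencies must be merged; a short computation gives combined frequency $2^{2r}-1$, while the weight $q^{r-1}(q^r-q^{r-1}-q+1)$ becomes $2^{r-1}(2^{r-1}-1)$ with frequency $2^{2r}-1$. This yields the stated weight enumerator, so $\overline{\cC_D}$ has exactly three nonzero weights $w_1=2^{r-1}(2^{r-1}-1)$, $w_2=2^{2r-2}$ and $w_3=n=2^{r-1}(2^r-1)$.

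Next I would apply the Assmus--Mattson theorem to $\cC:=\overline{\cC_D}^{\perp}$, which has minimum distance $4$, using the fact that its dual $\cC^{\perp}=\overline{\cC_D}$ has the three weights just listed. Taking $t=2$ (so $t<4$), I would count the number $s$ of nonzero weights of $\overline{\cC_D}$ in the range $[1,n-t]=[1,n-2]$. Since $w_3=n$ falls outside this range while $w_1,w_2\le n-2$ for $r\ge 3$, one gets $s=2$, and the Assmus--Mattson hypothesis $s\le d-t=4-2=2$ holds with equality. Part (i) of the theorem then shows that the weight-$4$ codewords of $\overline{\cC_D}^{\perp}$ support a $2$-design, and part (ii) shows that the weight-$w_1$ (minimum weight) codewords of $\overline{\cC_D}$ support a $2$-design, since $w_1$ lies in the range $[d^{\perp},n-t]$.

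Finally I would determine each $\lambda$ from $\lambda\binom{n}{t}=b\binom{\kappa}{t}$, where for a binary code the number of blocks $b$ equals the number of codewords of weight $\kappa$. For $\overline{\cC_D}$ I take $\kappa=w_1$ and $b=A_{w_1}=2^{2r}-1$; for $\overline{\cC_D}^{\perp}$ I take $\kappa=4$ and $b=A_4^{\perp}=\frac{2^r(2^r-1)(2^r-2)(2^r-4)(2^r+2)(2^r+1)}{384}$, the fifth-Pless-power-moment value obtained in the proof of Theorem \ref{tem-wtCD1}. Simplifying $\lambda=b\binom{\kappa}{2}/\binom{n}{2}$ gives $2^{2r-2}-2^{r-1}-1$ and $2^{2r-3}-2^{r-2}-1$ respectively. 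The main obstacle is not any single computation but verifying the Assmus--Mattson hypothesis cleanly: one must confirm that the top weight $w_3=n$ is the only weight dropping out of $[1,n-2]$ (so $s=2$ rather than $3$) and that the two weights coinciding at $q=2$ are correctly merged, since these are exactly what make the inequality $s\le d-t$ tight; the subsequent $\lambda$ evaluations are then routine binomial simplifications.
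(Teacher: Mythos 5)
Your proposal is correct and follows exactly the route the paper takes: the paper's proof of this corollary is a one-line appeal to Theorem \ref{tem-wtCD1} (specialized to $q=2$, where the two middle weights of Table \ref{tab2} merge into the single weight $2^{2r-2}$ with combined frequency $2^{2r}-1$) together with the Assmus--Mattson theorem, and your $\lambda$-computations via $\lambda\binom{n}{2}=b\binom{\kappa}{2}$ check out. You have simply supplied the details (the verification that $s=2\le d-t$ because the all-one weight $n$ falls outside $[1,n-2]$, and the use of $A_4^{\perp}$ from the fifth Pless power moment) that the paper leaves implicit.
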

\begin{proof}
  The desired conclusions follow from Theorem \ref{tem-wtCD1} and the Assmus-Mattson Theorem \cite{H}.
\end{proof}

Then we have the following result.
\begin{theorem}\label{th-loc-q=2}
  Let $q=2$ and $m=2r$ with $r \geq 3$. Then $\overline{\cC_D}$ with $D$ in Equation $(\ref{eq-Dr})$ is a
  $$(2^{r-1}(2^r-1), m+1, 2^{r-1}(2^{r-1}-1), 2; 3)\text{- LRC}$$ and $\overline{\cC_D}^{\perp}$ is a $$(2^{r-1}(2^r-1),2^{r-1}(2^r-1)-m-1, 4, 2; 2^{r-1}(2^{r-1}-1)-1)\text{- LRC}.$$ In particular, $\overline{\cC_D}^{\perp}$ is $k$-optimal or almost $k$-optimal.
\end{theorem}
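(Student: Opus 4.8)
The plan is to read off the two localities from the design structure already established in Corollary \ref{col-design1}, and then to settle the optimality claim by evaluating the Cadambe--Mazumdar bound (Lemma \ref{lem-CMbound}). The parameters $[2^{r-1}(2^r-1), m+1, 2^{r-1}(2^{r-1}-1)]$ of $\overline{\cC_D}$ and $[2^{r-1}(2^r-1), 2^{r-1}(2^r-1)-m-1, 4]$ of $\overline{\cC_D}^{\perp}$ are exactly those recorded in Corollary \ref{col-design1}, so only the locality and optimality assertions require work.

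For the locality of $\overline{\cC_D}$, I would invoke Lemma \ref{lem-locality} with $\cC = \overline{\cC_D}$. Corollary \ref{col-design1} shows that the weight-$4$ codewords of $\overline{\cC_D}^{\perp}$ (where $d^{\perp}=4$) support a $2$-$(2^{r-1}(2^r-1), 4, 2^{2r-3}-2^{r-2}-1)$ design; since every $2$-design is in particular a $1$-$(n, d^{\perp}, \lambda^{\perp})$ design with $\lambda^{\perp}\geq 1$, Lemma \ref{lem-locality} yields locality $d^{\perp}-1 = 3$. Symmetrically, for $\overline{\cC_D}^{\perp}$ I would apply Lemma \ref{lem-locality} with $\cC = \overline{\cC_D}^{\perp}$, whose dual is $\overline{\cC_D}$ with minimum distance $2^{r-1}(2^{r-1}-1)$; Corollary \ref{col-design1} shows the minimum-weight codewords of $\overline{\cC_D}$ support a $2$-design, hence a $1$-design with $\lambda\geq 1$, so $\overline{\cC_D}^{\perp}$ has locality $2^{r-1}(2^{r-1}-1)-1$.

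For the optimality of $\overline{\cC_D}^{\perp}$, set $\ell := 2^{r-1}(2^{r-1}-1)-1$ (its locality), $n = 2^{r-1}(2^r-1)$ and $d=4$, and evaluate the bound of Lemma \ref{lem-CMbound} at $t=1$. Here $n-(\ell+1) = 2^{2r-2}$, and the largest binary distance-$4$ code of this length is the extended Hamming code, giving $k_{opt}^{(2)}(2^{2r-2},4) = 2^{2r-2}-2r+1$; hence the $t=1$ value is $V_1 := \ell + k_{opt}^{(2)}(2^{2r-2},4) = 2^{2r-1}-2^{r-1}-2r$. The true dimension is $k = n-(m+1) = 2^{2r-1}-2^{r-1}-2r-1 = V_1-1$. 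Because $k \leq \min_{t}[\,\cdots] \leq V_1 = k+1$, the Cadambe--Mazumdar value lies in $\{k, k+1\}$, so $\overline{\cC_D}^{\perp}$ is $k$-optimal when it equals $k$ and almost $k$-optimal when it equals $k+1$; in either case the stated claim follows without having to decide which alternative holds.

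The main obstacle is the exact determination of $k_{opt}^{(2)}(2^{2r-2},4)$, which hinges on the optimality of the extended Hamming code. I would justify this via the standard fact that a binary $[n,k,4]$ code satisfies $n \leq 2^{n-k-1}$ (puncture one coordinate to reduce to a distance-$3$ code and apply the Hamming/sphere-packing bound of Lemma \ref{sphere}); with $n = 2^{2r-2}$ this forces $k \leq 2^{2r-2}-2r+1$, a value matched by the extended Hamming code. Everything else is bookkeeping with the weight data already tabulated in Corollary \ref{col-design1}.
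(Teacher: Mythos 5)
Your proposal is correct and follows essentially the same route as the paper: both localities are read off from Corollary \ref{col-design1} via Lemma \ref{lem-locality} (using that a $2$-design is in particular a $1$-design), and the optimality claim is settled by evaluating the Cadambe--Mazumdar bound at $t=1$ together with the bound $k_{opt}^{(2)}(2^{2r-2},4)\leq 2^{2r-2}-2r+1$, which the paper obtains by applying the sphere-packing bound directly to the distance-$4$ code (numerically identical to your puncture-to-distance-$3$ argument, since $\lfloor(d-1)/2\rfloor=1$ in both cases). Your additional observation that the extended Hamming code attains this value is true but not needed, as only the upper bound enters the argument.
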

\begin{proof}
By Corollary \ref{col-design1} and Lemma \ref{lem-locality}, we derive that the locality of $\overline{\cC_D}$ is $3$ and the locality of $\overline{\cC_D}^{\perp}$ is $2^{r-1}(2^{r-1}-1)-1$. Now we prove $\overline{\cC_D}^{\perp}$ is an optimal or almost optimal LRC.
By Lemma \ref{lem-CMbound}, putting the parameters of the $(2^{r-1}(2^r-1),2^{r-1}(2^r-1)-2r-1, 4, 2; 2^{r-1}(2^{r-1}-1)-1)-$ LRC into the right-hand side of the Cadambe-Mazumdar bound in (\ref{eqn-CMbound}), we have
\begin{eqnarray*}
  k &\leq& \mathop{\min}_{t \in \mathbb{Z}^{+}} [rt+k_{opt}^{(q)}(n-t(r+1),d)]\\
  &\leq& \mathop{\min}_{t=1}[rt+k_{opt}^{(q)}(n-t(r+1),d)]\\
  &=& r+k_{opt}^{(q)}(n-(r+1),d) \\
   &=& 2^{r-1}(2^{r-1}-1)-1+k_{opt}^{(2)}(2^{2r-2},4).
\end{eqnarray*}
Thanks to the sphere-packing bound, we obtain
\begin{eqnarray*}
  k_{opt}^{(2)}(2^{2r-2},4) &\leq& \left\lfloor \log_2 \frac{2^{2^{2r-2}}}{\sum_{i=0}^{1}\tbinom{n}{i}}  \right\rfloor\\
  &=& \left\lfloor \log_2 \frac{2^{2^{2r-2}}}{1+2^{2r-2}}  \right\rfloor\\
  &=& \left\lfloor \log_2 (2^{2^{2r-2}}) - \log_2 (1+2^{2r-2})  \right\rfloor\\
  &=& 2^{2r-2}-2r+1,
\end{eqnarray*}
where the last equality holds as $2^{2r-2} < 1+2^{2r-2} < 2^{2r-1}$ by the sphere-packing bound. Thus, $$2^{r-1}(2^r-1)-2r-1=k \leq \mathop{\min}_{t \in \mathbb{Z}^{+}} [rt+k_{opt}^{(q)}(n-t(r+1),d)] \leq 2^{r-1}(2^r-1)-2r.$$
Then $$\mathop{\min}_{t \in \mathbb{Z}^{+}} [rt+k_{opt}^{(q)}(n-t(r+1),d)]=2^{r-1}(2^r-1)-2r \mbox{ or } 2^{r-1}(2^r-1)-2r-1$$ and $\overline{\cC_D}^{\perp}$ is optimal or almost optimal with respect to the Cadambe-Mazumdar bound.
\end{proof}

In the following, we will prove that $\overline{\cC_{D}}$ is an almost optimally or optimally extendable code if we choose a suitable generator matrix.
Let $I_{n,n}$ denote the identity of size $n$. Let $\gf_{q^m}^*=\langle\alpha\rangle$. Then $\{1,\alpha, \alpha^{2},\ldots, \alpha^{m-1}\}$ is a $\gf_q$-basis of $\gf_{q^m}$.
For the linear code $\overline{\cC_{D}}$ in Theorem \ref{tem-wtCD1}, it has a generator matrix $G$ by the proof of Theorem \ref{loc}.
Now we can obtain another generator matrix $G_1$ of $\overline{\cC_{D}}$ by an elementary row operation  on $G$, where
\begin{eqnarray}\label{matrix1}
G_1=\left[
\begin{array}{cccc}
1&1&\cdots&1 \\
 \tr_{q^m/q}(\alpha^{0}d_{1})& \tr_{q^m/q}(\alpha^{0}d_{2})& \cdots &\tr_{q^m/q}(\alpha^{0}d_{n}) \\
 \tr_{q^m/q}(\alpha^{1}d_{1})+1& \tr_{q^m/q}(\alpha^{1}d_{2})+1& \cdots &\tr_{q^m/q}(\alpha^{1}d_{n})+1 \\
 \vdots &\vdots &\ddots &\vdots \\
\tr_{q^m/q}(\alpha^{m-1}d_{1})& \tr_{q^m/q}(\alpha^{m-1}d_{2})& \cdots &\tr_{q^m/q}(\alpha^{m-1}d_{n}) \\
\end{array}\right],
\end{eqnarray}
where $D=\{d_1,d_2,\cdots,d_n\}$ is defined in Equation (\ref{eq-Dr}).

\begin{theorem}\label{th-LCD1}
Let $q = p^e > 2$, where $p$ is a prime and $e$ is a positive integer. Let $m, r$ be positive integers such that $m=2r$ and $r \geq 2$. Let $G_1$ be defined as above and $G_1'=[I_{m+1, m+1}: G_1]$. Then the linear code $\overline{\cC_{D}}'$ generated by matrix $G_1'$ is a linear code with parameters $$[q^{r-1}(q^r-q+1)+m+1, m+1, d'\geq q^{r-1}(q^{r}-q^{r-1}-q+1)+1].$$ Besides, $\overline{\cC_{D}}'^{\perp}$ has parameters $[q^{r-1}(q^r-q+1)+m+1, q^{r-1}(q^r-q+1), 2 \leq d'^{\perp} \leq 3]$.
\end{theorem}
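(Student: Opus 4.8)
The plan is to read off all parameters of $\overline{\cC_{D}}'$ directly from the block structure of $G_1'=[I_{m+1,m+1}:G_1]$, using only the parameters of $\overline{\cC_{D}}$ already established in Theorems \ref{loc} and \ref{tem-wtCD1}. First I would note that every codeword of $\overline{\cC_{D}}'$ is of the form $\bx G_1'=(\bx,\bx G_1)$ with $\bx\in\gf_q^{m+1}$. The identity block shows the $m+1$ rows of $G_1'$ are $\gf_q$-linearly independent, so $\dim\overline{\cC_{D}}'=m+1$, and the length is plainly $q^{r-1}(q^r-q+1)+m+1$. Since $G_1$ is a generator matrix of $\overline{\cC_{D}}$ it has full row rank $m+1$, whence $\bx G_1=\bzero$ forces $\bx=\bzero$; thus for every nonzero $\bx$ we have $\wt(\bx G_1)\ge d$, where $d=q^{r-1}(q^{r}-q^{r-1}-q+1)$ is the minimum distance of $\overline{\cC_{D}}$ from Theorem \ref{tem-wtCD1}, while $\wt(\bx)\ge 1$. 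Adding these gives $\wt(\bx G_1')=\wt(\bx)+\wt(\bx G_1)\ge d+1$, which is exactly the asserted bound $d'\ge q^{r-1}(q^{r}-q^{r-1}-q+1)+1$.

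For the dual, rank--nullity immediately gives length $q^{r-1}(q^r-q+1)+m+1$ and dimension $q^{r-1}(q^r-q+1)$. To bound $d'^{\perp}$ I would invoke the standard description of the dual distance as the least number of $\gf_q$-linearly dependent columns of a generator matrix of $\overline{\cC_{D}}'$, applied to $G_1'$. The first $m+1$ columns of $G_1'$ are the standard basis vectors and the remaining $n$ columns are precisely the columns of $G_1$, each of which has top entry $1$ coming from the all-one row; hence no column of $G_1'$ is zero and $d'^{\perp}\ge 2$. For the upper bound I would use Theorem \ref{loc}, which says the dual distance of $\overline{\cC_{D}}$ equals $3$, i.e. some three columns of $G$ are $\gf_q$-linearly dependent. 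Since left multiplication by an invertible matrix (the elementary row operation turning $G$ into $G_1$) preserves all linear-dependence relations among columns, the corresponding three columns of $G_1$ are dependent, and they occur unchanged among the last $n$ columns of $G_1'$; therefore $d'^{\perp}\le 3$. Combining the two estimates yields $2\le d'^{\perp}\le 3$.

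Since $d^{\perp}(\overline{\cC_{D}})=3$, this dichotomy is exactly what is needed: $d'^{\perp}=3$ makes $\overline{\cC_{D}}$ optimally extendable, while $d'^{\perp}=2$ makes it almost optimally extendable. I do not expect a serious obstacle here, as the argument is bookkeeping on the block structure of $G_1'$; the only steps that genuinely need care are the use of full row rank of $G_1$ to rule out $\bx G_1=\bzero$ in the distance estimate, and the observation that the three-column dependence furnished by Theorem \ref{loc} is a property of the columns that survives both the row operation producing $G_1$ and the adjunction of the identity block. Both rest on the invariance of column-dependence relations under left multiplication by invertible matrices.
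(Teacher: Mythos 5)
Your proposal is correct and follows essentially the same route as the paper: the weight of a nonzero codeword $(\bx,\bx G_1)$ is bounded below by $1+d$ using the full row rank of $G_1$, and the dual distance is squeezed between $2$ (no zero column in $G_1'$) and $3$ (a three-column dependence inherited from $d(\overline{\cC_D}^{\perp})=3$, which survives the row operation and the adjunction of the identity block). Your explicit remarks on why $\bx G_1\neq\bzero$ and why column-dependence relations are preserved under left multiplication by an invertible matrix make precise two steps the paper treats as immediate, but the argument is the same.
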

\begin{proof}
 Obviously, the length of $\overline{\cC_{D}}'$ is $m+1+q^{r-1}(q^r-q+1)$ and the dimension of $\overline{\cC_{D}}'$ is $m+1$. Let $d$, $d'$, $d'^{\perp}$ respectively represent the minimum distance of $\overline{\cC_{D}}$, $\overline{\cC_{D}}'$ and $\overline{\cC_{D}}'^{\perp}$. It is clear that the minimum distance $d' \geq d$, where $d=q^{r-1}(q^{r}-q^{r-1}-q+1)$. For a codeword $\ba'$ in $\overline{\cC_{D}}'$, we have
\begin{eqnarray*}
\ba'&=&(l,a_0,a_1 \cdots, a_{m-1})G_1'\\&=&(l, a_0,a_1, \cdots, a_{m-1}, \sum_{i=0}^{m-1}a_i\tr_{q^m/q}(\alpha^{i}d_{1})+l+a_1, \cdots, \sum_{i=0}^{m-1}a_i\tr_{q^m/q}(\alpha^{i}d_{n}))+l+a_1),
\end{eqnarray*}
where $l, a_i \in \gf_q$ for all $0 \leq i \leq m-1$.  Assume that $\ba'$ has the minimum distance $d'$. If $d'=d$, then the first $m+1$ locations of $\ba'$ are all zero. This means that $l=a_i=0$ $(0 \leq i \leq m-1)$ and $\ba'=\mathbf{0}$, which contradicts with $\text{wt}(\ba')=d'=d$. Then $d' \geq d+1$. Besides, since any column in $G_1'$ is nonzero and the minimum distance of $\overline{\cC_{D}}^\perp$ is $3$, we have $2 \leq d'^{\perp} \leq 3$. The proof is completed.
\end{proof}

\begin{corollary}\label{col-LCD1}
Let $q = p^e$, where $p$ is a prime and $e$ is a positive integer. Let $m, r$ be positive integers such that $m=2r$ and $r \geq 2$. Let $G_1$ be defined as above and $G_1'=[I_{m+1, m+1}: G_1]$. If $p \nmid r$, then the linear code $\overline{\cC_{D}}'$ generated by $G_1'$ has parameters  $[q^{r-1}(q^r-q+1)+m+1, m+1, q^{r-1}(q^{r}-q^{r-1}-q+1)+1].$
\end{corollary}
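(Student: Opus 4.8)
The plan is to upgrade the lower bound $d' \ge d+1$ already furnished by Theorem \ref{th-LCD1} (where $d = q^{r-1}(q^r-q^{r-1}-q+1)$ is the minimum distance of $\overline{\cC_D}$) to the equality $d' = d+1$ by exhibiting a single codeword of $\overline{\cC_D}'$ of weight exactly $d+1$. The length $q^{r-1}(q^r-q+1)+m+1$ and dimension $m+1$ are inherited verbatim from Theorem \ref{th-LCD1}, so only the minimum distance requires new work, and the hypothesis $p\nmid r$ will be used solely to certify the existence of the desired codeword.

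First I would analyze the shape of a codeword of $\overline{\cC_D}'$ generated by $G_1'=[I_{m+1,m+1}:G_1]$. Writing a message as $(l,a_0,\ldots,a_{m-1})$ and setting $b=\sum_{i=0}^{m-1}a_i\alpha^i$ and $c=l+a_1$, the codeword $\ba'$ splits as the concatenation of its identity block $(l,a_0,\ldots,a_{m-1})$ and its $G_1$-block, which by the computation in the proof of Theorem \ref{th-LCD1} is exactly the codeword $\bc_{(b,c)}$ of $\overline{\cC_D}$. Since $\overline{\cC_D}$ has dimension $m+1$, its encoding map is injective, so a nonzero $\ba'$ forces $\bc_{(b,c)}\neq\bzero$, whence its $G_1$-block has weight at least $d$; as the identity block then has weight at least $1$, total weight $d+1$ is achievable only when the identity block has weight exactly $1$ and the $G_1$-block is a minimum-weight codeword of $\overline{\cC_D}$. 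Thus the task reduces to producing a minimum-weight codeword $\bc_{(b,c)}$ whose preimage has identity block of Hamming weight $1$.

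The crux, and the point where $p\nmid r$ enters, is to locate such a preimage. Among the weight-$1$ identity patterns, the choice with all $a_i=0$ gives $b=0$ and hence a codeword of weight $q^{m-1}-q^{r-1}(q-1)\neq d$, so it is useless; the productive choice is a scalar $b\in\gf_q^*$ with $c=0$, which has $a_0\ne 0$, $a_1=\cdots=a_{m-1}=l=0$, giving identity block of weight $1$. Concretely I would take $b=1$ and $c=0$, so the identity block is $(0,1,0,\ldots,0)$. By Theorem \ref{tem-wtCD1}, $\bc_{(1,0)}$ has weight $d$ precisely when $\tr_{q^r/q}(1^{q^r+1})=\tr_{q^r/q}(1)\neq 0$ in $\gf_q$; but $\tr_{q^r/q}(1)=r\cdot 1$, which is nonzero exactly when $p\nmid r$. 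Hence, under this hypothesis, the codeword $\ba'$ associated with $(0,1,0,\ldots,0)$ has weight $1+d=q^{r-1}(q^r-q^{r-1}-q+1)+1$, giving $d'\le d+1$ and therefore $d'=d+1$. I expect the only real obstacle to be the bookkeeping that pins down which minimum-weight codewords of $\overline{\cC_D}$ admit a weight-$1$ preimage under the particular matrix $G_1$; once one sees that the scalars $b\in\gf_q^*$ with $c=0$ do, the identity $\tr_{q^r/q}(1)=r$ makes the condition $p\nmid r$ both the natural and the necessary hypothesis for this construction.
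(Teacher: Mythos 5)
Your proposal is correct and follows essentially the same route as the paper: both take the lower bound $d'\geq d+1$ from Theorem \ref{th-LCD1} and then exhibit the codeword coming from the message with $a_0\in\gf_q^*$ and all other coordinates zero, so that $b=a_0$, $c=0$, and $\tr_{q^r/q}(a_0^{q^r+1})=ra_0^2\neq 0$ under the hypothesis $p\nmid r$, which by Theorem \ref{tem-wtCD1} makes the $G_1$-block a minimum-weight codeword of $\overline{\cC_D}$. The only differences are cosmetic (you fix $a_0=1$ and add some unneeded discussion of which weight-one identity blocks can work).
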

\begin{proof}
By Theorem \ref{th-LCD1}, we obtain $d' \geq q^{r-1}(q^{r}-q^{r-1}-q+1)+1$. In the following, we will prove that $d' = q^{r-1}(q^{r}-q^{r-1}-q+1)+1$. Let $a_0 \in \gf_q^*$ and $l=a_1=a_2=a_3=\cdots=a_{m-1}=0$. Then $\ba'=(0,a_0,0, \cdots, 0, a_0\tr_{q^m/q}(d_1), \cdots, a_0\tr_{q^m/q}(d_n))\in \overline{\cC_{D}}'$. By Theorem \ref{tem-wtCD1}, we deduce that $\text{wt}(\ba')=q^{r-1}(q^{r}-q^{r-1}-q+1)+1$ as $\tr_{q^r/q}(a_0^{q^r+1})=\tr_{q^r/q}(a_0^{2})=ra_0^2 \neq 0$. Then there exist codewords in $\overline{\cC_{D}}'$ with Hamming weight $q^{r-1}(q^{r}-q^{r-1}-q+1)+1$. The desired conclusion follows.
\end{proof}

\begin{remark}\label{rem1}
 Note that the linear code generated by the matrix $[I_{m+1, m+1}: G]$ is dependent on the generator matrix $G$ of $\overline{\cC_{D}}$. By Magma, we find that
 different $G$ may yield inequivalent codes.
 For example, if we select
\begin{eqnarray*}
G=\left[
\begin{array}{cccc}
1&1&\cdots&1 \\
\tr_{q^m/q}(\alpha^{0}d_{1})& \tr_{q^m/q}(\alpha^{0}d_{2})& \cdots &\tr_{q^m/q}(\alpha^{0}d_{n}) \\
\tr_{q^m/q}(\alpha^{1}d_{1})& \tr_{q^m/q}(\alpha^{1}d_{2})& \cdots &\tr_{q^m/q}(\alpha^{1}d_{n}) \\
 \vdots &\vdots & &\vdots \\
\tr_{q^m/q}(\alpha^{m-1}d_{1})& \tr_{q^m/q}(\alpha^{m-1}d_{2})& \cdots &\tr_{q^m/q}(\alpha^{m-1}d_{n}) \\
\end{array}\right],
\end{eqnarray*}
then $\widetilde{\overline{\cC_{D}}'}$ generated by $[I_{m+1,m+1}:G]$ has different weight distribution from $\overline{\cC_{D}}'$ by Magma. Particularly, the minimum distance of $\widetilde{\overline{\cC_{D}}'}^\perp$ is $2$, while the minimum distance of $\overline{\cC_{D}}'^{\perp}$ is $3$ in most cases. It is interesting to find suitable generator matrix $G$ such that the linear code generated by $[I: G]$ and its dual have better parameters.
\end{remark}

The following theorem presents a family of (almost) optimally extendable self-orthogonal codes.

\begin{theorem}\label{extend1}
  Let $\overline{\cC_{D}}$ be the self-orthogonal code in Theorem \ref{tem-wtCD1} and $\overline{\cC_{D}}'$ be defined as in Theorem \ref{th-LCD1}. If $q > 2$ and $r \geq 2$, then $\overline{\cC_{D}}$ is an almost optimally or optimally extendable self-orthogonal code.
\end{theorem}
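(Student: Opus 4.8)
The plan is to assemble the statement directly from the three preceding results, since no fresh computation is required. First I would dispose of the self-orthogonality claim: the hypothesis $q>2$ forces $(q,r)\neq(2,2)$, so together with $r\geq 2$ the hypotheses of Theorem \ref{tem-so1} are met verbatim, and that theorem already shows $\overline{\cC_{D}}$ is self-orthogonal. It then remains only to verify the extendability property, which is purely a comparison of two dual minimum distances.

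Recall from the introduction that a code $\cD$ of dimension $k$ is \emph{optimally extendable} when $d(\cD'^{\perp})=d(\cD^{\perp})$ and \emph{almost optimally extendable} when $d(\cD'^{\perp})=d(\cD^{\perp})-1$, where $\cD'$ is obtained by appending a $k\times k$ identity block to a generator matrix of $\cD$. Here $k=m+1$ and the relevant extended code is $\overline{\cC_{D}}'$, generated by $[I_{m+1,m+1}:G_1]$ with $G_1$ as in $(\ref{matrix1})$. Since $[I:G_1]$ and $[G_1:I]$ generate permutation-equivalent codes, as noted just before this subsection, their duals have equal minimum distance, so it is legitimate to read $d(\overline{\cC_{D}}'^{\perp})$ off Theorem \ref{th-LCD1}. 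The two facts I would combine are then: Theorem \ref{tem-wtCD1} gives $d(\overline{\cC_{D}}^{\perp})=3$ for $q>2$, and Theorem \ref{th-LCD1} gives $2\leq d(\overline{\cC_{D}}'^{\perp})\leq 3$. Consequently $d(\overline{\cC_{D}}'^{\perp})\in\{2,3\}=\{d(\overline{\cC_{D}}^{\perp})-1,\ d(\overline{\cC_{D}}^{\perp})\}$, which is exactly the dichotomy in the definition: $\overline{\cC_{D}}$ is optimally extendable when $d(\overline{\cC_{D}}'^{\perp})=3$ and almost optimally extendable when $d(\overline{\cC_{D}}'^{\perp})=2$, and combined with the first paragraph this finishes the proof.

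There is no genuine obstacle in this argument; the entire substance sits upstream, namely in having pinned $d(\overline{\cC_{D}}^{\perp})=3$ in Theorem \ref{tem-wtCD1} and in the two-sided bound $2\leq d'^{\perp}\leq 3$ of Theorem \ref{th-LCD1}. The only place where more care would ever be needed is if one wanted the sharper statement deciding exactly when $d(\overline{\cC_{D}}'^{\perp})=3$ rather than $2$ for the specific matrix $G_1$; as Remark \ref{rem1} already signals, that value depends delicately on the chosen generator matrix, which is precisely why the theorem is phrased as the safe disjunction ``almost optimally or optimally extendable'' rather than committing to one alternative.
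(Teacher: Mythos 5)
Your proposal is correct and follows essentially the same route as the paper: the paper's proof likewise just combines $d(\overline{\cC_{D}}^{\perp})=3$ from Theorem \ref{tem-wtCD1} with $2\leq d(\overline{\cC_{D}}'^{\perp})\leq 3$ from Theorem \ref{th-LCD1} to get $0\leq d(\overline{\cC_{D}}^{\perp})-d(\overline{\cC_{D}}'^{\perp})\leq 1$, with self-orthogonality supplied upstream by Theorem \ref{tem-so1}. Your added remarks on $(q,r)\neq(2,2)$ and the permutation equivalence of $[I:G_1]$ and $[G_1:I]$ are sound bits of bookkeeping that the paper leaves implicit.
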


\begin{proof}
By Theorems \ref{tem-wtCD1} and \ref{th-LCD1}, we have $0 \leq d(\overline{\cC_{D}}^{\perp})-d(\overline{\cC_{D}}'^{\perp}) \leq 1$. Then $\overline{\cC_{D}}$ is an almost optimally or optimally extendable self-orthogonal code.
\end{proof}

By Magma, we find that $\overline{\cC_{D}}'^\perp $ has minimum distance $3$ in most cases. We give two examples in the following.
\begin{example}\label{exa-1}
Let $q=3, r=2$. By Magma, then the ternary linear codes $\overline{\cC_{D}}$, $\overline{\cC_{D}}'$, $\overline{\cC_{D}}^{\perp}$, $\overline{\cC_{D}}'^{\perp}$ have parameters $$[21, 5, 12], [26,5,13], [21, 16, 3], [26, 21, 3],$$ respectively.
Besides, $\overline{\cC_{D}}$ is an optimally extendable self-orthogonal code, and all of $\overline{\cC_{D}}$, $\overline{\cC_{D}}^{\perp}$ and $\overline{\cC_{D}}'^{\perp}$ are optimal codes by the Code Tables in \cite{Codetable}.
\end{example}

\begin{example}\label{exa-2}
Let $q=3, r=3$. By Magma, then the ternary linear codes $\overline{\cC_{D}}$, $\overline{\cC_{D}}'$, $\overline{\cC_{D}}^{\perp}$, $\overline{\cC_{D}}'^{\perp}$ have parameters $$[225, 7, 144], [232, 7, 145], [225, 218, 3], [232, 225, 3],$$ respectively.
Besides, $\overline{\cC_{D}}$ is an optimally extendable self-orthogonal code, and all of $\overline{\cC_{D}}$, $\overline{\cC_{D}}^{\perp}$ and $\overline{\cC_{D}}'^{\perp}$ are optimal codes by the Code Tables in \cite{Codetable}.
\end{example}

\subsection{The case that $N=2$}
In this subsection, let $q$ be an odd prime power, $m$ be a positive integer such that $m \geq 3$. Let $N=2$ and the defining set be
\begin{eqnarray}\label{eq-D2}
  D=\{x \in \gf_{q^m}: \tr_{q^m/q}(x^{2})=0\}.
\end{eqnarray}
We first prove that $\overline{\cC_D}$ is self-orthogonal and determine its parameters and weight distribution. Then we will prove that $\overline{\cC_D}$ is also (almost) optimally extendable. By Magma, if $m$ is even, we find that the parameters and weight distribution of $\overline{\cC_D}$ are the same whether $N=2$ or $N=q^r+1$ in most cases. Hence we only consider the case for odd $m$ in this subsection.

\begin{lemma}\label{lem-Ns1}
Let $s \in \gf_q^*$ and $b \in \gf_{q^m}^*$, where $m$ is an odd integer with $m \geq 3$ and $q=p^e$ with odd prime $p$. Let $N_s$ denote the number of solutions in $\gf_{q^m}$ of the following system of equations as
\begin{eqnarray*}\label{eq-01}
\left\{
\begin{array}{l}
\tr_{q^m/q}\left(bx\right)=s, \\
\tr_{q^r/q}(x^{2})=0.
\end{array}
\right.
\end{eqnarray*}
Then
\begin{eqnarray*}
N_s=
\begin{cases}
q^{m-2}     &\text{if $\tr_{q^r/q}\left(b^{2}\right)=0$,}\\
q^{m-2}-(-1)^{l}\eta(\tr_{q^m/q}(b^2))q^{\frac{m-3}{2}}     &\text{if $\tr_{q^r/q}\left(b^{2}\right)\neq0$,}
\end{cases}
\end{eqnarray*}
where $l=\frac{e(p-1)(m+3)}{4}.$
\end{lemma}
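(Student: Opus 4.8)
The plan is to mirror the character-sum computation used for Lemma \ref{lem-Ns}, replacing the norm-type sum by a genuine quadratic Gaussian sum. First I would invoke the orthogonal relation of additive characters to write
\begin{eqnarray*}
N_s=\frac{1}{q^2}\sum_{u\in\gf_q}\sum_{v\in\gf_q}\phi_1(-us)\sum_{x\in\gf_{q^m}}\chi_1\left(ubx+vx^2\right),
\end{eqnarray*}
using $\tr_{q/p}(\tr_{q^m/q}(\cdot))=\tr_{q^m/p}(\cdot)$. The term $u=v=0$ contributes $q^m$; the term $v=0,\,u\neq0$ vanishes because $\sum_{x}\chi_1(ubx)=0$ for $ub\neq0$; and the term $u=0,\,v\neq0$ equals $G(\eta',\chi_1)\sum_{v\in\gf_q^*}\eta'(v)$ after applying the Weil evaluation (Lemma \ref{lem-weil}) over $\gf_{q^m}$ to $\sum_x\chi_1(vx^2)$. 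Here the hypothesis that $m$ is odd enters through the identity $\eta'(v)=\eta(v)$ for $v\in\gf_q^*$ (which follows from $\eta'(v)=\eta(v)^m$ and $\eta(v)=\pm1$), so this term is $0$ since $\sum_{v\in\gf_q^*}\eta(v)=0$. Hence $N_s=q^{m-2}+S_3/q^2$, where $S_3$ denotes the full double sum over $u,v\in\gf_q^*$.

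To evaluate $S_3$ I would apply Lemma \ref{lem-weil} over $\gf_{q^m}$ to the inner sum $\sum_x\chi_1(vx^2+ubx)=\chi_1(-u^2b^2/(4v))\,\eta'(v)\,G(\eta',\chi_1)$, and then reduce the argument via $\tr_{q^m/q}(-u^2b^2/(4v))=-\tfrac{u^2}{4v}\tr_{q^m/q}(b^2)$ since $u,v\in\gf_q$. Writing $A:=\tr_{q^m/q}(b^2)$, this gives
\begin{eqnarray*}
S_3=G(\eta',\chi_1)\sum_{v\in\gf_q^*}\eta(v)\sum_{u\in\gf_q^*}\phi_1\left(-\tfrac{A}{4v}u^2-su\right).
\end{eqnarray*}
When $A=0$ the inner $u$-sum is $-1$ and $S_3=-G(\eta',\chi_1)\sum_{v\in\gf_q^*}\eta(v)=0$, yielding $N_s=q^{m-2}$. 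When $A\neq0$ I would complete the inner $u$-sum to all of $\gf_q$ (subtracting the $u=0$ term), apply Lemma \ref{lem-weil} over $\gf_q$, and use $\eta(-A/(4v))=\eta(-1)\eta(A)\eta(v)$ together with $\eta(v)^2=1$ and $\sum_{v\in\gf_q^*}\phi_1(s^2v/A)=-1$. After the $\sum_{v}\eta(v)=0$ terms drop out, everything collapses to
\begin{eqnarray*}
S_3=-\eta(-1)\,\eta(A)\,G(\eta,\phi_1)\,G(\eta',\chi_1).
\end{eqnarray*}

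The last step is to insert the explicit quadratic Gaussian sums from Lemma \ref{quadGuasssum1}: $G(\eta,\phi_1)$ over $\gf_{p^e}$ and $G(\eta',\chi_1)$ over $\gf_{p^{em}}$ multiply to $(-1)^{e(m+1)}(\sqrt{-1})^{((p-1)/2)^2e(m+1)}q^{(m+1)/2}$. Dividing $S_3$ by $q^2$ then produces $-\eta(A)q^{(m-3)/2}$ times a scalar sign, and the proof reduces to verifying that
\begin{eqnarray*}
\eta(-1)(-1)^{e(m+1)}(\sqrt{-1})^{((p-1)/2)^2e(m+1)}=(-1)^l,\qquad l=\frac{e(p-1)(m+3)}{4}.
\end{eqnarray*}
Using $m+1$ even (so $(-1)^{e(m+1)}=1$ and the power of $\sqrt{-1}$ becomes $(-1)^{((p-1)/2)^2e(m+1)/2}$) together with $\eta(-1)=(-1)^{e(p-1)/2}$, this is an elementary parity check modulo $2$. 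I expect this final sign-chasing, rather than any of the character-sum manipulations, to be the main obstacle: one must carefully track the parities of $e$, $(p-1)/2$ and $(m+1)/2$ and confirm that they reassemble into the single exponent $l$. The oddness of $m$ is essential both here and in the earlier reduction $\eta'(v)=\eta(v)$, which is precisely why this subsection is restricted to odd $m$.
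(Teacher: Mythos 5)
Your proposal is correct and follows essentially the same route as the paper's proof: the same orthogonality decomposition into $q^m+S_1+S_2+S_3$, the same use of Lemma \ref{lem-weil} to evaluate the quadratic exponential sums, and the same reduction of the final sign to the Gauss-sum product $\eta(-1)G(\eta,\phi_1)G(\eta',\chi_1)=(-1)^lq^{(m+1)/2}$. The only cosmetic difference is that you collapse the double sum by completing the square in $u$ over $\gf_q$, whereas the paper substitutes in the $v$-variable to extract $G(\eta,\phi_1)$ directly; both land on $S_3=-\eta(-1)\eta(\tr_{q^m/q}(b^2))G(\eta,\phi_1)G(\eta',\chi_1)$, and your concluding parity check indeed verifies.
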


\begin{proof}
By the orthogonal relation of additive characters, we have
\begin{eqnarray}\label{eq-Ns1}
N_s &=&| \{x\in\gf_{q^m}: \tr_{q^m/q}\left(bx\right)=s, \tr_{q^m/q}(x^{2})=0\}|\nonumber\\
&=& \frac{1}{q^2}\sum_{u\in\gf_q}\sum_{v\in\gf_q}\sum_{x\in\gf_{q^m}}\zeta_p^{\tr_{q/p}\left(\tr_{q^m/q}\left(ubx\right)-us\right)}
\zeta_p^{\tr_{q/p}\left(\tr_{q^m/q}\left(vx^{2}\right)\right)}\nonumber\\
&=&\frac{1}{q^2}\left(\sum_{u\in\gf_q^*}\sum_{x\in\gf_{q^m}}\chi_1\left(ubx\right)\phi_1\left(-us\right) +
\sum_{v\in\gf_q^*}\sum_{x\in\gf_{q^m}}\chi_1(vx^2) + \right.\nonumber\\
&&\left.\sum_{v\in\gf_q^*}\sum_{u\in\gf_q^*}\phi_1(-us)\sum_{x\in\gf_{q^m}}\chi_1(vx^2+ubx)+ q^m\right).
\end{eqnarray}
By the orthogonal relation of additive characters, we have
\begin{eqnarray}\label{eq-S1}
S_1:=\sum_{u\in\gf_q^*}\sum_{x\in\gf_{q^m}}\chi_1\left(ubx\right)\phi_1\left(-us\right)
=\sum_{u\in\gf_q^*}\phi_1\left(-us\right)\sum_{x\in\gf_{q^m}}\chi_1\left(ubx\right)=0.
\end{eqnarray}
By Lemma \ref{lem-weil}
\begin{eqnarray}\label{eq-S2}
S_2:=\sum_{v\in\gf_q^*}\sum_{x\in\gf_{q^m}}\chi_1(vx^2)
=G(\eta' , \chi_1)\sum_{v\in\gf_q^*}\eta'(v)
=G(\eta' , \chi_1)\sum_{v\in\gf_q^*}\eta(v)=0,
\end{eqnarray}
and
\begin{eqnarray}\label{eq-S3}
S_3&:=&\sum_{v\in\gf_q^*}\sum_{u\in\gf_q^*}\phi_1(-us)\sum_{x\in\gf_{q^m}}\chi_1(vx^2+ubx)\nonumber\\
&=&\sum_{u\in\gf_q^*}\phi_1(-us)\sum_{v\in\gf_q^*}\chi_1\left(-\frac{u^2b^2}{4v}\right)\eta'(v)G(\eta', \chi_1)\nonumber\\
&=&G(\eta', \chi_1)\sum_{u\in\gf_q^*}\phi_1(-us)\sum_{v\in\gf_q^*}\phi_1\left(-\frac{u^2\tr_{q^m/q}(b^2)}{4v}\right)\eta(v)\nonumber\\
&=&\left\{\begin{array}{lll}
G(\eta', \chi_1)\sum_{u\in\gf_q^*}\phi_1(-us)\sum_{v\in\gf_q^*}\eta(v) \qquad\qquad\qquad\qquad\qquad\qquad  \mbox{if $\tr_{q^r/q}\left(b^{2}\right)=0$}\\
G(\eta', \chi_1)\sum_{u\in\gf_q^*}\phi_1(-us)\sum_{v\in\gf_q^*}\phi_1\left(-\frac{u^2\tr_{q^m/q}(b^2)}{4v}\right)\eta\left(-\frac{u^2\tr_{q^m/q}(b^2)}{4v}\right)\eta(-\tr_{q^m/q}(b^2))  \\   \qquad\qquad\qquad\qquad\qquad\qquad\qquad\qquad\qquad\qquad\qquad\qquad\qquad\quad     \mbox{if $\tr_{q^r/q}\left(b^{2}\right)\neq0$}
\end{array} \right. \nonumber\\
&=&\begin{cases}
     0 & \mbox{if $\tr_{q^r/q}\left(b^{2}\right)=0$} \\
     -G(\eta',\chi_1)G(\eta, \phi_1)\eta(-\tr_{q^m/q}(b^2)) & \mbox{if $\tr_{q^r/q}\left(b^{2}\right)\neq0$}
   \end{cases}\nonumber\\
&=&\begin{cases}
     0 & \mbox{if $\tr_{q^r/q}\left(b^{2}\right)=0$,} \\
     -(-1)^l\eta(\tr_{q^m/q}(b^2))q^{\frac{m+1}{2}} & \mbox{if $\tr_{q^r/q}\left(b^{2}\right)\neq0$},
   \end{cases}
\end{eqnarray}
where the third equality holds as $\eta'(v)=\eta(v)$ for $v \in \gf_q$ and odd $m$, and the last equality holds as
\begin{eqnarray*}
  G(\eta',\chi_1)G(\eta, \phi_1)\eta(-1)=(-1)^lq^{\frac{m+1}{2}}
\end{eqnarray*}
by Lemma \ref{quadGuasssum1}.

Substituting Equations (\ref{eq-S1}), (\ref{eq-S2}) and (\ref{eq-S3}) into Equation (\ref{eq-Ns1}) yields the desired conclusion.
\end{proof}

\begin{theorem}\label{tem-wtCDN2}
Let $q = p^e$, where $p$ is an odd prime and $e$ is a positive integer. Let $m$ be an odd integer with $m \geq 3$. Denote by $l=\frac{e(p-1)(m+3)}{4}.$ Then the augmented code $\overline{\cC_{D}}$ with the defining set $D=\{x \in \gf_{q^m} : \tr_{q^m/q}(x^2)=0\}$ has parameters $[q^{m-1}, m+1, (q-1)q^{\frac{m-3}{2}}(q^{\frac{m-1}{2}}-1)]$ and its weight distribution is listed in Table \ref{tab-2}. In particular, $\overline{\cC_{D}}$ is a family of self-orthogonal codes for $m > 3$.  Besides, $\overline{\cC_{D}}^{\perp}$ has parameters $[q^{m-1}, q^{m-1}-m-1, 3]$ and is at least almost optimal according to the sphere-packing bound.
\begin{table}[h!]
\begin{center}
\caption{The weight distribution of $\overline{\cC_{D}}$ in Theorem \ref{tem-wtCDN2}.}\label{tab-2}
\begin{tabular}{@{}ll@{}}
\toprule
Weight & Frequency  \\
\midrule
$0$ & $1$ \\
$q^{m-1}$ &  $q-1$ \\
$(q-1)q^{m-2}$ &  $q(q^{m-1}-1)$ \\
$(q-1)q^{m-2}+(-1)^{l}q^{\frac{m-3}{2}}$ &  $\frac{(q-1)^2(q^{m-1}+(-1)^{l}q^{\frac{m-1}{2}})}{2}$ \\
$(q-1)q^{m-2}-(-1)^{l}q^{\frac{m-3}{2}}$ &  $\frac{(q-1)^2(q^{m-1}+(-1)^{l+1}q^{\frac{m-1}{2}})}{2}$ \\
$(q-1)q^{m-2}-(-1)^{l}(q-1)q^{\frac{m-3}{2}}$ & $\frac{(q-1)(q^{m-1}+(-1)^{l}q^{\frac{m-1}{2}})}{2}$ \\
$(q-1)q^{m-2}+(-1)^{l}(q-1)q^{\frac{m-3}{2}}$ & $\frac{(q-1)(q^{m-1}+(-1)^{l+1}q^{\frac{m-1}{2}})}{2}$\\
\bottomrule
\end{tabular}
\end{center}
\end{table}
\end{theorem}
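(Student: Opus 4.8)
The plan is to compute the weight of each codeword $\bc_{(b,c)}$ through $\wt(\bc_{(b,c)}) = n - N_{(b,c)}$ with $N_{(b,c)} = |\{x \in D : \tr_{q^m/q}(bx)+c = 0\}|$, and then to tally the frequencies. First I would pin down the length by writing $n = |D| = \frac{1}{q}\sum_{v \in \gf_q}\sum_{x \in \gf_{q^m}}\chi_1(vx^2)$; applying Lemma \ref{lem-weil} together with the identity $\eta'(v) = \eta(v)$ for $v \in \gf_q$ (valid because $m$ is odd), the nontrivial inner sums collapse since $\sum_{v \in \gf_q^*}\eta(v) = 0$, leaving $n = q^{m-1}$. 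The codewords then split naturally: $b = 0$ contributes weight $0$ (for $c=0$) and weight $q^{m-1}$ for each of the $q-1$ values $c \neq 0$, while the remaining analysis concerns $b \neq 0$ with either $c \neq 0$ or $c = 0$.

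For $b \neq 0$ and $c \neq 0$ I would invoke Lemma \ref{lem-Ns1} with $s = -c$, giving $N_{(b,c)} = q^{m-2}$ when $\tr_{q^m/q}(b^2) = 0$ and $N_{(b,c)} = q^{m-2} - (-1)^l\eta(\tr_{q^m/q}(b^2))q^{(m-3)/2}$ otherwise; this is independent of $c$ because $\sum_{u \in \gf_q^*}\phi_1(-us) = -1$ for every $s \neq 0$. For $b \neq 0$ and $c = 0$ the count $N_{(b,0)}$ is exactly the weight data of $\cC_{D \setminus \{0\}}$ recorded in \cite{D4}; alternatively I would reproduce it by the same orthogonality-plus-Weil-sum computation, obtaining $\wt(\bc_{(b,0)}) = (q-1)q^{m-2} - (q-1)(-1)^l\eta(\tr_{q^m/q}(b^2))q^{(m-3)/2}$, which differs from the $c\neq0$ perturbation by the factor $q-1$. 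The crucial enumeration step is the distribution of $\tr_{q^m/q}(b^2)$: a Gauss-sum evaluation via Lemma \ref{lem-weil} and Lemma \ref{quadGuasssum1} yields $|\{b \in \gf_{q^m} : \tr_{q^m/q}(b^2) = \gamma\}| = q^{m-1} + (-1)^l\eta(\gamma)q^{(m-1)/2}$ for $\gamma \in \gf_q^*$, with value $q^{m-1}$ at $\gamma = 0$. Summing over squares and nonsquares gives $n_{\pm} = \tfrac{q-1}{2}\bigl(q^{m-1} \pm (-1)^l q^{(m-1)/2}\bigr)$ elements in each class, and multiplying by the $q-1$ choices of $c$ (or by $1$ in the $c=0$ case) assembles Table \ref{tab-2}. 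The minimum nonzero weight then reads off as $(q-1)q^{m-2} - (q-1)q^{(m-3)/2} = (q-1)q^{(m-3)/2}(q^{(m-1)/2} - 1)$.

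With the table in hand the remaining claims follow quickly. The frequencies sum to $q^{m+1}$, so $\dim \overline{\cC_D} = m+1$; equivalently $\mathbf{1} \notin \cC_D$ because $0 \in D$ forces every codeword of $\cC_D$ to vanish in the coordinate indexed by $x=0$. For self-orthogonality I would apply Lemma \ref{lem-self-orthogonal}: $q$ is odd and $\mathbf{1} \in \overline{\cC_D}$, so it suffices that $\overline{\cC_D}$ be $p$-divisible. Every weight has the form $(q-1)q^{m-2}$ plus an integer multiple of $q^{(m-3)/2}$, so all weights are divisible by $p$ precisely when $(m-3)/2 \geq 1$, i.e. $m \geq 5$ — which for odd $m$ is exactly the hypothesis $m > 3$. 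The dual distance is supplied by Theorem \ref{loc}: since $q > 2$ it already gives $d^{\perp} = 3$ and locality $2$, so $\overline{\cC_D}^{\perp}$ has parameters $[q^{m-1}, q^{m-1}-m-1, 3]$. Finally, feeding a hypothetical $[q^{m-1}, q^{m-1}-m-1, 5]$ code into the sphere-packing bound of Lemma \ref{sphere} shows the term $\binom{n}{2}(q-1)^2$ already exceeds $q^{m+1}$, so no such code exists and $\overline{\cC_D}^{\perp}$ is at least almost optimal.

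I expect the main obstacle to be the bookkeeping of the frequency count rather than any single hard estimate. One must keep the $c = 0$ and $c \neq 0$ contributions distinct (their perturbation terms for fixed $b$ differ by a factor $q-1$), correctly track the sign $G(\eta',\chi_1)G(\eta,\phi_1)\eta(-1) = (-1)^l q^{(m+1)/2}$ coming from Lemma \ref{quadGuasssum1}, and merge the two $\tr_{q^m/q}(b^2)=0$ contributions into the single row of multiplicity $q(q^{m-1}-1)$. A reliable safeguard throughout is verifying that the frequencies sum to $q^{m+1}$, which simultaneously confirms the dimension and exposes any sign error in $n_{\pm}$.
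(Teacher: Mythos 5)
Your proposal is correct and follows essentially the same route as the paper: the weights are obtained from $N_{(b,c)}$ via Lemma \ref{lem-Ns1} (with the $c=0$ case handled by the same Gauss-sum computation), the frequencies come from counting $|\{b:\tr_{q^m/q}(b^2)=a\}|=q^{m-1}+(-1)^l\eta(a)q^{(m-1)/2}$, the dual distance is imported from Theorem \ref{loc}, and self-orthogonality follows from $p$-divisibility together with Lemma \ref{lem-self-orthogonal}. The only cosmetic difference is that you make the length computation and the sphere-packing estimate explicit, which the paper leaves implicit.
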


\begin{proof}
For $b=0$, it is obvious that the codeword $\bc_{(b,c)}$ has Hamming weight
\begin{eqnarray*}
\text{wt}(\bc_{(b,c)})=
\begin{cases}
0 &\text{if $c=0$,}\\
q^{m-1}     &\text{if $c \neq 0$.}
\end{cases}
\end{eqnarray*}
For $b \in \gf_q^*$, we denote by $N_{(b,c)}=|\{x \in D:\tr_{q^m/q}(bx)+c=0\}|=|\{x \in \gf_{q^m}:\tr_{q^m/q}(x^2)=0 \mbox{ and } \tr_{q^m/q}(bx)+c=0\}|.$ Then the Hamming weight of codeword $\bc_{(b,c)}$ is $\text{wt}(\bc_{(b,c)})=n-N_{(b,c)}$ for $b \in \gf_q^*, c \in \gf_q$. For $c \in \gf_q^*$, we can find that the value of $N_{(b,c)}$ is equal to the value of $N_s$ in Lemma \ref{lem-Ns1} for $s=-c$. Then by Lemma \ref{lem-Ns1}, we have
\begin{eqnarray*}
\text{wt}(\bc_{(b,c)})=n-N_{(b,c)}
=\begin{cases}
q^{m-1}-q^{m-2}     &\text{if $\tr_{q^m/q}\left(b^{2}\right)=0$,}\\
q^{m-1}-q^{m-2}+(-1)^{l}\eta(\tr_{q^m/q}(b^2))q^{\frac{m-3}{2}}     &\text{if $\tr_{q^m/q}\left(b^{2}\right)\neq0$.}
\end{cases}
\end{eqnarray*}
For $c=0$, similarly to the proof of Lemma \ref{lem-Ns1}, we can also derive that
\begin{eqnarray*}
\text{wt}(\bc_{(b,c)})&=&\begin{cases}
q^{m-1}-q^{m-2} &\text{if $\tr_{q^m/q}\left(b^{2}\right)=0$,}\\
q^{m-1}-q^{m-2}-(-1)^{l}\eta(\tr_{q^m/q}(b^2))(q-1)q^{\frac{m-3}{2}}    &\text{if $\tr_{q^m/q}\left(b^{2}\right)\neq0$.}
\end{cases}
\end{eqnarray*}

Summarizing the above discussions, we derive that
\begin{eqnarray*}
\text{wt}(\bc_{(b,c)})&=&\begin{cases}
0 & \mbox{if $b=0$ and $c=0$,}\\
q^{m-1} & \mbox{if $b=0$ and $c \neq 0$,}\\
(q-1)q^{m-2} & \mbox{if $b \in \gf_{q^m}^*, \tr_{q^m/q}\left(b^{2}\right)=0$ and $c \in \gf_q$,}\\
(q-1)q^{m-2}+(-1)^{l}q^{\frac{m-3}{2}}   & \mbox{if $b \in \gf_{q^m}^*, \eta(\tr_{q^m/q}(b^2))=1$ and $c\neq0$,}\\
(q-1)q^{m-2}-(-1)^{l}q^{\frac{m-3}{2}}   & \mbox{if $b \in \gf_{q^m}^*, \eta(\tr_{q^m/q}(b^2))=-1$ and $c\neq0$,}\\
(q-1)q^{m-2}-(-1)^{l}(q-1)q^{\frac{m-3}{2}}   & \mbox{if $b \in \gf_{q^m}^*, \eta(\tr_{q^m/q}(b^2))=1$ and $c=0$,}\\
(q-1)q^{m-2}+(-1)^{l}(q-1)q^{\frac{m-3}{2}}   & \mbox{if $b \in \gf_{q^m}^*, \eta(\tr_{q^m/q}(b^2))=-1$ and $c=0$.}
\end{cases}
\end{eqnarray*}

Denote by $w_1=q^{m-1}$, $w_2=(q-1)q^{m-2}$, $w_3=(q-1)q^{m-2}+(-1)^{l}q^{\frac{m-3}{2}}$, $w_4=(q-1)q^{m-2}-(-1)^{l}q^{\frac{m-3}{2}}$, $w_5=(q-1)q^{m-2}-(-1)^{l}(q-1)q^{\frac{m-3}{2}}$, $w_6=(q-1)q^{m-2}+(-1)^{l}(q-1)q^{\frac{m-3}{2}}$. Next we will determine the frequency $A_{w_i}$, $1 \leq i \leq 6$. It is obvious that $A_{w_1}=q-1$ and $A_{w_2}=q(q^{m-1}-1)$. Denote by $A_a=|\{b \in \gf_{q^m}^*:\tr_{q^m/q}(b^2)=a\}|$, where $a \in \gf_q$.
It is easy to prove that $A_a=q^{m-1}+(-1)^{l}\eta(a)q^{\frac{m-1}{2}}$ by the orthogonal relation of additive characters and quadratic Gaussian sums. Then $A_{w_3}=\frac{(q-1)^2(q^{m-1}+(-1)^{l}q^{\frac{m-1}{2}})}{2}$, $A_{w_4}=\frac{(q-1)^2(q^{m-1}+(-1)^{l+1}q^{\frac{m-1}{2}})}{2}$, $A_{w_5}=\frac{(q-1)(q^{m-1}+(-1)^{l}q^{\frac{m-1}{2}})}{2}$ and $A_{w_6}=\frac{(q-1)(q^{m-1}+(-1)^{l+1}q^{\frac{m-1}{2}})}{2}$. Besides, by Theorem \ref{loc}, $d(\overline{\cC_D}^{\perp})=3$ and then $\overline{\cC_D}^{\perp}$ has parameters $[q^{m-1}, q^{m-1}-m-1, 3]$. It is easy to verify that $\overline{\cC_D}^{\perp}$ is at least almost optimal according to the sphere-packing bound.

By the above discussion, we deduce that $\mathbf{1} \in \overline{\cC_D}$ and $\overline{\cC_D}$ is $p$-divisible for $m >3$. Then by Lemma \ref{lem-self-orthogonal}, $\overline{\cC_D}$ is a family of self-orthogonal codes for $m > 3$. The proof is completed.
\end{proof}

Let $I_{n,n}$ denote the identity of size $n$. Let $\gf_{q^m}^*=\langle\alpha\rangle$. Then $\{1,\alpha, \alpha^{2},\ldots, \alpha^{m-1}\}$ is a $\gf_q$-basis of $\gf_{q^m}$.
For the linear code $\overline{\cC_{D}}$ in Theorem \ref{tem-wtCDN2}, it has a generator matrix $G$ by the proof of Theorem \ref{loc}.
Now we can obtain another generator matrix $G_2$ of $\overline{\cC_{D}}$ by an elementary row operation  on $G$, where
\begin{eqnarray*}
G_2:=\left[
\begin{array}{cccc}
1&1&\cdots&1 \\
 \tr_{q^m/q}(\alpha^{0}d_{1})& \tr_{q^m/q}(\alpha^{0}d_{2})& \cdots &\tr_{q^m/q}(\alpha^{0}d_{n}) \\
 \tr_{q^m/q}(\alpha^{1}d_{1})+1& \tr_{q^m/q}(\alpha^{1}d_{2})+1& \cdots &\tr_{q^m/q}(\alpha^{1}d_{n})+1 \\
 \vdots &\vdots &\ddots &\vdots \\
\tr_{q^m/q}(\alpha^{m-1}d_{1})& \tr_{q^m/q}(\alpha^{m-1}d_{2})& \cdots &\tr_{q^m/q}(\alpha^{m-1}d_{n}) \\
\end{array}\right],
\end{eqnarray*}
and $D=\{d_1,d_2,\cdots,d_n\}$ is defined in Equation (\ref{eq-D2}).
\begin{theorem}\label{th-N2}
Let $q = p^e$, where $p$ is an odd prime and $e$ is a positive integer. Let $m$ be an odd integer with $m \geq 3$. Let $G_2$ be defined as above and $G_2'=[I_{m+1, m+1}: G_2]$. Then the linear code $\overline{\cC_{D}}'$ generated by matrix $G_2'$ is a linear code with parameters $$[q^{m-1}+m+1, m+1, d'\geq (q-1)q^{\frac{m-3}{2}}(q^{\frac{m-1}{2}}-1)+1].$$ Besides, $\overline{\cC_{D}}'^{\perp}$ has parameters $[q^{m-1}+m+1, q^{m-1}, 2 \leq d'^{\perp} \leq 3]$.
\end{theorem}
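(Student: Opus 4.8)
The plan is to follow the template of the proof of Theorem \ref{th-LCD1}, substituting the length and minimum distance of $\overline{\cC_{D}}$ furnished by Theorem \ref{tem-wtCDN2}. First I would record the structural data: since $G_2$ is an $(m+1)\times q^{m-1}$ generator matrix of $\overline{\cC_{D}}$ (its length is $q^{m-1}$ by Theorem \ref{tem-wtCDN2}) and $G_2'=[I_{m+1,m+1}:G_2]$, the code $\overline{\cC_{D}}'$ has length $q^{m-1}+m+1$ and dimension $m+1$, so that $\overline{\cC_{D}}'^{\perp}$ has dimension $q^{m-1}$.

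The heart of the argument is the lower bound on $d'$. I would expand a generic codeword as $\ba'=(l,a_0,a_1,\dots,a_{m-1})G_2'$; the identity block makes the first $m+1$ entries of $\ba'$ equal to the message $(l,a_0,\dots,a_{m-1})$, while the last $q^{m-1}$ entries form the corresponding codeword of $\overline{\cC_{D}}$ under $G_2$. Since $G_2$ has full row rank $m+1$, a nonzero message produces a nonzero codeword of $\overline{\cC_{D}}$, whose weight is at least $d:=(q-1)q^{\frac{m-3}{2}}(q^{\frac{m-1}{2}}-1)$; the message block then contributes at least one more nonzero coordinate, so $\mathrm{wt}(\ba')\geq d+1$. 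Phrased by contradiction as in Theorem \ref{th-LCD1}: were a minimum-weight codeword to satisfy $\mathrm{wt}(\ba')=d$, its first $m+1$ coordinates would have to vanish, forcing $l=a_0=\dots=a_{m-1}=0$ and hence $\ba'=\bzero$, which is absurd. Therefore $d'\geq d+1=(q-1)q^{\frac{m-3}{2}}(q^{\frac{m-1}{2}}-1)+1$.

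For the dual distance I would use the standard fact that $d'^{\perp}$ equals the least number of $\gf_q$-linearly dependent columns of $G_2'$. Every column of $G_2'$ is nonzero: the first $m+1$ columns are standard basis vectors, and each remaining column carries a $1$ in its top entry inherited from the all-one first row of $G_2$; this rules out a dependent set of size one, so $d'^{\perp}\geq 2$. Conversely, Theorem \ref{tem-wtCDN2} gives $d(\overline{\cC_{D}}^{\perp})=3$, so the generator matrix $G_2$ of $\overline{\cC_{D}}$ has three linearly dependent columns; these appear unchanged among the last $q^{m-1}$ columns of $G_2'$ and remain dependent, whence $d'^{\perp}\leq 3$. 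Combining the two bounds gives $2\leq d'^{\perp}\leq 3$, completing the proof.

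The argument is structurally identical to Theorem \ref{th-LCD1}, so I anticipate no genuine obstacle; the only delicate point is the mechanical bookkeeping in the expansion of $\ba'$—tracking how the elementary row operation defining $G_2$ redistributes the trace entries—which does not affect the weight count in either coordinate block.
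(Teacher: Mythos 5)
Your proposal is correct and follows exactly the route the paper takes: the paper's own proof of this theorem is just the one-line remark that the argument of Theorem \ref{th-LCD1} carries over, and your write-up is precisely that argument instantiated with the length $q^{m-1}$ and minimum distance $(q-1)q^{\frac{m-3}{2}}(q^{\frac{m-1}{2}}-1)$ from Theorem \ref{tem-wtCDN2}. The details you supply (the identity block forcing one extra nonzero coordinate, all columns of $G_2'$ nonzero via the all-one row, and three dependent columns of $G_2$ surviving in $G_2'$) are the same ones used in the proof of Theorem \ref{th-LCD1}.
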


\begin{proof}
  Similarly to the proof of Theorem \ref{th-LCD1}, the desired conclusions follow.
\end{proof}

\begin{theorem}\label{extend2}
  Let $\overline{\cC_{D}}$ be the code in Theorem \ref{tem-wtCDN2} and $\overline{\cC_{D}}'$ be defined as in Theorem \ref{th-N2}. Then $\overline{\cC_{D}}$ is an almost optimally or optimally extendable code.
\end{theorem}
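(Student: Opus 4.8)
The plan is to deduce the statement directly from the two immediately preceding results by comparing the relevant dual minimum distances against the definition of (almost) optimally extendable codes. Writing $\cD = \overline{\cC_D}$ and $\cD' = \overline{\cC_D}'$, recall that $\cD$ is optimally extendable when $d(\cD'^{\perp}) = d(\cD^{\perp})$ and almost optimally extendable when $d(\cD'^{\perp}) = d(\cD^{\perp}) - 1$. Thus it suffices to identify $d(\overline{\cC_D}^{\perp})$ and $d(\overline{\cC_D}'^{\perp})$ and to verify that they differ by at most one.

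First I would invoke Theorem \ref{tem-wtCDN2}, which records that $\overline{\cC_D}^{\perp}$ has parameters $[q^{m-1}, q^{m-1}-m-1, 3]$, so $d(\overline{\cC_D}^{\perp}) = 3$. Next I would invoke Theorem \ref{th-N2}, which, for the shaped generator matrix $G_2' = [I_{m+1, m+1} : G_2]$, supplies the two-sided bound $2 \leq d(\overline{\cC_D}'^{\perp}) \leq 3$. Combining these gives $0 \leq d(\overline{\cC_D}^{\perp}) - d(\overline{\cC_D}'^{\perp}) \leq 1$. Matching this against the definition then finishes the proof: if $d(\overline{\cC_D}'^{\perp}) = 3$ the two dual distances coincide and $\overline{\cC_D}$ is optimally extendable, while if $d(\overline{\cC_D}'^{\perp}) = 2$ the dual distance drops by exactly one and $\overline{\cC_D}$ is almost optimally extendable; in either case the claim holds.

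The step carrying the real weight is not in this proof but upstream in Theorem \ref{th-N2}. The inequality $d(\overline{\cC_D}'^{\perp}) \leq 3$ is immediate, since the last $n$ columns of $[I_{m+1,m+1}:G_2]$ are exactly those of $G_2$ and $\overline{\cC_D}^{\perp}$ already contains a weight-$3$ vector, so appending the identity block cannot raise the dual distance above $3$; the bound $d(\overline{\cC_D}'^{\perp}) \geq 2$ holds because no column of $G_2'$ is zero. The genuine subtlety, flagged in Remark \ref{rem1}, is that the code generated by $[I:G]$ depends on the choice of $G$, and a careless choice can force $d'^{\perp} = 2$; the adjusted matrix $G_2$ is engineered precisely to keep the dual distance as large as possible. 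Since that subtlety is already discharged in Theorem \ref{th-N2}, the present theorem reduces to the one-line comparison above, and I expect no essential obstacle to remain here.
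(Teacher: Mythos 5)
Your proposal is correct and follows exactly the paper's own argument: both cite Theorem \ref{tem-wtCDN2} for $d(\overline{\cC_D}^{\perp})=3$ and Theorem \ref{th-N2} for $2 \leq d(\overline{\cC_D}'^{\perp}) \leq 3$, conclude $0 \leq d(\overline{\cC_D}^{\perp})-d(\overline{\cC_D}'^{\perp}) \leq 1$, and match this against the definition. No issues.
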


\begin{proof}
By Theorems \ref{tem-wtCDN2} and \ref{th-N2}, we have $0 \leq d(\overline{\cC_{D}}^{\perp})-d(\overline{\cC_{D}}'^{\perp}) \leq 1$. Then $\overline{\cC_{D}}$ is an almost optimally or optimally extendable self-orthogonal code.
\end{proof}

By Magma, we find that $\overline{\cC_{D}}'^\perp $ for $N=2$ has minimum distance $3$ in most cases. We give two examples in the following.

\begin{example}\label{exa-5}
Let $q=3, m=5$. By Magma, then the ternary linear codes $\overline{\cC_{D}}$, $\overline{\cC_{D}}'$, $\overline{\cC_{D}}^{\perp}$, $\overline{\cC_{D}}'^{\perp}$ have parameters $$[81, 6, 48], [87, 6, 49], [81, 75, 3], [87, 81, 3],$$ respectively.
Besides, $\overline{\cC_{D}}$ is an optimally extendable self-orthogonal code, and both of $\overline{\cC_{D}}^{\perp}$ and $\overline{\cC_{D}}'^{\perp}$ are optimal codes by the Code Tables in \cite{Codetable}.
\end{example}

\begin{example}\label{exa-6}
Let $q=5, m=3$. By Magma, then the ternary linear codes $\overline{\cC_{D}}$, $\overline{\cC_{D}}'$, $\overline{\cC_{D}}^{\perp}$, $\overline{\cC_{D}}'^{\perp}$ have parameters $$[25, 4, 16], [29, 4, 17], [25, 21, 3], [29, 25, 3],$$ respectively.
Besides, $\overline{\cC_{D}}$ is an optimally extendable self-orthogonal code, $\overline{\cC_{D}}^{\perp}$ is an almost optimal code by the Code Tables in \cite{Codetable},  and $\overline{\cC_{D}}'^{\perp}$ is an optimal code by the Code Tables in \cite{Codetable}.
\end{example}

\begin{remark}
  We remark that there are some other optimal or almost optimal locally recoverable codes obtained from this section expect for an infinite family of at least almost $k$-optimal locally recoverable codes in Theorem \ref{th-loc-q=2}. We list them in Table \ref{tab-optimal-loc}.
  \begin{table}[!h]
\begin{center}
\caption{Optimal or almost optimal locally recoverable codes derived in Section \ref{sec4}.}\label{tab-optimal-loc}
\begin{tabular}{llll}
\toprule
Conditions & Code & Parameters & Optimality \\
\midrule
$q=2, r=3$ & $\overline{\cC_D}(N=q^r+1)$ & $(28, 7, 12, 2; 3)$ & Almost $k$-optimal\\
$q=3, r=2$ & $\overline{\cC_D}(N=q^r+1)$ & $(21, 5, 12, 3; 2)$ & $k$-optimal\\
$q=4, r=2$ & $\overline{\cC_D}(N=q^r+1)$ & $(52, 5, 36, 4; 2)$ & Almost $k$-optimal\\
$q=3, m=3$ & $\overline{\cC_D}(N=2)$ & $(9,4,4,3;2)$ & $k$-optimal and almost $d$-optimal\\
\bottomrule
\end{tabular}
\end{center}
\end{table}

\end{remark}

\section{The second family of linear codes $\cC_f$ from weakly regular bent functions}\label{sec5}
In this subsection, let $q = p^e$ with odd prime $p$ and positive integer $e$. Let $f(x)$ denote a weakly regular bent function with $f(0)=0$. Let $\cC_f$ be the linear code defined in Equation (\ref{eq-Cfbar}). Firstly, we will prove that $\cC_f$ is a family of self-orthogonal codes and then determine its parameters and weight distribution. Besides, we will prove that $\cC_f$ is also a family of optimally extendable codes in most cases. Furthermore, two families of optimal locally recoverable codes are derived.

\begin{lemma}\label{lem-Ns2}
Let $s \in \gf_p^*$, $a \in \gf_p$ and $b \in \gf_{p^e}$. Let $f(x)\in \mathcal{RF}$ with $\varepsilon$ the sign of its Walsh transform. Let $N_s$ denote the number of solutions in $\gf_{p^e}$ of the equation $af(x) + \tr_{p^e/p}\left(bx\right)=s$.
If $e$ is even, then
\begin{eqnarray*}
N_s=
\begin{cases}
0, &\text{if $a=0,b=0$,}\\
p^{e-1},     &\text{if $a=0, b \in \gf_{p^e}^*$ ,}\\
p^{e-1}+\frac{\varepsilon(p-1)\sqrt{p^*}^e}{p},     &\text{if $a \in \gf_p^*, f^*(-\frac{b}{a})=a^{-1}s$,}\\
p^{e-1}-\frac{\varepsilon \sqrt{p^*}^e}{p},     &\text{if $a \in \gf_p^*, f^*(-\frac{b}{a})\neq a^{-1}s$,}
\end{cases}
\end{eqnarray*}
where $p^* = (-1)^{\frac{p-1}{2}}p$.
If $e$ is odd, then
\begin{eqnarray*}
N_s=
\begin{cases}
0, &\text{if $a=0,b=0$,}\\
p^{e-1},     &\text{if $a=0, b \in \gf_{p^e}^*$, or $a \in \gf_p^*, f^*(-\frac{b}{a})=a^{-1}s$, }\\
p^{e-1}+\frac{\varepsilon \sqrt{p^*}^eG(\eta_0, \lambda_1)}{p},     &\text{if $a \in \gf_p^*, f^*(-\frac{b}{a})\neq a^{-1}s$ and $\eta_0(f^*(-\frac{b}{a})- a^{-1}s)=1$,}\\
p^{e-1}-\frac{\varepsilon \sqrt{p^*}^eG(\eta_0, \lambda_1)}{p},     &\text{if $a \in \gf_p^*, f^*(-\frac{b}{a})\neq a^{-1}s$ and $\eta_0(f^*(-\frac{b}{a})- a^{-1}s)=-1$,}
\end{cases}
\end{eqnarray*}
where  $G(\eta_0, \lambda_1) = \sqrt{-1}^{(\frac{p-1}{2})^2}\sqrt{p}$.
\end{lemma}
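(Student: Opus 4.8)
The plan is to expand $N_s$ into additive characters of $\gf_p$ and then reduce everything to a single clean identity for the exponential sum attached to $f$. First I would write, by the orthogonality relation of the additive characters of $\gf_p$,
\[
N_s=\frac{1}{p}\sum_{y\in\gf_p}\sum_{x\in\gf_{p^e}}\zeta_p^{\,y(af(x)+\tr_{p^e/p}(bx)-s)}
=\frac{1}{p}\Big(p^e+\sum_{y\in\gf_p^*}\zeta_p^{-ys}\sum_{x\in\gf_{p^e}}\zeta_p^{\,yaf(x)+\tr_{p^e/p}(ybx)}\Big),
\]
isolating the contribution $y=0$. The two cases with $a=0$ are then immediate: if $a=b=0$ the equation $0=s$ has no solution for $s\in\gf_p^*$ (equivalently $\sum_{y\in\gf_p}\zeta_p^{-ys}=0$), so $N_s=0$; if $a=0$ and $b\neq0$, then $x\mapsto\tr_{p^e/p}(bx)$ is a surjective $\gf_p$-linear form, every value is attained exactly $p^{e-1}$ times, and $N_s=p^{e-1}$.

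The heart of the argument, and the step I expect to be the main obstacle, is to evaluate the inner sum for $a\in\gf_p^*$ uniformly in $y$. Here I would exploit the Galois action of Lemma~\ref{lem-cyclo} and Equation~\eqref{eq-sigma}. Writing $c=ya\in\gf_p^*$ and $\beta=-a^{-1}b$, the $\gf_p$-linearity of the trace gives $\tr_{p^e/p}(ybx)=-c\,\tr_{p^e/p}(\beta x)$, so the exponent factors as $c\big(f(x)-\tr_{p^e/p}(\beta x)\big)$. Since $\sigma_c\colon\zeta_p\mapsto\zeta_p^{\,c}$ is a field automorphism, hence additive,
\[
\sum_{x\in\gf_{p^e}}\zeta_p^{\,yaf(x)+\tr_{p^e/p}(ybx)}
=\sum_{x\in\gf_{p^e}}\big(\zeta_p^{\,f(x)-\tr_{p^e/p}(\beta x)}\big)^{c}
=\sigma_{c}\big(W_f(\beta)\big).
\]
Applying $\sigma_c$ to $W_f(\beta)=\varepsilon\sqrt{p^*}^{\,e}\zeta_p^{f^*(\beta)}$ from Equation~\eqref{eq-Wf}, and using $\sigma_c(\varepsilon)=\varepsilon$, $\sigma_c(\sqrt{p^*}^{\,e})=\eta_0^e(c)\sqrt{p^*}^{\,e}$ and $\sigma_c(\zeta_p^{f^*(\beta)})=\zeta_p^{\,cf^*(\beta)}$, the inner sum collapses to $\varepsilon\,\eta_0^e(ya)\sqrt{p^*}^{\,e}\zeta_p^{\,ya\,f^*(-a^{-1}b)}$. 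The crucial point is that $\beta=-a^{-1}b$ does not depend on $y$, so the $y$-dependence survives only through $\eta_0^e(y)$ and $\zeta_p^{\,ya\,f^*(-a^{-1}b)}$.

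Substituting back and writing $F=f^*(-a^{-1}b)$ yields
\[
N_s=\frac{1}{p}\Big(p^e+\varepsilon\,\eta_0^e(a)\sqrt{p^*}^{\,e}\sum_{y\in\gf_p^*}\eta_0^e(y)\,\zeta_p^{\,y(aF-s)}\Big),
\]
and the two parities of $e$ are finished separately. For $e$ even, $\eta_0^e\equiv1$, so the inner sum is $\sum_{y\in\gf_p^*}\zeta_p^{\,y(aF-s)}$, equal to $p-1$ when $aF=s$ (that is, $f^*(-\tfrac{b}{a})=a^{-1}s$) and $-1$ otherwise, giving exactly $p^{e-1}+\tfrac{\varepsilon(p-1)\sqrt{p^*}^{\,e}}{p}$ and $p^{e-1}-\tfrac{\varepsilon\sqrt{p^*}^{\,e}}{p}$. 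For $e$ odd, $\eta_0^e=\eta_0$, and the inner sum is $\sum_{y\in\gf_p^*}\eta_0(y)\zeta_p^{\,y(aF-s)}$; this vanishes when $aF=s$ (so $N_s=p^{e-1}$), and otherwise equals $\eta_0(aF-s)G(\eta_0,\lambda_1)=\eta_0(a)\eta_0(F-a^{-1}s)G(\eta_0,\lambda_1)$ by the substitution $y\mapsto y/(aF-s)$ in the Gauss sum. Since $\eta_0^2(a)=1$, this reproduces $p^{e-1}\pm\tfrac{\varepsilon\sqrt{p^*}^{\,e}G(\eta_0,\lambda_1)}{p}$ according to the sign $\eta_0(f^*(-\tfrac{b}{a})-a^{-1}s)$, and the closed form $G(\eta_0,\lambda_1)=\sqrt{-1}^{\,(\frac{p-1}{2})^2}\sqrt{p}$ is precisely Lemma~\ref{quadGuasssum1} read over $\gf_p$. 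The only subtlety worth double-checking is the bookkeeping of the $\eta_0(a)$ factors together with the change of variable in the Gauss sum, which is where a sign error would most likely enter.
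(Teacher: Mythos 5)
Your proposal is correct and follows essentially the same route as the paper's proof: orthogonality of additive characters, rewriting the inner sum as the Galois conjugate $\sigma_{ya}(W_f(-b/a))$ via Lemma \ref{lem-cyclo} and Equation (\ref{eq-sigma}), and then evaluating the resulting character sum over $\gf_p^*$ as $p-1$ or $-1$ for even $e$ and as a quadratic Gauss sum for odd $e$. The only (cosmetic) difference is that the paper absorbs $a$ by substituting $y\mapsto ya^{-1}$ before applying the automorphism, whereas you carry the factor $\eta_0^e(a)$ through and cancel it at the end; your bookkeeping of that factor is correct.
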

\begin{proof}
 By the orthogonal relation of additive characters, we have
\begin{eqnarray*}
N_s &=&| \{x\in\gf_{p^e}: af(x) + \tr_{p^e/p}\left(bx\right)=s\}|\\
&=&\frac{1}{p}\sum_{y \in \gf_p}\sum_{x \in \gf_{p^e}}\zeta_p^{yaf(x)+\tr_{p^e/p}(yba)-ys}\\
&=&\frac{1}{p}\sum_{y \in \gf_p^*}\lambda_1(-ys)\sum_{x \in \gf_{p^e}}\zeta_p^{yaf(x)+\tr_{p^e/p}(ybx)}+p^{e-1}.
\end{eqnarray*}

In order to calculate the value of $N_s$, we will consider the following cases.

{Case 1:} If $a=0$, then
\begin{eqnarray*}
N_s=\frac{1}{p}\sum_{y \in \gf_p^*}\lambda_1(-ys)\sum_{x \in \gf_{p^e}}\phi_1(ybx)+p^{e-1}
=\begin{cases}
        p^{e-1}, & \mbox{if $b \in \gf_{p^e}^*$}, \\
        0, & \mbox{if $b=0$}.
      \end{cases}
\end{eqnarray*}

{Case 2:} If $a\neq 0$, then
\begin{eqnarray*}
N_s&=&\frac{1}{p}\sum_{y \in \gf_p^*}\lambda_1(-ys)\sum_{x \in \gf_{p^e}}\zeta_p^{ya\left(f(x)-\tr_{p^e/p}(-\frac{b}{a}x)\right)}+p^{e-1}\\
&=&\frac{1}{p}\sum_{y \in \gf_p^*}\lambda_1(-ys)\sigma_{ya}\left(\text{W}_f\left(-\frac{b}{a}\right)\right)+p^{e-1}\\
&=&\frac{1}{p}\sum_{y \in \gf_p^*}\lambda_1(-ya^{-1}s)\sigma_{y}\left(\text{W}_f\left(-\frac{b}{a}\right)\right)+p^{e-1},
\end{eqnarray*}
where $\sigma_y$ is the automorphism of $\mathbb{Q}(\zeta_p)$ defined by $\sigma_y(\zeta_p)=\zeta_p^y$ and $f(0)=0$. Since $f(x) \in \mathcal{RF}$, then
\begin{eqnarray*}
\sigma_{y}\left(\text{W}_f\left(-\frac{b}{a}\right)\right)=\sigma_{y}\left(\varepsilon \sqrt{p^*}^e\zeta_p^{f^*(-\frac{b}{a})}\right)
= \varepsilon \sqrt{p^*}^e\zeta_p^{yf^*(-\frac{b}{a})}\eta_0^e(y)
\end{eqnarray*}
by Equations (\ref{eq-Wf}) and (\ref{eq-sigma}), where $p^* = (-1)^{\frac{p-1}{2}}p$. Then
\begin{eqnarray*}
N_s &=&\frac{\varepsilon \sqrt{p^*}^e}{p}\sum_{y \in \gf_p^*}\lambda_1(-ya^{-1}s)\zeta_p^{yf^*\left(-\frac{b}{a}\right)}\eta_0^e(y)+p^{e-1}\\
&=& \frac{\varepsilon \sqrt{p^*}^e}{p}\sum_{y \in \gf_p^*}\lambda_1\left(y\left(f^*\left(-\frac{b}{a}\right)-a^{-1}s\right)\right)\eta_0^e(y)+p^{e-1}.
\end{eqnarray*}

{Subcase 2.1:} When $e$ is an even integer, by the orthogonality of additive characters, we have
\begin{eqnarray*}
N_s=\frac{\varepsilon \sqrt{p^*}^e}{p}\sum_{y \in \gf_p^*}\lambda_1\left(y\left(f^*\left(-\frac{b}{a}\right)-a^{-1}s\right)\right)+p^{e-1}
=\begin{cases}
     \frac{\varepsilon (p-1) \sqrt{p^*}^e}{p}+p^{e-1}, & \mbox{if $f^*(-\frac{b}{a})=a^{-1}s$}, \\
     -\frac{\varepsilon \sqrt{p^*}^e}{p}+p^{e-1}, & \mbox{if $f^*(-\frac{b}{a}) \neq a^{-1}s$}.
   \end{cases}
\end{eqnarray*}

{Subcase 2.2:} When $e$ is an odd integer, by the orthogonality of additive characters, we have
\begin{eqnarray*}
N_s&=&\frac{\varepsilon \sqrt{p^*}^e}{p}\sum_{y \in \gf_p^*}\lambda_1\left(y\left(f^*\left(-\frac{b}{a}\right)-a^{-1}s\right)\right)\eta_0(y)+p^{e-1}\\
&=&\begin{cases}
     p^{e-1}, & \mbox{if $f^*(-\frac{b}{a})=a^{-1}s$}, \\
     \frac{\varepsilon \sqrt{p^*}^e}{p}G(\eta_0, \lambda_1)+p^{e-1}, & \mbox{if $\eta_0(f^*(-\frac{b}{a})- a^{-1}s)=1$},\\
     -\frac{\varepsilon \sqrt{p^*}^e}{p}G(\eta_0, \lambda_1)+p^{e-1}, & \mbox{if $\eta_0(f^*(-\frac{b}{a})- a^{-1}s)=-1$},
   \end{cases}
\end{eqnarray*}
where $G(\eta_0, \lambda_1) = \sqrt{-1}^{(\frac{p-1}{2})^2}\sqrt{p}$ by Lemma \ref{quadGuasssum1}.

Summarizing the above discussions, the desired conclusions follow.
\end{proof}

\subsection{The self-orthogonality of $\cC_f$}

\begin{theorem}\label{tem-wtCD2}
Let $q = p^e$, where $p$ is an odd prime, $e$ is a positive integer with $e \geq 2$. Let $f(x) \in \mathcal{RF}$ with $\varepsilon = \pm 1$ the sign of its Walsh transform. Then the code $\cC_{f}$ defined in Equation (\ref{eq-Cfbar}) is a $[q, e+2]$ $p$-ary linear code and its weight distribution is listed in Tables \ref{tab3} and \ref{tab4} for even $e$ and for odd $e$, respectively. In particular, $\cC_f$ is a family of self-orthogonal codes for $e \geq 3$. Besides, $\cC_{f}^{\perp}$ is an almost MDS code with parameters $[q, q-4, 4]$ if $e =2$ and $\varepsilon \eta_0(-1)=-1$, and $\cC_{f}^{\perp}$ is optimal according to the sphere-packing bound in this case. In other cases, $\cC_{f}^{\perp}$ has parameters $[q, q-e-2, 3]$ and is at least almost optimal according to the sphere-packing bound.
\end{theorem}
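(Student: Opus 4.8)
The plan is to establish the parameters first, then read the weight distribution off Lemma~\ref{lem-Ns2}, deduce self-orthogonality from $p$-divisibility via Lemma~\ref{lem-self-orthogonal}, and finally determine the dual distance through the Pless power moments. The length is $q$ by construction. For the dimension, I would show that the $\gf_p$-linear map sending $(a,b,c)\in\gf_p\times\gf_q\times\gf_p$ to its codeword is injective: if the codeword is zero, then evaluating at $x=0$ and using $f(0)=0$ gives $c=0$, after which $af(x)+\tr_{q/p}(bx)=0$ for all $x$ forces $a=0$ (otherwise $f=-a^{-1}\tr_{q/p}(b\,\cdot)$ would be $\gf_p$-linear, contradicting $|\mathrm{W}_f(\beta)|=p^{e/2}$), and then nondegeneracy of the trace form forces $b=0$. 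Hence $\cC_f$ is a $[q,e+2]$ code.

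The weight of the codeword attached to $(a,b,c)$ equals $q-\rho_{a,b}(-c)$, where $\rho_{a,b}(s)=|\{x\in\gf_q:af(x)+\tr_{q/p}(bx)=s\}|$. For $c\neq0$ this is exactly the quantity $N_{-c}$ of Lemma~\ref{lem-Ns2}, while the case $c=0$ follows from $\sum_{s\in\gf_p}\rho_{a,b}(s)=q$ together with the same character-sum evaluation. I would then split according to the three cases of Lemma~\ref{lem-Ns2}: $a=0,b=0$ gives the zero word together with the $p-1$ nonzero multiples of $\bone$, all of weight $q$; $a=0,b\neq0$ gives weight $(p-1)p^{e-1}$ for every $c$; and $a\neq0$ is the bent case. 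For a fixed pair $(a,b)$ with $a\neq0$, exactly one value of $c$ satisfies $a^{-1}(-c)=f^*(-b/a)$ and the remaining $p-1$ do not, and when $e$ is odd these $p-1$ values split evenly according to the sign of $\eta_0\!\left(f^*(-b/a)-a^{-1}(-c)\right)$ as the argument runs over $\gf_p^*$. Multiplying these per-pair multiplicities by the $(p-1)q$ pairs with $a\neq0$ (and the counts above for $a=0$) yields the frequencies of Tables~\ref{tab3} and~\ref{tab4}; the check that they total $p^{e+2}$ confirms the bookkeeping.

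Since $\bone$ is the codeword of $(0,0,1)$, we have $\bone\in\cC_f$. The weights $q$ and $(p-1)p^{e-1}$ are multiples of $p$ for $e\geq2$, and the correction terms $\pm\varepsilon(p-1)\sqrt{p^*}^e/p$, $\pm\varepsilon\sqrt{p^*}^e/p$ for even $e$ and $\pm\varepsilon\sqrt{p^*}^eG(\eta_0,\lambda_1)/p$ for odd $e$ have absolute value $(p-1)p^{e/2-1}$, $p^{e/2-1}$ and $p^{(e-1)/2}$ respectively, hence are divisible by $p$ exactly when $e\geq4$ (even $e$) or $e\geq3$ (odd $e$). These together cover all $e\geq3$, so $\cC_f$ is $p$-divisible there and Lemma~\ref{lem-self-orthogonal} makes it self-orthogonal; the failure of divisibility at $e=2$ accounts for the restriction.

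For the dual, I would use the generator matrix whose $x$-th column is $g_x=(f(x),\tr_{q/p}(\beta_1x),\dots,\tr_{q/p}(\beta_ex),1)^{T}$ for a fixed $\gf_p$-basis $\beta_1,\dots,\beta_e$ of $\gf_q$. No column vanishes (its last entry is $1$), and two distinct columns cannot be proportional, since equal last entries force $g_x=g_{x'}$ and then $\tr_{q/p}(\beta_i(x-x'))=0$ for all $i$, whence $x=x'$. Thus $A_1^\perp=A_2^\perp=0$ and $d(\cC_f^\perp)\geq3$. The crux is to evaluate $A_3^\perp$ from the third Pless power moment \cite[Page 260]{H}, substituting the distribution of Tables~\ref{tab3}--\ref{tab4} and the explicit values of $\sqrt{p^*}^e$ and $G(\eta_0,\lambda_1)=\sqrt{-1}^{((p-1)/2)^2}\sqrt{p}$: this shows $A_3^\perp>0$ in every case except $e=2$ with $\varepsilon\eta_0(-1)=-1$, where $A_3^\perp=0$ and a further computation of $A_4^\perp$ from the fourth moment gives $A_4^\perp>0$. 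Hence $\cC_f^\perp$ has parameters $[q,q-e-2,3]$ in general and $[q,q-4,4]$ in the exceptional case, the latter being almost MDS because $d=n-k$; optimality and near-optimality then follow from the sphere-packing bound of Lemma~\ref{sphere}. Isolating the sign combination $\varepsilon\eta_0(-1)$ and the threshold $e=2$ inside the third-moment arithmetic is the principal obstacle.
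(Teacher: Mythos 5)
Your proposal is correct and follows essentially the same route as the paper: weights via Lemma~\ref{lem-Ns2} (with the $c=0$ case recovered from the total count rather than a fresh character-sum evaluation, which is equivalent), frequencies by counting triples $(a,b,c)$, self-orthogonality from $p$-divisibility together with $\bone\in\cC_f$ and Lemma~\ref{lem-self-orthogonal}, and the dual distance from the Pless power moments followed by the sphere-packing bound. The only slip is an indexing one: since the $r$-th power moment $\sum_i i^{r}A_i$ involves $A_1^{\perp},\dots,A_r^{\perp}$, extracting $A_4^{\perp}$ in the exceptional case $e=2$, $\varepsilon\eta_0(-1)=-1$ requires the \emph{fifth} moment, not the fourth, exactly as the paper does.
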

\begin{table}[h!]
\begin{center}
\caption{The weight distribution of $\cC_{f}$ in Theorem \ref{tem-wtCD2} for even $e$.}\label{tab3}
\begin{tabular}{@{}ll@{}}
\toprule
Weight & Frequency  \\
\midrule
$0$ & $1$ \\
$p^e-p^{e-1}-\frac{\varepsilon (p-1)\sqrt{p^*}^e}{p}$ & $(p-1)p^e$  \\
$p^e-p^{e-1}+\frac{\varepsilon \sqrt{p^*}^e}{p}$ & $(p-1)^2p^e$ \\
$p^e-p^{e-1}$ & $ p(p^e-1)$ \\
$p^e$ &   $p-1$\\
\bottomrule
\end{tabular}
\end{center}
\end{table}
\begin{table}[h!]
\begin{center}
\caption{The weight distribution of $\cC_{f}$ in Theorem \ref{tem-wtCD2} for odd $e$.}\label{tab4}
\begin{tabular}{@{}ll@{}}
\toprule
Weight & Frequency  \\
\midrule
$0$ & $1$ \\
$p^e-p^{e-1}-\frac{\varepsilon \sqrt{p^*}^e G(\eta_0, \lambda_1)}{p}$ &  $\frac{(p-1)^2p^e}{2}$ \\
$p^e-p^{e-1}+\frac{\varepsilon \sqrt{p^*}^e G(\eta_0, \lambda_1)}{p}$ &  $\frac{(p-1)^2p^e}{2}$\\
$p^e-p^{e-1}$ &  $2p^{e+1}-p^e-p$ \\
$p^e$ &  $p-1$ \\
\bottomrule
\end{tabular}
\end{center}
\end{table}

\begin{proof}
For $c=0$, by the orthogonal relation of additive characters, the Hamming weight
\begin{eqnarray*}
  \text{wt}(\bc_{(a,b,0)}) &=& p^e-|\{x \in \gf_q: af(x)+\tr_{q/p}(bx)=0\}| \\
   &=&  p^e-\frac{1}{p}\sum_{y \in \gf_p} \sum_{x \in \gf_q}\phi_1(ybx) \lambda_1(yaf(x))\\
   &=& p^e-p^{e-1}-\frac{1}{p}\sum_{y \in \gf_p^*}\sum_{x \in \gf_q} \lambda_1(yaf(x)+y\tr_{q/p}(bx))\\
   &=& \begin{cases}
         0 & \mbox{if $a=0, b=0$},  \\
         p^e-p^{e-1} & \mbox{if $a=0, b \neq 0$}, \\
         p^e-p^{e-1}-\frac{1}{p}\sum_{y \in \gf_p^*}\sum_{x \in \gf_q}\zeta_p^{ya(f(x)-\tr_{q/p}(-\frac{b}{a}x))} & \mbox{if $a \neq 0$}.
       \end{cases}
\end{eqnarray*}
Denote by $N_1=p^e-p^{e-1}-\frac{1}{p}\sum_{y \in \gf_p^*}\sum_{x \in \gf_q}\zeta_p^{ya(f(x)-\tr_{q/p}(-\frac{b}{a}x))}$. Similarly to the proof of Lemma \ref{lem-Ns2}, we derive that
\begin{eqnarray*}
  N_1 &=& \begin{cases}
            p^e-p^{e-1}-\frac{\varepsilon(p-1)\sqrt{p^*}^e}{p} & \mbox{if $f^*(-\frac{b}{a})=0$},\\
            p^e-p^{e-1}+\frac{\varepsilon \sqrt{p^*}^e}{p} & \mbox{if $f^*(-\frac{b}{a})\neq 0$},
          \end{cases}
\end{eqnarray*}
if $e$ is an even integer,
and
\begin{eqnarray*}
  N_1 &=& \begin{cases}
            p^e-p^{e-1} & \mbox{if $f^*(-\frac{b}{a})=0$},\\
            p^e-p^{e-1}-\frac{\varepsilon \sqrt{p^*}^eG(\eta_0, \lambda_1)}{p} & \mbox{if $f^*(-\frac{b}{a})\neq 0$ and $\eta_0(f^*(-\frac{b}{a}))=1$},\\
            p^e-p^{e-1}+\frac{\varepsilon \sqrt{p^*}^eG(\eta_0, \lambda_1)}{p} & \mbox{if $f^*(-\frac{b}{a})\neq 0$ and $\eta_0(f^*(-\frac{b}{a}))=-1$},
          \end{cases}
\end{eqnarray*}
if $e$ is an odd integer.
For $c \in \gf_p^*$, denote by $N_{(a,b,c)}=|\{x \in \gf_q:af(x)+\tr_{q/p}(bx)+c=0\}|.$ Then the Hamming weight of codeword $\bc_{(a,b,c)}$ is $\text{wt}(\bc_{(a,b,c)})=p^e-N_{(a,b,c)}$ for $a \in \gf_p, b \in \gf_{p^e}, c \in \gf_p^*$. Besides, we can verify that the value of $N_{(a,b,c)}$ is equal to the value of $N_s$ in Lemma \ref{lem-Ns2} for $s=-c$. Then by Lemma \ref{lem-Ns2}, if $e$ is an even integer, we have
\begin{eqnarray*}
\text{wt}(\bc_{(a,b,c)})&=&p^e-N_{(a,b,c)}\\
&=&\begin{cases}
p^e &\text{if $a=0,b=0$,}\\
p^e-p^{e-1}     &\text{if $a=0, b \in \gf_{p^e}^*$,}\\
p^e-p^{e-1}-\frac{\varepsilon(p-1)\sqrt{p^*}^e}{p}     &\text{if $a \in \gf_p^*, f^*(-\frac{b}{a})+a^{-1}c=0$,}\\
p^e-p^{e-1}+\frac{\varepsilon\sqrt{p^*}^e}{p}     &\text{if $a \in \gf_p^*, f^*(-\frac{b}{a})+ a^{-1}c \neq 0$.}
\end{cases}
\end{eqnarray*}
If $e$ is an odd integer, we have
\begin{eqnarray*}
\text{wt}(\bc_{(a,b,c)})&=&p^e-N_{(a,b,c)}\\
&=&\left\{\begin{array}{lll}
p^e \qquad\qquad\quad \mbox{if $a=0,b=0$,}\\
p^e-p^{e-1}    \quad \quad \mbox{if $a=0, b \in \gf_{p^e}^*$, or $a \in \gf_p^*, f^*(-\frac{b}{a})+a^{-1}c=0$,} \\
p^e-p^{e-1}-\frac{\varepsilon \sqrt{p^*}^eG(\eta_0, \lambda_1)}{p} \\
\qquad\qquad\qquad \thinspace \mbox{if $a \in \gf_p^*, f^*(-\frac{b}{a})+ a^{-1}c \neq 0$ and $\eta_0(f^*(-\frac{b}{a})+ a^{-1}c)=1$,}\\
p^e-p^{e-1}+\frac{\varepsilon \sqrt{p^*}^eG(\eta_0, \lambda_1)}{p}  \\
\qquad\qquad\qquad \thinspace \mbox{if $a \in \gf_p^*, f^*(-\frac{b}{a})+ a^{-1}c \neq 0$ and $\eta_0(f^*(-\frac{b}{a})+ a^{-1}c)=-1$.}
\end{array} \right.
\end{eqnarray*}

Summarizing the above discussions, if $e$ is an even integer,
\begin{eqnarray*}
\text{wt}(\bc_{(a,b,c)})&=&\begin{cases}
0 &\text{if $a=0,b=0,c=0$,}\\
p^e &\text{if $a=0, b =0, c \in \gf_p^*$,}\\
p^e-p^{e-1}     &\text{if $a=0, b \in \gf_{p^e}^*, c \in \gf_p$, }\\
p^e-p^{e-1}-\frac{\varepsilon(p-1)\sqrt{p^*}^e}{p}    &\text{if $a \in \gf_p^*, f^*(-\frac{b}{a})+a^{-1}c=0$,}\\
p^e-p^{e-1}+\frac{\varepsilon \sqrt{p^*}^e}{p}     &\text{if $a \in \gf_p^*, f^*(-\frac{b}{a})+ a^{-1}c \neq 0$.}
\end{cases}
\end{eqnarray*}
If $e$ is an odd integer,
\begin{eqnarray*}
\text{wt}(\bc_{(a,b,c)})&=&\left\{\begin{array}{lll}
0 \qquad\qquad \mbox{if $a=0,b=0,c=0$,}\\
p^e \quad\qquad \enspace \mbox{if $a=0, b=0, c \in \gf_p^*$,}\\
p^e-p^{e-1}   \quad  \mbox{if $a=0, b \in \gf_{p^e}^*, c \in \gf_p$, or $a \in \gf_p^*, f^*(-\frac{b}{a})+a^{-1}c=0$},\\
p^e-p^{e-1}-\frac{\varepsilon \sqrt{p^*}^eG(\eta_0, \lambda_1)}{p}  \\  \qquad\qquad\quad \mbox{if $a \in \gf_p^*, f^*(-\frac{b}{a})+a^{-1}c \neq 0$ and $\eta_0(f^*(-\frac{b}{a})+a^{-1}c)=1$},\\
p^e-p^{e-1}+\frac{\varepsilon \sqrt{p^*}^eG(\eta_0, \lambda_1)}{p}  \\   \qquad\qquad\quad \mbox{if $a \in \gf_p^*, f^*(-\frac{b}{a})+ a^{-1}c \neq 0$ and $\eta_0(f^*(-\frac{b}{a})+a^{-1}c)=-1$.}
\end{array} \right.
\end{eqnarray*}

When $e$ is even, we denote by $w_1=p^e$, $w_2=p^e-p^{e-1}$, $w_3=p^e-p^{e-1}-\frac{\varepsilon(p-1)\sqrt{p^*}^e}{p}$, $w_4=p^e-p^{e-1}+\frac{\varepsilon\sqrt{p^*}^e}{p}$. Next we will determine the frequency $A_{w_i}, 1 \leq i \leq 4$. It is easy to deduce that $A_{w_1}=p-1$ and $A_{w_2}=p(p^{e}-1)$. Note that $$A_{w_3}=\left|\left\{(a,b,c) \in \gf_p^* \times \gf_{p^e} \times \gf_p: f^*\left(-\frac{b}{a}\right)+a^{-1}c=0\right\}\right|.$$ For fixed $a$ and $b$, then the equation $f^*\left(-\frac{b}{a}\right)+a^{-1}c=0$ with variable $c$ has the unique solution. Hence, the number of $(a,b,c)$ is $(p-1)p^e$ and $A_{w_3}=(p-1)p^e$. Since $$A_{w_4}=\left|\left\{(a,b,c) \in \gf_p^* \times \gf_{p^e} \times \gf_p: f^*\left(-\frac{b}{a}\right)+a^{-1}c\neq 0\right\}\right|,$$ we deduce $A_{w_4}=p^{e+1}(p-1)-A_{W_3}=(p-1)^2p^e$.

When $e$ is odd, we denote by $w_1=p^e$, $w_2=p^e-p^{e-1}$, $w_3=p^e-p^{e-1}-\frac{\varepsilon \sqrt{p^*}^eG(\eta_0, \lambda_1)}{p}$, $w_4=p^e-p^{e-1}+\frac{\varepsilon \sqrt{p^*}^eG(\eta_0, \lambda_1)}{p}$. Next we will determine the frequency $A_{w_i}, 1 \leq i \leq 4$. It is easy to deduce that $A_{w_1}=p-1$ and $A_{w_2}=p(p^{e}-1)+p^e(p-1)=2p^{e+1}-p^e-p$. Note that $$A_{w_3}=\left|\left\{(a,b,c) \in \gf_p^* \times \gf_{p^e} \times \gf_p: f^*\left(-\frac{b}{a}\right)+a^{-1}c\neq 0, \eta_0\left(f^*\left(-\frac{b}{a}\right)+a^{-1}c\right)=1\right\}\right|.$$ For fixed $a\in \gf_p^*$, $b\in \gf_{p^e}$ and $\delta \in \langle \gamma^2 \rangle$, the solution of the equation $f^*(-\frac{b}{a})+a^{-1}c=\delta$ with variable $c$ is unique. Then the number of $(a,b,c)$ is $\frac{(p-1)^2p^e}{2}$ and $A_{w_3}=\frac{(p-1)^2p^e}{2}$. Similarly, $A_{w_4}=\frac{(p-1)^2p^e}{2}$.

Then the weight distribution of $\cC_f$ follows. We also deduce that $\mathbf{1} \in \cC_f$ and $\cC_f$ is $p$-divisible for $e \geq 3$ by the above discussions. Then by Lemma \ref{lem-self-orthogonal}, $\cC_{f}$ is a self-orthogonal code.

Finally, we will determine the parameters of $\cC_{f}^{\perp}$. If $e$ is even, by the first four Pless power moments in \cite[Page 260]{H}, we have
\begin{eqnarray*}
A_1^{\perp}=A_2^{\perp}=0,\ A_3^{\perp}=\frac{p^{e-1}(p-1)\left(p(p^e-2p^{e-1}-p+2)+ \varepsilon \eta_0^{\frac{e}{2}}(-1)p^{\frac{e}{2}}(p^2-3p+2)\right)}{6}.
\end{eqnarray*}
When $e =2$ and $\varepsilon \eta_0(-1)=-1$, it is easy to deduce that $A_3^{\perp} = 0$ and $$A_4^{\perp}=\frac{p^2(p-1)^2(p^4-p^3-5p^2+3p+6)}{24} > 0$$ by the fifth Pless power moments in \cite[Page 260]{H}. In this case, the parameters of $\cC_{f}^{\perp}$ are $[p^2, p^2-4, 4]$. In other cases, the parameters of $\cC_{f}^{\perp}$ are $[p^e, p^e-e-2, 3]$. If $e$ is odd, by the first four Pless power moments in \cite[Page 260]{H}, we have
\begin{eqnarray*}
A_1^{\perp}=A_2^{\perp}=0,\ A_3^{\perp}=\frac{p^{e}(p-1)(p^e-2p^{e-1}-p+2)}{6}.
\end{eqnarray*}
Note that $A_3^{\perp} > 0$ as $e \geq 3$ is an odd integer. Then the parameters of $\cC_{f}^{\perp}$ is $[p^e, p^e-e-2, 3]$.
The proof is completed by the sphere packing bound in Lemma \ref{sphere}.
\end{proof}

The following corollary shows that $\cC_f$ holds 2-designs in some special cases.
\begin{corollary}\label{col-design2}
  Let $q=p^2$ and $f(x) \in \cRF$ with $\varepsilon = \pm 1$ the sign of its Walsh transform, where $p$ is an odd prime. Let $\eta_0$ be the quadratic multiplicative character of $\gf_p$ and $\varepsilon \eta_0(-1)=-1$. Then the linear code $\cC_{f}$ has parameters $[p^2, 4, p^2-p-1]$ and weight enumerator
  \begin{eqnarray*}
  A(z)=1+p^2(p-1)^2z^{p^2-p-1}+p(p^2-1)z^{p^2-p}+p^2(p-1)z^{p^2-1}+(p-1)z^{p^2}.
  \end{eqnarray*}
  The dual code $\cC_{f}^{\perp}$ has parameters $[p^2, p^2-4, 4]$. Besides, the minimum weight codewords in $\cC_{f}$ hold a $2$-$(p^2, p^2-p-1, (p-2)(p^2-p-1))$ design and its complementary design is a  $2$-$(p^2, p+1, p)$ design. The minimum weight codewords in $\cC_{f}^{\perp}$ hold a $2$-$(p^2, 4, \frac{(p^2-3)(p-2)}{2})$ design.
\end{corollary}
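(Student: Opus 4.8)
The plan is to derive everything from Theorem~\ref{tem-wtCD2} by specializing to $e=2$. Setting $e=2$ and using $\sqrt{p^*}^{\,2}=p^*=\eta_0(-1)p$, the hypothesis $\varepsilon\eta_0(-1)=-1$ turns the quantity $\frac{\varepsilon(p-1)\sqrt{p^*}^{\,e}}{p}$ into $-(p-1)$ and $\frac{\varepsilon\sqrt{p^*}^{\,e}}{p}$ into $-1$. Substituting these into Table~\ref{tab3} collapses the weights to $p^2-p-1,\ p^2-p,\ p^2-1,\ p^2$ with frequencies $p^2(p-1)^2,\ p(p^2-1),\ p^2(p-1),\ p-1$, which is exactly the claimed enumerator $A(z)$; hence $\cC_f$ is $[p^2,4,p^2-p-1]$. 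The parameters $[p^2,p^2-4,4]$ of $\cC_f^{\perp}$ are read off directly from the same theorem in this case.

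Next I would invoke the Assmus--Mattson Theorem~\cite{H} to produce the designs. The subtlety is that the complete weight distribution of $\cC_f^{\perp}$ is not available, so I would apply the theorem with $\cC_f^{\perp}$ playing the role of the primary code (of minimum distance $d^{\perp}=4$) and $\cC_f=(\cC_f^{\perp})^{\perp}$ as its dual, whose weight spectrum we know completely. Taking $t=2$, the nonzero weights of $\cC_f$ lying in the range $[1,\,p^2-2]$ are precisely $p^2-p-1$ and $p^2-p$, so the relevant count is $s=2\le d^{\perp}-t=4-2=2$. The hypothesis is therefore met, and a single application simultaneously yields $2$-designs from the weight classes of both $\cC_f^{\perp}$ and $\cC_f$; in particular the weight-$(p^2-p-1)$ codewords of $\cC_f$ and the weight-$4$ codewords of $\cC_f^{\perp}$ each support a $2$-design.

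Finally I would pin down the index $\lambda$ of each design via $\lambda=b\binom{\kappa}{2}\big/\binom{p^2}{2}$, where $b=A_\kappa/(p-1)$ is the number of distinct supports; for a minimum-weight class two codewords share a support if and only if they are scalar multiples, so each support comes from exactly $p-1$ codewords. With $A_{p^2-p-1}=p^2(p-1)^2$ one gets $b=p^2(p-1)$ and, after canceling $p^2-1$ and using $p^2-p-2=(p-2)(p+1)$, the value $\lambda=(p-2)(p^2-p-1)$ for $\cC_f$. The complementary blocks have size $p^2-(p^2-p-1)=p+1$, and the standard identity $\lambda_{\mathrm{comp}}=b-2r+\lambda$ with $r=b\kappa/p^2=(p-1)(p^2-p-1)$ gives $\lambda_{\mathrm{comp}}=p$, i.e.\ a $2$-$(p^2,p+1,p)$ design. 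For $\cC_f^{\perp}$, substituting $A_4^{\perp}=\frac{p^2(p-1)^2(p^4-p^3-5p^2+3p+6)}{24}$ from the proof of Theorem~\ref{tem-wtCD2} and factoring $p^4-p^3-5p^2+3p+6=(p^2-3)(p-2)(p+1)$ yields $\lambda=\frac{(p^2-3)(p-2)}{2}$.

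The main obstacle is the Assmus--Mattson bookkeeping: one must recognize that the divisibility-style hypothesis should be checked against the small, explicitly known weight spectrum of $\cC_f$ rather than the unknown spectrum of $\cC_f^{\perp}$, and that orienting the theorem with $\cC_f^{\perp}$ as the primary code lets a single application deliver both families of designs at once. Once this is set up, the remaining $\lambda$-computations and the complement identity are routine verifications.
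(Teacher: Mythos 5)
Your proposal is correct and follows essentially the same route as the paper, whose proof is a one-line appeal to the weight distribution from Theorem \ref{tem-wtCD2} and the Assmus--Mattson Theorem; you have simply filled in the details (the $e=2$ specialization, the correct orientation of Assmus--Mattson with $\cC_f^{\perp}$ as the primary code so that the two-weight count $s=2\le d-t$ is checkable, and the $\lambda$ computations, all of which check out). No gaps.
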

\begin{proof}
  The desired conclusions follow from the proof of Theorem \ref{tem-wtCD2} and the Assmus-Mattson Theorem \cite{H}.
\end{proof}
\subsection{Optimally extendable codes}
In this subsection, we will prove that $\cC_{f}$ is an almost optimally or optimally extendable code if we select a suitable generator matrix.

Let $\gf_{q}^*=\langle\beta\rangle$ for $q=p^e$. Then $\{1,\beta, \beta^{2},\ldots, \beta^{e-1}\}$ is a $\gf_p$-basis of $\gf_{q}$.
Let $\cC_f$ be the code defined in Theorem \ref{tem-wtCD2}. Let $d_1,d_2,\cdots,d_{q-1}, d_q$ denote all elements of $\gf_q$, where $d_q=0$.
By definition, the generator matrix $G$ of $\cC_f$ is given by
\begin{eqnarray*}
G:=\left[
\begin{array}{cccc}
1&1&\cdots&1 \\
 f(d_1) & f(d_2) & \cdots &f(d_q)\\
 \tr_{q/p}(\beta^{0}d_{1})& \tr_{q/p}(\beta^{0}d_{2})& \cdots &\tr_{q/p}(\beta^{0}d_{q}) \\
\tr_{q/p}(\beta^{1}d_{1})& \tr_{q/p}(\beta^{1}d_{2})& \cdots &\tr_{q/p}(\beta^{1}d_{q}) \\
\vdots &\vdots &\ddots &\vdots \\
\tr_{q/p}(\beta^{e-1}d_{1})& \tr_{q/p}(\beta^{e-1}d_{2})& \cdots &\tr_{q/p}(\beta^{e-1}d_{q})
\end{array}\right].
\end{eqnarray*}
However, we find that the dual of the linear code $\widetilde{\cC_f'}$ generated by the matrix $G':=[I_{e+2,e+2}:G]$ has minimum distance 2.
By an elementary row transformation on $G$, we obtain another generator matrix $G_2$ of $\cC_f$  given by
\begin{eqnarray}\label{matrix2}
G_2:=\left[
\begin{array}{cccc}
1&1&\cdots&1 \\
 f(d_1)+1 & f(d_2)+1 & \cdots &f(d_q)+1\\
 \tr_{q/p}(\beta^{0}d_{1})& \tr_{q/p}(\beta^{0}d_{2})& \cdots &\tr_{q/p}(\beta^{0}d_{q}) \\
\tr_{q/p}(\beta^{1}d_{1})& \tr_{q/p}(\beta^{1}d_{2})& \cdots &\tr_{q/p}(\beta^{1}d_{q}) \\
\vdots &\vdots &\ddots &\vdots \\
\tr_{q/p}(\beta^{e-1}d_{1})& \tr_{q/p}(\beta^{e-1}d_{2})& \cdots &\tr_{q/p}(\beta^{e-1}d_{q})
\end{array}\right].
\end{eqnarray}

Let $\cC_f'$ be the linear code with generator matrix $[I_{e+2, e+2}: G_2]$. We first derive the parameters of $\cC_{f}'$ and $\cC_{f}'^{\perp}$ for even $e$.

\begin{theorem}\label{th-LCD2}
  Let $q=p^e$, where $p$ is an odd prime and $e \geq 2$ is an even integer. Let $f(x) \in \mathcal{RF}$ with $\varepsilon$ the sign of the Walsh transform of $f(x)$. Let $G_2$ be defined above and $G_2':=[I_{e+2, e+2}: G_2]$. If $\varepsilon (\eta_0(-1))^{\frac{e}{2}}=1$, then the linear code $\cC_f'$ generated by $G_2'$ has parameters
  $$\left[p^e+e+2, e+2, p^e-p^{e-1}-(p-1)p^{\frac{e}{2}-1}+2\right].$$
  If $\varepsilon (\eta_0(-1))^{\frac{e}{2}}=-1$, then the linear code $\cC_f'$ generated by $G_2'$ has parameters
  $$\left[p^e+e+2, e+2, p^e-p^{e-1}-p^{\frac{e}{2}-1}+1\right].$$
Besides, $\cC_f'^{\perp}$ has parameters $[p^e+e+2, p^e, 3]$ and is at least almost optimal according to the sphere-packing bound.
\end{theorem}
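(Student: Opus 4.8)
The plan is to compute the weight of a general codeword of $\cC_f'$ by splitting it according to the two blocks of the generator matrix $G_2'=[I_{e+2,e+2}\mid G_2]$, and then to read everything off from the weight distribution of $\cC_f$ in Theorem \ref{tem-wtCD2} (Table \ref{tab3}). Parametrising a message as $v=(l,a,b_0,\dots,b_{e-1})\in\gf_p^{e+2}$ and setting $b=\sum_{j=0}^{e-1}b_j\beta^j\in\gf_q$, the codeword is $vG_2'=(v,\,vG_2)$, and expanding the rows of $G_2$ (whose second row is $f(d_i)+1$) gives the $i$-th entry of $vG_2$ as $af(d_i)+\tr_{q/p}(bd_i)+(l+a)$. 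Thus $vG_2=\bc_{(a,b,l+a)}$ is a codeword of $\cC_f$, and $\wt(vG_2')=\wt(v)+\wt(\bc_{(a,b,l+a)})$; the whole problem reduces to minimising $\wt(v)+\wt(\bc_{(a,b,l+a)})$ over $v\neq\bzero$. Writing $\kappa:=\varepsilon(\eta_0(-1))^{\frac{e}{2}}$ so that $\varepsilon\sqrt{p^*}^e/p=\kappa\,p^{\frac{e}{2}-1}$, Table \ref{tab3} gives the minimum weight of $\cC_f$ as $d=p^e-p^{e-1}-(p-1)p^{\frac{e}{2}-1}$ for $\kappa=1$ and $d=p^e-p^{e-1}-p^{\frac{e}{2}-1}$ for $\kappa=-1$.

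Since $G_2$ has full rank $e+2$, any $v\neq\bzero$ gives $vG_2\neq\bzero$, hence $\wt(vG_2)\geq d$; together with $\wt(v)\geq1$ this yields the universal bound $\wt(vG_2')\geq d+1$. For $\kappa=-1$ the minimum weight of $\cC_f$ is attained exactly when $f^*(-b/a)+a^{-1}c\neq0$, so I would take $v=(0,1,0,\dots,0)$, for which $a=1$, $c=l+a=1$, and $f^*(0)+1=1\neq0$ by Lemma \ref{lem-f*1}; then $\wt(v)=1$ and $\wt(vG_2)=d$, realising $d+1=p^e-p^{e-1}-p^{\frac{e}{2}-1}+1$ as claimed.

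For $\kappa=1$ the bound $d+1$ is not attained, and this is the crux. A total of $d+1$ would force $\wt(v)=1$ with $\wt(vG_2)=d$, so I would run through the three weight-one messages: $l\neq0$ alone gives $\bc_{(0,0,l)}$ of weight $p^e$; a single $b_j\neq0$ gives $\bc_{(0,b,0)}$ of weight $p^e-p^{e-1}$; and $a\neq0$ alone forces $c=a$ with $f^*(0)+1\neq0$, giving weight $p^e-p^{e-1}+p^{\frac{e}{2}-1}$. None of these equals $d=p^e-p^{e-1}-(p-1)p^{\frac{e}{2}-1}$, so $\wt(vG_2')\geq d+2$. Finally $v=(-1,1,0,\dots,0)$ has $\wt(v)=2$, $c=l+a=0$, $f^*(0)+0=0$, hence $\wt(vG_2)=d$, attaining $d+2=p^e-p^{e-1}-(p-1)p^{\frac{e}{2}-1}+2$.

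For the dual, $d(\cC_f'^{\perp})$ equals the least number of $\gf_p$-linearly dependent columns of $G_2'$. No two columns are dependent: two unit columns are independent; a column of $G_2$ has Hamming weight at least $2$ (first coordinate $1$, and for $d_i\neq0$ some trace coordinate is nonzero, while the $d_i=0$ column equals $\mathbf{e}_1+\mathbf{e}_2$ since $f(0)=0$), so it is never a scalar multiple of a unit column; and two columns of $G_2$ agree only if $d_i=d_j$, as the trace rows separate field elements. Hence $d(\cC_f'^{\perp})\geq3$, while $\mathbf{e}_1$, $\mathbf{e}_2$ and the $d_i=0$ column are dependent, giving $d(\cC_f'^{\perp})=3$; the sphere-packing bound (Lemma \ref{sphere}) then excludes a code of the same length and dimension with distance $5$, so $\cC_f'^{\perp}$ is at least almost optimal. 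The main obstacle is precisely the $\kappa=1$ case: the generic argument only delivers $d+1$, and one must verify by the exhaustive weight-one check above that the modified second row of $G_2$ prevents any unit message from landing on a minimum-weight codeword of $\cC_f$ (the same modification is what raises the dual distance from $2$ to $3$).
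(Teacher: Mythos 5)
Your proposal is correct and follows essentially the same route as the paper: decompose $vG_2'=(v,\bc_{(a,b,l+a)})$, read the weights off Table \ref{tab3}, note that in the $\kappa=1$ case the minimum-weight condition $f^*(-b/a)+a^{-1}(l+a)=0$ cannot hold for a weight-one message (forcing $d\geq d(\cC_f)+2$, attained at $(-a,a,0,\dots,0)$), and establish $d^\perp=3$ by pairwise independence of columns of $G_2'$ together with the dependent triple $\mathbf{e}_1,\mathbf{e}_2$ and the $d_i=0$ column. The only cosmetic difference is that you verify the $\kappa=1$ lower bound by exhausting the three weight-one messages while the paper observes directly that the minimizing condition forces $(b,c)\neq(0,0)$; these are equivalent.
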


\begin{proof}
Obviously, the length of $\cC_f'$ is $q+e+2$ and the dimension of $\cC_f'$ is $e+2$. Similarly to the proof of Theorem \ref{th-LCD1}, the minimum distance $d$ of $\cC_f'$ satisfies that $d \geq d(\cC_{f})+1$, where $d(\cC_{f})$ denotes  the minimum distance of $\cC_{f}$ and was given in Theorem \ref{tem-wtCD2}. For a codeword $\ba'$ in $\cC_f'$, we have
\begin{eqnarray*}
\ba'&=&(c,a,a_0,a_1 \cdots, a_{e-1})G_1'\\&=&(c,a,a_0, \cdots, a_{e-1}, af(d_1)+\sum_{i=0}^{e-1}a_i\tr_{q/p}(\beta^{i}d_{1})+c+a, \cdots, af(d_q)+\sum_{i=0}^{e-1}a_i\tr_{q/p}(\beta^{i}d_{q})+c+a),
\end{eqnarray*}
where $c, a, a_i \in \gf_p$ for all $0 \leq i \leq m-1$. Then $\ba'=(c,a,a_0,a_1 \cdots, a_{e-1}, \ba)$, where $$\ba=(c,a,a_0,a_1 \cdots, a_{e-1})G_1=(af(x)+\tr_{p^e/p}(bx) + c+a)_{x \in \gf_q}$$ is a codeword in $\cC_f$, where $b:=\sum_{i=0}^{e-1}a_i\beta^i$. Now we consider the following two cases:

{Case 1:} Let $\varepsilon (\eta_0(-1))^{\frac{e}{2}}=1$.
By Theorem \ref{tem-wtCD2},  we deduce $\text{wt}(\ba)=d(\cC_f)=p^e-p^{e-1}-(p-1)p^{\frac{e}{2}-1}$ if and only if $a \in \gf_p^*$ and $f^*(-\frac{b}{a})+a^{-1}(c+a)=0$. Note that $f^*(-\frac{b}{a})+a^{-1}(c+a)=0$ implies $(b,c)\neq (0,0)$. Besides, $b=0$ if and only if $(a_0,a_1,\cdots,a_{e-1})=(0,0,\cdots,0)$.
Then $d \geq d(\cC_{f})+2$.
In particular, if $a \in \gf_p^*$, $b=0$ and $c=-a$, then $f^*(-\frac{b}{a})+a^{-1}(c+a)=0$ and $\text{wt}(\ba')=\text{wt}(\ba)+2=d(\cC_{f})+2$.
Hence $d=d(\cC_{f})+2$.

{Case 2:} Let $\varepsilon (\eta_0(-1))^{\frac{e}{2}}=-1$. By Theorem \ref{tem-wtCD2},  we deduce $\text{wt}(\ba)=d(\cC_f)=p^e-p^{e-1}-p^{\frac{e}{2}-1}$ if and only if $a \in \gf_p^*$ and $f^*(-\frac{b}{a})+a^{-1}(c+a)\neq0$.
If $a\neq 0$, $b=0$ and $c=0$, then $f^*(-\frac{b}{a})+a^{-1}(c+a)=1$ and $\text{wt}(\ba')=\text{wt}(\ba)+1=d(\cC_{f})+1$.
Hence $d=d(\cC_{f})+1=p^e-p^{e-1}-p^{\frac{e}{2}-1}+1$.

 It is clear that $\cC_f'^{\perp}$ has length $q+e+2$ and dimension $q$. In the following we prove that the minimum distance $d^\perp$ of $\cC_f'^{\perp}$ is $3$.
Note that $G_2'$ is a check matrix of $\cC_f'^{\perp}$ and any column vector in $G_2'$ is nonzero. Hence, $d^{\perp} \geq 2$.
Note that any two columns of the first $e+2$ columns of matrix $G_2'$ are $\gf_p$-linearly independent and any two columns of the last $q$ columns of matrix $G_2'$ are also $\gf_p$-linearly independent. Select any column $\bg_1$ in the first $e+2$ columns of matrix $G_2'$ and any column $\bg_2$ in the last $q$ columns of matrix $G_2'$. If $\bg_1 \neq (1,0,\cdots,0)$, it is easy to deduce that $\bg_1$ and $\bg_2$ are $\gf_p$-linearly independent. If $\bg_1 = (1,0,\cdots,0)$, then $\bg_1$ and $\bg_2$ are $\gf_p$-linearly dependent if and only if
\begin{eqnarray}\label{system}
\left\{\begin{array}{c}
  f(d_i)+1=0, \\
  \tr_{q/p}(\beta^{0}d_{i})=0, \\
  \tr_{q/p}(\beta^{1}d_{i})=0, \\
  \vdots \\
  \tr_{q/p}(\beta^{e-1}d_{i})=0,
\end{array}\right.
\end{eqnarray}
where $d_i \in \gf_q$. This system of equations implies $d_i\in \gf_q^*$.
For any $x=\sum_{l=0}^{e-1}k_l\beta^l \in \gf_{q}, k_l \in \gf_p$, the System (\ref{system}) implies
$$
\tr_{q/p}(xd_i)=\sum_{l=0}^{e-1}k_l \tr_{q/p}(\beta^ld_i)=0.
$$
This contradicts with the fact that $|\ker(\tr_{q/p})|=p^{e-1}$.
Hence, $\bg_1$ and $\bg_2$ are $\gf_p$-linearly independent and $d^{\perp} \geq 3$. Besides, the first, second and last columns of $G_2'$ are $\gf_p$-linearly dependent. Then $d^{\perp}=3$.
\end{proof}

We then derive the parameters of $\cC_{f}'$ and $\cC_{f}'^{\perp}$ for odd $e$.

\begin{theorem}\label{th-LCD3}
  Let $q=p^e$, where $p$ is an odd prime and $e \geq 3$ is an odd integer. Let $f(x) \in \mathcal{RF}$ with $\varepsilon$ the sign of its Walsh transform. Let $G_2$ be defined above and $G_2':=[I_{e+2, e+2}: G_2]$. If $\varepsilon (\eta_0(-1))^{\frac{e+1}{2}}=1$, then the linear code $\cC_{f}'$ generated by matrix $G_2'$ has parameters $$
    \left[p^e+e+2, e+2, p^e-p^{e-1}-p^{\frac{e-1}{2}}+1\right].$$
  If $\varepsilon (\eta_0(-1))^{\frac{e+1}{2}}=-1$, then the linear code $\cC_{f}'$ generated by matrix $G_2'$ has parameters
  $$
    \left[p^e+e+2, e+2, p^e-p^{e-1}-p^{\frac{e-1}{2}}+2\right].$$
Besides, $\cC_{f}'^{\perp}$ has parameters $[p^e+e+2, p^e, 3]$ and is at least almost optimal according to the sphere-packing bound.
\end{theorem}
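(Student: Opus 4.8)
The plan is to run the argument of Theorem~\ref{th-LCD2} essentially verbatim, substituting the odd-$e$ weight data of Table~\ref{tab4} for the even-$e$ data of Table~\ref{tab3}; the single genuinely new point is a Gauss-sum sign computation that interchanges the two cases. First I would record, from Lemma~\ref{quadGuasssum1} applied over $\gf_p$, the identity $G(\eta_0,\lambda_1)=\sqrt{p^*}$ (checking $p\equiv 1$ and $p\equiv 3\pmod 4$ separately). Since $e$ is odd, $e+1$ is even, so
\begin{align*}
\frac{\varepsilon\sqrt{p^*}^{e}G(\eta_0,\lambda_1)}{p}
&=\frac{\varepsilon\sqrt{p^*}^{e+1}}{p}
=\frac{\varepsilon(p^*)^{(e+1)/2}}{p}\\
&=\varepsilon(\eta_0(-1))^{(e+1)/2}p^{(e-1)/2}.
\end{align*}
Writing $\mu:=\varepsilon(\eta_0(-1))^{(e+1)/2}$, the two middle weights of Table~\ref{tab4} become $p^e-p^{e-1}\pm\mu p^{(e-1)/2}$, so in either case $d(\cC_f)=p^e-p^{e-1}-p^{(e-1)/2}$, and by the proof of Theorem~\ref{tem-wtCD2} a codeword $\bc_{(a,b,c')}$ of $\cC_f$ attains this weight exactly when $a\in\gf_p^*$, $f^*(-b/a)+a^{-1}c'\neq 0$ and $\eta_0(f^*(-b/a)+a^{-1}c')=\mu$.

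Next I would decompose a generic codeword of $\cC_f'$ as in Theorem~\ref{th-LCD2}, writing $\ba'=(c,a,a_0,\dots,a_{e-1},\ba)$ with $b:=\sum_{i=0}^{e-1}a_i\beta^i$ and $\ba=(af(x)+\tr_{q/p}(bx)+(c+a))_{x\in\gf_q}\in\cC_f$, so that $\mathrm{wt}(\ba')=\mathrm{wt}(c,a,a_0,\dots,a_{e-1})+\mathrm{wt}(\ba)$; the offset controlling $\mathrm{wt}(\ba)$ is $\theta:=f^*(-b/a)+a^{-1}(c+a)$. For the lower bound I would note that, for nonzero $\ba'$, either $\mathrm{wt}(\ba)=d(\cC_f)$ or $\mathrm{wt}(\ba)\geq p^e-p^{e-1}=d(\cC_f)+p^{(e-1)/2}\geq d(\cC_f)+3$ (the value $0$ being attained only at $\ba'=\mathbf{0}$, since $\cC_f$ has dimension $e+2$), because $e\geq 3$ forces $p^{(e-1)/2}\geq p\geq 3$. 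When $\mathrm{wt}(\ba)=d(\cC_f)$ the above characterization gives $a\in\gf_p^*$, so the prefix $(c,a,a_0,\dots,a_{e-1})$ already has weight at least $1$.

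The hard part, and where the two cases part ways, will be the prefix of weight exactly $1$: this forces its single nonzero entry to be $a$, i.e. $b=0$ and $c=0$, whence $\theta=f^*(0)+a^{-1}a=1$ and $\eta_0(\theta)=1$. Thus if $\mu=1$ this prefix is compatible with $\mathrm{wt}(\ba)=d(\cC_f)$, giving $d(\cC_f')=d(\cC_f)+1$, realized by $a\in\gf_p^*$, $b=0$, $c=0$; whereas if $\mu=-1$ it is incompatible (as $\eta_0(1)=1\neq\mu$), forcing prefix weight $\geq 2$ and hence $d(\cC_f')\geq d(\cC_f)+2$, with equality realized by $a=1$, $b=0$ and any $c\in\gf_p^*$ for which $c+1$ is a nonsquare (such $c$ exists because $p\geq 3$). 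This is precisely the interchange asserted in the statement, and the only delicate bookkeeping is matching the residue symbol $\eta_0(\theta)=\mu$ of the minimum-weight class against the fixed value $\eta_0(1)=1$ produced by the weight-one prefix.

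Finally, the claim $d(\cC_f'^{\perp})=3$ I would prove verbatim as in Theorem~\ref{th-LCD2}: $G_2'$ is a parity-check matrix of $\cC_f'^{\perp}$ with every column nonzero, any two columns within the identity block or within the $G_2$-block are $\gf_p$-independent, and a column $(1,0,\dots,0)^{T}$ of the identity block together with a $G_2$-column indexed by $d_i$ can be dependent only if $f(d_i)=-1$ and $\tr_{q/p}(\beta^j d_i)=0$ for all $j$; the trace conditions force $d_i=0$ through $|\ker(\tr_{q/p})|=p^{e-1}$, contradicting $f(0)=0\neq-1$. Hence $d^{\perp}\geq 3$, while the first, second and last columns of $G_2'$ are dependent, so $d^{\perp}=3$. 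The length $p^e+e+2$ and dimension $p^e$ are immediate, and at-least-almost-optimality follows from the sphere-packing bound of Lemma~\ref{sphere}, exactly as before.
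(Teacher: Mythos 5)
Your proposal is correct and takes the same route the paper intends: the paper's own proof of this theorem is a one-line deferral to Theorem \ref{th-LCD2}, and your argument is exactly that adaptation with the details filled in — the identity $G(\eta_0,\lambda_1)=\sqrt{p^*}$ giving $\frac{\varepsilon\sqrt{p^*}^{e}G(\eta_0,\lambda_1)}{p}=\varepsilon(\eta_0(-1))^{(e+1)/2}p^{(e-1)/2}$, the resulting characterization of minimum-weight codewords via $\eta_0(\theta)=\mu$, the prefix-weight case analysis (including the explicit weight-$(d+2)$ codeword with $c+1$ a nonsquare when $\mu=-1$), and the verbatim dual-distance argument all check out.
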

\begin{proof}
  Similarly to the proof of Theorem \ref{th-LCD2}, the desired conclusions can be derived.
\end{proof}

\begin{theorem}\label{th-extend}
  Let $\cC_{f}$ be the linear code in Theorem \ref{tem-wtCD2} and $\cC_f'$ be the linear code with generator matrix $[I_{e+2, e+2}: G_2]$. If $e =2$ and $\varepsilon \eta_0(-1)=-1$, then $\cC_f$ is an almost optimally extendable self-orthogonal code. In all other cases, $\cC_f$ is an optimally extendable self-orthogonal code.
\end{theorem}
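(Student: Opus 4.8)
The plan is to reduce the theorem to a comparison of two dual minimum distances that have already been computed, exactly as was done for the first family in Theorems \ref{extend1} and \ref{extend2}. By definition, $\cC_f$ is optimally extendable precisely when $d(\cC_f'^{\perp})=d(\cC_f^{\perp})$ and almost optimally extendable when $d(\cC_f'^{\perp})=d(\cC_f^{\perp})-1$, so everything hinges on locating these two numbers. First I would read off $d(\cC_f^{\perp})$ from Theorem \ref{tem-wtCD2}: the dual $\cC_f^{\perp}$ has minimum distance $4$ exactly in the boundary case $e=2$ with $\varepsilon\eta_0(-1)=-1$, and minimum distance $3$ in every other case (for both parities of $e$).

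Next I would invoke the extended-code computations: whether $e$ is even (Theorem \ref{th-LCD2}) or odd (Theorem \ref{th-LCD3}), the code $\cC_f'$ generated by $G_2'=[I_{e+2,e+2}:G_2]$ has dual $\cC_f'^{\perp}$ with parameters $[p^e+e+2,\,p^e,\,3]$, so that $d(\cC_f'^{\perp})=3$ uniformly. The conclusion then follows by a single case split: in the exceptional case $e=2$, $\varepsilon\eta_0(-1)=-1$ one gets $d(\cC_f^{\perp})-d(\cC_f'^{\perp})=4-3=1$, so $\cC_f$ is almost optimally extendable, while in all remaining cases $d(\cC_f^{\perp})-d(\cC_f'^{\perp})=3-3=0$, so $\cC_f$ is optimally extendable. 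The self-orthogonality is inherited directly from Theorem \ref{tem-wtCD2}, where it is obtained from $\mathbf{1}\in\cC_f$ and $p$-divisibility via Lemma \ref{lem-self-orthogonal}.

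The genuinely substantive content is not the comparison itself but the guarantee that $d(\cC_f'^{\perp})=3$ rather than $2$, and this is where the care must be spent. As the remark preceding Theorem \ref{th-LCD2} notes, forming $[I:G]$ from the naive generator matrix $G$ produces a dual of minimum distance only $2$; it is the deliberate elementary row operation yielding $G_2$ (replacing the $f$-row by $(f(d_i)+1)_i$) that pushes the extended dual distance back up to $3$. Thus the hard part of the theorem is really the verification, already carried out in Theorems \ref{th-LCD2} and \ref{th-LCD3}, that with this specific $G_2$ no two columns of $[I_{e+2,e+2}:G_2]$ are $\gf_p$-proportional while some three are dependent. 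A secondary delicate point is the self-orthogonality in the boundary case $e=2$: there $\cC_f$ need not be $p$-divisible, so Lemma \ref{lem-self-orthogonal} does not apply, and this instance would have to be examined directly rather than folded into the general argument.
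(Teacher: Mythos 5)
Your proposal matches the paper's proof exactly: both reduce the claim to the comparison $d(\cC_f^{\perp})-d(\cC_f'^{\perp})$, reading $d(\cC_f^{\perp})\in\{3,4\}$ off Theorem \ref{tem-wtCD2} and $d(\cC_f'^{\perp})=3$ off Theorems \ref{th-LCD2} and \ref{th-LCD3}, so that the difference is $1$ precisely when $e=2$ and $\varepsilon\eta_0(-1)=-1$ and $0$ otherwise. Your closing caveat is also well taken: Theorem \ref{tem-wtCD2} establishes self-orthogonality only for $e\geq 3$ (for $e=2$ the code is not $p$-divisible, so Lemma \ref{lem-self-orthogonal} does not apply), hence the self-orthogonality asserted in the boundary case is not actually supplied by the cited results --- a gap present in the paper's own one-line proof as well, which you correctly identify as requiring a separate direct verification.
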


\begin{proof}
By Theorems \ref{tem-wtCD2}, \ref{th-LCD2} and \ref{th-LCD3}, we have $d(\cC_{f}^{\perp})-d(\cC_{f}'^{\perp}) = 1$ if $e =2$ and $\varepsilon \eta_0(-1)=-1$. In all other cases, $d(\cC_{f}^{\perp})-d(\cC_{f}'^{\perp}) = 0$. Then the desired conclusions follow.
\end{proof}

By Magma, we give two examples in the following to verify the results in Theorem \ref{th-extend}.
\begin{example}\label{exa-3}
Let $p=3, e=4$ and $f(x)=\tr_{3^e/3}(x^2)$. The sign of the Walsh transform of $f(x)$ is $\varepsilon = -1$. By Magma, then the ternary linear codes $\cC_{f}$, $\cC_{f}'$, $\cC_{f}^{\perp}$, $\cC_{f}'^{\perp}$ have parameters $$[81, 6, 51], [87, 6, 52], [81, 75, 3], [87, 81, 3],$$ respectively.
Besides, $\cC_{f}$ is an optimally extendable self-orthogonal code, and all of $\cC_{f}$, $\cC_{f}^{\perp}$ and $\cC_{f}'^{\perp}$ are optimal by the Code Tables in \cite{Codetable}.
\end{example}

\begin{example}\label{exa-4}
Let $p=3, e=3$ and $f(x)=\tr_{3^e/3}(wx^2)$, where $w$ is a primitive element of $\gf_{3^3}$. The sign of the Walsh transform of $f(x)$ is $\varepsilon = -1$. By Magma, then the ternary linear codes $\cC_{f}$, $\cC_{f}'$, $\cC_{f}^{\perp}$, $\cC_{f}'^{\perp}$ have parameters $$[27, 5, 15], [32, 5, 17], [27, 22, 3], [32, 27, 3],$$ respectively.
Besides, $\cC_{f}$ is an optimally extendable self-orthogonal code, and both of $\cC_{f}^{\perp}$ and $\cC_{f}'^{\perp}$ are optimal by the Code Tables in \cite{Codetable}.
\end{example}

\subsection{Optimal locally recoverable codes}
In this subsection, we will first determine the locality of $\cC_f$ and $\cC_f^{\perp}$ under certain conditions. Then we will prove that the locally recoverable codes $\cC_f$ and $\cC_f^{\perp}$ are optimal under certain conditions.

\begin{theorem}\label{loc-cf}
 Let $q=p^2$ and $f(x) \in \cRF$ with $\varepsilon = \pm 1$ the sign of its Walsh transform, where $p$ is an odd prime. Let $\eta_0$ be the quadratic multiplicative character of $\gf_p$ and $\varepsilon \eta_0(-1)=-1$. Then $\cC_{f}$ is a $(p^2,4,p^2-p-1,p;3)$-LRC and $\cC_f^{\perp}$ is a $(p^2,p^2-4,4,p;p^2-p-2)$-LRC. Besides, $\cC_f$ is $k$-optimal, and $\cC_f^{\perp}$ is both $k$-optimal and $d$-optimal.
\end{theorem}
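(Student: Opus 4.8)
The plan is to deduce everything from the structural information already obtained in Corollary \ref{col-design2}, where the case $e=2$, $q=p^2$, $\varepsilon\eta_0(-1)=-1$ was singled out. That corollary tells us $\cC_f$ has parameters $[p^2,4,p^2-p-1]$, that $\cC_f^{\perp}$ has parameters $[p^2,p^2-4,4]$, that the minimum weight codewords of $\cC_f$ support a $2$-$(p^2,p^2-p-1,\cdot)$ design, and that the minimum weight codewords of $\cC_f^{\perp}$ support a $2$-$(p^2,4,\cdot)$ design. Since a $2$-design is in particular a $1$-design with $\lambda\geq 1$, I would invoke Lemma \ref{lem-locality} twice. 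Applied to $\cC=\cC_f$ with $d^{\perp}=d(\cC_f^{\perp})=4$, it gives that $\cC_f$ has locality $d^{\perp}-1=3$; applied to $\cC=\cC_f^{\perp}$, whose dual is $\cC_f$ with minimum distance $p^2-p-1$, it gives that $\cC_f^{\perp}$ has locality $(p^2-p-1)-1=p^2-p-2$. This establishes the two claimed LRC parameter tuples.

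For the $d$-optimality of $\cC_f^{\perp}$, I would substitute $n=p^2$, $k=p^2-4$, $r=p^2-p-2$ into the Singleton-like bound of Lemma \ref{lem-Slbound}. Here $n-k=4$ and $k/r=(p^2-4)/(p^2-p-2)=(p+2)/(p+1)$, so $\lceil k/r\rceil=2$ and the bound reads $d\leq 4-2+2=4$. Since $d(\cC_f^{\perp})=4$, the bound is met with equality and $\cC_f^{\perp}$ is $d$-optimal.

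For the two $k$-optimalities I would use the Cadambe--Mazumdar bound (Lemma \ref{lem-CMbound}) with the single choice $t=1$, which already suffices to pin the dimension. For $\cC_f^{\perp}$ the $t=1$ term is $r+k_{opt}^{(p)}(n-(r+1),d)=(p^2-p-2)+k_{opt}^{(p)}(p+1,4)$, and the Singleton bound gives $k_{opt}^{(p)}(p+1,4)\leq (p+1)-4+1=p-2$, so the right-hand side is at most $p^2-4=k$; combined with $k\leq\min_t[\,\cdots]$ this forces equality and hence $k$-optimality. For $\cC_f$ the $t=1$ term is $3+k_{opt}^{(p)}(p^2-4,p^2-p-1)$, so I must show $k_{opt}^{(p)}(p^2-4,p^2-p-1)\leq 1$. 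The Singleton bound is too weak here, so the key step is the Plotkin bound: for $n=p^2-4$ and $d=p^2-p-1$ one checks $d-(1-p^{-1})n=(3p-4)/p>0$, so Plotkin applies and yields $M\leq\lfloor p(p^2-p-1)/(3p-4)\rfloor$. A short estimate shows $p(p^2-p-1)/(3p-4)<p^2$ (equivalently $p(2p-1)(p-1)>0$), hence $M<p^2$ and $k_{opt}^{(p)}(p^2-4,p^2-p-1)\leq 1$; the $t=1$ term is then $\leq 4=k$, giving $k$-optimality of $\cC_f$.

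The main obstacle is exactly this last bound: controlling $k_{opt}^{(p)}(p^2-4,p^2-p-1)$, since the naive Singleton estimate fails and one must recognize that the large relative minimum distance places us in the Plotkin regime and then verify that the floor collapses so as to force the auxiliary dimension down to $1$. Everything else is a direct substitution into the design and bound machinery already available in the excerpt.
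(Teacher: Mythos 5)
Your proposal is correct and follows essentially the same route as the paper: both derive the localities from Corollary \ref{col-design2} via Lemma \ref{lem-locality}, verify $d$-optimality of $\cC_f^{\perp}$ by the Singleton-like bound with $\lceil k/r\rceil=2$, and establish both $k$-optimalities from the $t=1$ term of the Cadambe--Mazumdar bound, using the classical Singleton bound for $k_{opt}^{(p)}(p+1,4)$ and the Plotkin bound to force $k_{opt}^{(p)}(p^2-4,p^2-p-1)\leq 1$. Your Plotkin estimate $p(p^2-p-1)/(3p-4)<p^2$ is exactly the inequality the paper checks.
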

\begin{proof}
By Corollary \ref{col-design2} and Lemma \ref{lem-locality}, we derive that the locality of $\cC_f$ is $3$ and the locality of $\cC_f^{\perp}$ is $p^2-p-2$. Now we prove $\cC_f$ is a $k$-optimal LRC.
By Lemma \ref{lem-CMbound}, putting $t=1$ and the parameters of the $(p^2,4,p^2-p-1,p;3)-$LRC into the right-hand side of the Cadambe-Mazumdar bound in (\ref{eqn-CMbound}), we have
\begin{eqnarray*}
  k &\leq& r+k_{opt}^{(p)}(n-(r+1),d) \\
   &=& 3+k_{opt}^{(p)}(p^2-4,p^2-p-1).
\end{eqnarray*}
Thanks to the Plotkin bound, we have
\begin{eqnarray*}
  k_{opt}^{(p)}(p^2-4,p^2-p-1) &\leq& \left\lfloor \log_p \left\lfloor \frac{p^2-p-1}{p^2-p-1-(1-\frac{1}{p})(p^2-4)} \right\rfloor \right\rfloor\\
  &=& \left\lfloor \log_p \left\lfloor \frac{p^2-p-1}{3-\frac{4}{p}} \right\rfloor \right\rfloor\\
  &=& \left\lfloor \log_p \left\lfloor \frac{p(p-1-\frac{1}{p})}{3-\frac{4}{p}} \right\rfloor \right\rfloor\\
  &=& 1,
\end{eqnarray*}
where the last equality holds as $p \leq \frac{p(p-1-\frac{1}{p})}{3-\frac{4}{p}} < p^2$. Thus, $k \leq 4$ and $\cC_f$ is $k$-optimal. Then we prove that $\cC_f^{\perp}$ is $k$-optimal and $d$-optimal. By Lemma \ref{lem-Slbound}, putting the parameters of the $(p^2,p^2-4,4,p;p^2-p-2)$-LRC into the right-hand side of the Singleton-like bound in (\ref{eqn-Slbound}), we have
$$
n-k- \left \lceil \frac{k}{r} \right \rceil +2
=p^2-(p^2-4)- \left \lceil \frac{p^2-4}{p^2-p-2} \right \rceil +2=4.
$$
Hence $\cC_f^{\perp}$ is a $d$-optimal LRC.
By Lemma \ref{lem-CMbound}, putting $t=1$ and the parameters of the $(p^2,p^2-4,4,p;p^2-p-2)$-LRC into the right-hand side of the Cadambe-Mazumdar bound in (\ref{eqn-CMbound}), we have
\begin{eqnarray*}
  k  &\leq& r+k_{opt}^{(p)}(n-(r+1),d)\\
  &=& p^2-p-2+k_{opt}^{(p)}(p+1,4)\\
  &\leq&p^2-p-2+p-2\\
  &=&p^2-4,
\end{eqnarray*}
where the third inequality holds as $k_{opt}^{(p)}(p+1,4) \leq p-2$ by the classical Singleton bound. Thus, $\cC_f^{\perp}$ is a $k$-optimal LRC. Then we have proved that $\cC_f^{\perp}$ is both $d$-optimal and $k$-optimal.
\end{proof}

In other cases, we can not determine the locality of $\cC_{f}$. However, by Magma, we find that the minimum weight codewords in $\cC_f$ (or $\cC_f^{\perp}$) hold a $1$-design in these cases.
\begin{example}
Let $p=3, e=3$ and $f(x)=\tr_{p^e/p}(x^2)$. By Magma, then $(\mathcal{P}(\cC_f), \mathcal{B}_{15}(\cC_f))$ is a $1$-$(27, 15, 15)$ design and $(\mathcal{P}(\cC_f^{\perp}), \mathcal{B}_{3}(\cC_f^{\perp}))$ is a $1$-$(27, 3, 4)$ design.
\end{example}

\begin{example}
Let $p=3, e=5$, $\gf_{p^e}^*=\langle\omega\rangle$ and $f(x)=\tr_{p^e/p}(\omega x^2)$. By Magma, then $(\mathcal{P}(\cC_f), \mathcal{B}_{153}(\cC_f))$ is a $1$-$(243, 153, 153)$ design and $(\mathcal{P}(\cC_f^{\perp}), \mathcal{B}_{3}(\cC_f^{\perp}))$ is a $1$-$(243, 3, 40)$ design.
\end{example}

\begin{example}
Let $p=5, e=2$, $\gf_{p^e}^*=\langle\omega\rangle$ and $f(x)=\tr_{p^e/p}(\omega x^2)$. By Magma, then $(\mathcal{P}(\cC_f), \mathcal{B}_{16}(\cC_f))$ is a $1$-$(25, 16, 16)$ design and $(\mathcal{P}(\cC_f^{\perp}), \mathcal{B}_{3}(\cC_f^{\perp}))$ is a $1$-$(25, 3, 12)$ design.
\end{example}

\begin{example}
Let $p=5, e=3$ and $f(x)=\tr_{p^e/p}(x^2)$. By Magma, then $(\mathcal{P}(\cC_f), \mathcal{B}_{95}(\cC_f))$ is a $1$-$(125, 95, 190)$ design and $(\mathcal{P}(\cC_f^{\perp}), \mathcal{B}_{3}(\cC_f^{\perp}))$ is a $1$-$(125, 3, 36)$ design.
\end{example}

\begin{example}
Let $p=7, e=2$ and $f(x)=\tr_{p^e/p}(x^2)$. By Magma, then $(\mathcal{P}(\cC_f), \mathcal{B}_{36}(\cC_f))$ is a $1$-$(49, 36, 36)$ design and $(\mathcal{P}(\cC_f^{\perp}), \mathcal{B}_{3}(\cC_f^{\perp}))$ is a $1$-$(49, 3, 30)$ design.
\end{example}

\begin{example}
Let $p=3, e=4$ and $f(x)=\tr_{p^e/p}(x^{p^{3k}+p^{2k}-p^k+1}+x^2)$. By Magma, then $(\mathcal{P}(\cC_f), \mathcal{B}_{51}(\cC_f))$ is a $1$-$(81, 51, 102)$ design and $(\mathcal{P}(\cC_f^{\perp}), \mathcal{B}_{3}(\cC_f^{\perp}))$ is a $1$-$(81, 3, 10)$ design.
\end{example}

Then we have the following conjecture by Lemma \ref{lem-locality}.
\begin{conj}\label{conj}
  Let $\cC_f$ be the linear code in Theorem \ref{tem-wtCD2} such that $(e,\varepsilon \eta_0(-1))\neq (2,-1)$. Then the minimum weight codewords in $\cC_f$ (or $\cC_f^{\perp}$) hold a $1$-design.
\end{conj}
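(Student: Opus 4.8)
The plan is to prove the two halves of the statement by separate methods, since they have quite different flavors; the case $(e,\varepsilon\eta_0(-1))=(2,-1)$ is excluded precisely because it is already covered by the stronger $2$-design of Corollary \ref{col-design2}. For the code $\cC_f$ itself I would argue as follows. From the proof of Theorem \ref{tem-wtCD2}, every minimum weight codeword is a $\bc_{(a,b,c)}$ with $a\in\gf_p^*$ (the words with $a=0$ are never of minimum weight), and the minimum weight is singled out by a condition on $f^*(-b/a)+a^{-1}c$. Setting $\beta=-b/a$, one checks that a codeword of this shape vanishes at the coordinate $x_0$ exactly when the element
\[
T_{x_0}(\beta):=f^*(\beta)+\tr_{q/p}(\beta x_0)-f(x_0)\in\gf_p
\]
equals $0$ (in the even-$e$/``$\neq 0$'' and odd-$e$ subcases, the relevant condition is instead that $T_{x_0}(\beta)$ is nonzero, respectively has a prescribed value of $\eta_0$). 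Since minimum weight codewords sharing a support are automatically scalar multiples of one another (kill one common nonzero coordinate and invoke the minimum distance), the block count at $x_0$ equals $\tfrac{1}{p-1}$ times the number of minimum weight codewords not vanishing at $x_0$. Hence the $1$-design property reduces, uniformly over all subcases, to showing that the full $\gf_p$-value distribution of $\beta\mapsto T_{x_0}(\beta)$ over $\beta\in\gf_q$ is independent of $x_0$.

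The key step is that this distribution is governed by the Walsh transform of the dual function. By Lemmas \ref{lem-f*1} and \ref{lem-f*2} we have $f^*\in\cRF$, so $f^*$ is weakly regular bent with $(f^*)^*(x)=f(-x)$ and sign $\varepsilon\eta_0^e(-1)$. For each $v\in\gf_p$, orthogonality of additive characters gives
\[
\#\{\beta:T_{x_0}(\beta)=v\}=\frac1p\sum_{y\in\gf_p}\zeta_p^{-yv}\,\zeta_p^{-yf(x_0)}\sum_{\beta\in\gf_q}\zeta_p^{y\left(f^*(\beta)+\tr_{q/p}(\beta x_0)\right)},
\]
and for $y\neq 0$ the inner sum is $\sigma_y\!\left(\text{W}_{f^*}(-x_0)\right)=\varepsilon\eta_0^e(-1)\eta_0^e(y)\sqrt{p^*}^e\,\zeta_p^{yf(x_0)}$ by Equations (\ref{eq-Wf}) and (\ref{eq-sigma}). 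The factors $\zeta_p^{-yf(x_0)}$ and $\zeta_p^{yf(x_0)}$ cancel, so the whole count is free of $x_0$: every value of $T_{x_0}$ is attained equally often, independently of $x_0$. Summing over the values of $v$ appropriate to each subcase yields a constant coordinate coverage, which establishes the $1$-design for the minimum weight codewords of $\cC_f$.

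For $\cC_f^\perp$ the minimum distance is $3$ in all remaining cases, so one must show that the weight-$3$ codewords hold a $1$-design; by Lemma \ref{lem-locality} this is equivalent to asserting that $\cC_f$ has locality $2$ (the ``conjectured locality $2$'' of the abstract). Writing the $x$-th column of the generator matrix as $\bg_x=(1,f(x),\tr_{q/p}(\beta^0x),\dots,\tr_{q/p}(\beta^{e-1}x))^{\mathrm T}$ and using nondegeneracy of the trace forms, a weight-$3$ dual word is precisely a collinear triple $(x_i,f(x_i))$, $i=1,2,3$, in the affine space $\gf_q\times\gf_p$ over $\gf_p$. Thus the coverage of a fixed $x_0$ counts, over the $\tfrac{q-1}{p-1}$ affine $\gf_p$-lines $L$ through $x_0$ in $\gf_q$, the collinear triples through $x_0$ in the graph of $f|_L$, i.e.\ the coincidences among the slopes $t\mapsto\bigl(f(x_0+tv)-f(x_0)\bigr)/t$. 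I expect this to be the main obstacle: it is a purely local question about the restriction of $f$ to every $\gf_p$-line, and it is not determined by the Walsh spectrum, which controls only second-order/global data and drove the clean cancellation in the $\cC_f$ case. The visible symmetries, namely the coordinate permutations $x\mapsto\lambda x$ for $\lambda\in\gf_p^*$ (which preserve $\cC_f$ because $f(\lambda x)=\lambda^hf(x)$), have orbits $\{0\}$ and the $\gf_p^*$-orbits on $\gf_q^*$ and so are not transitive; in particular they cannot by themselves force the point $0$ to be covered like the rest. A complete proof would require either an enlarged automorphism group acting transitively on $\gf_q$ or a direct uniform evaluation of these line-slope coincidences valid for every $f\in\cRF$, and it is exactly this local step that resists the character-sum approach, which is why the assertion is left as a conjecture.
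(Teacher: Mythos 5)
The paper offers no proof of this statement: it is stated as a conjecture and supported only by the Magma computations in the surrounding examples, so there is no internal argument to compare yours against. Judged on its own terms, your proposal splits into two unequal halves. The argument for the primal code $\cC_f$ is correct and actually goes beyond the paper. The reduction is legitimate: minimum-weight codewords sharing a support are scalar multiples of one another, the words with $a=0$ are never of minimum weight, and for fixed $a\in\gf_p^*$ and $\beta=-b/a$ the vanishing condition at $x_0$ determines $c$ and converts the minimum-weight criterion into a condition on $T_{x_0}(\beta)=f^*(\beta)+\tr_{q/p}(\beta x_0)-f(x_0)$, so constancy of the value distribution of $T_{x_0}$ suffices. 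The character-sum evaluation is also right: since $f^*\in\cRF$ with $(f^*)^*(x)=f(-x)$ and sign $\eta_0^e(-1)\varepsilon$, the inner sum for $y\neq 0$ equals $\sigma_y\bigl(\mathrm{W}_{f^*}(-x_0)\bigr)=\eta_0^e(-1)\varepsilon\,\eta_0^e(y)\sqrt{p^*}^e\zeta_p^{yf(x_0)}$, the $\zeta_p^{\pm yf(x_0)}$ factors cancel, and the distribution is free of $x_0$. This upgrades the $\cC_f$ half of the conjecture to a theorem, uniformly over the parity of $e$ and the sign cases.

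The genuine gap is the $\cC_f^{\perp}$ half, which you do not prove and which is the part the paper actually needs: Corollary \ref{coro} obtains the locality $2$ of $\cC_f$ by applying Lemma \ref{lem-locality} to the weight-$3$ codewords of $\cC_f^{\perp}$, so without this half the main conjectured consequence remains open. Your diagnosis of the obstruction is accurate --- the weight-$3$ dual codewords correspond to affinely dependent triples on the graph of $f$ over $\gf_p$, a per-line statistic that the Walsh spectrum does not determine, and the visible monomial automorphisms $x\mapsto\lambda x$ for $\lambda\in\gf_p^*$ are not transitive on coordinates --- but a diagnosis of difficulty is not a proof. As submitted, the proposal establishes exactly half of the conjectured statement and leaves the half that drives the locality claim untouched.
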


  \begin{corollary}\label{coro}
  If Conjecture \ref{conj} holds, then we have the following:
  \begin{itemize}
    \item if $e$ is odd, then $\cC_f$ is a $(p^e, e+2, p^e-p^{e-1}-p^{\frac{e-1}{2}}, p; 2)$-LRC and $\cC_f^{\perp}$ is a $k$-optimal or almost $k$-optimal $(p^e, p^e-e-2, 3, p; p^e-p^{e-1}-p^{\frac{e-1}{2}}-1)$-LRC;
    \item if $e$ is even and $\varepsilon(\eta_0(-1))^{\frac{e}{2}}=1$, then $\cC_f$ is a $(p^e, e+2, p^e-p^{e-1}-(p-1)p^{\frac{e-2}{2}}, p; 2)$-LRC, $\cC_f^{\perp}$ is an almost $d$-optimal and $k$-optimal $(p^2, p^2-4, 3, p; p^2-2p)$-LRC for $e = 2$ and $\cC_f^{\perp}$ is a $k$-optimal or almost $k$-optimal $(p^e, p^e-e-2, 3, p; p^e-p^{e-1}-(p-1)p^{\frac{e-2}{2}}-1)$-LRC for $e \geq 4$;
    \item if $e \geq 4$ is even and $\varepsilon(\eta_0(-1))^{\frac{e}{2}}=-1$, then $\cC_f$ is a $(p^e, e+2, p^e-p^{e-1}-p^{\frac{e-2}{2}}, p; 2)$-LRC and $\cC_f^{\perp}$ is a $k$-optimal or almost $k$-optimal $(p^e, p^e-e-2, 3, p; p^e-p^{e-1}-p^{\frac{e-2}{2}}-1)$-LRC.
  \end{itemize}
\end{corollary}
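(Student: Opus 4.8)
The plan is to treat Corollary \ref{coro} as a bookkeeping consequence of Conjecture \ref{conj} and Lemma \ref{lem-locality}, with the only genuine computation being the verification of optimality against the bounds of Lemmas \ref{lem-Slbound}, \ref{lem-CMbound} and \ref{sphere}. First I would pin down the two localities. Since $(\cC_f^{\perp})^{\perp}=\cC_f$, Conjecture \ref{conj} furnishes a $1$-design on the minimum-weight supports of \emph{both} $\cC_f$ and $\cC_f^{\perp}$, so Lemma \ref{lem-locality} may be applied in both directions. Applied to $\cC_f$, whose dual minimum distance is $d(\cC_f^{\perp})=3$ in every case covered by the corollary (by Theorem \ref{tem-wtCD2}), it yields locality $3-1=2$ for $\cC_f$; applied to $\cC_f^{\perp}$ it yields locality $d(\cC_f)-1$ for $\cC_f^{\perp}$.

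Next I would read the numerics off Theorem \ref{tem-wtCD2}. Using $|\sqrt{p^*}^{\,e}|=p^{e/2}$ and $|G(\eta_0,\lambda_1)|=\sqrt{p}$, the two non-generic weights in Tables \ref{tab3} and \ref{tab4} reduce to powers of $p$: for odd $e$ they become $p^e-p^{e-1}\pm p^{(e-1)/2}$, so $d(\cC_f)=p^e-p^{e-1}-p^{(e-1)/2}$; for even $e$ the minimum weight is $p^e-p^{e-1}-(p-1)p^{(e-2)/2}$ when $\varepsilon(\eta_0(-1))^{e/2}=1$ and $p^e-p^{e-1}-p^{(e-2)/2}$ when $\varepsilon(\eta_0(-1))^{e/2}=-1$. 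Together with $\dim\cC_f=e+2$ and $\dim\cC_f^{\perp}=p^e-e-2$, these three minimum distances give exactly the LRC tuples listed in the three bullets (the locality $d(\cC_f)-1$ of $\cC_f^{\perp}$ matching each stated value).

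Finally I would verify optimality of $\cC_f^{\perp}$. In the general cases $\cC_f^{\perp}$ is a $[p^e,p^e-e-2,3]$ code with locality $r=d(\cC_f)-1$, so the Cadambe--Mazumdar bound with $t=1$ gives $k\le r+k_{opt}^{(p)}(N,3)$ where $N=n-(r+1)=p^e-d(\cC_f)$, i.e.\ $N=p^{e-1}+p^{(e-1)/2}$, $p^{e-1}+(p-1)p^{(e-2)/2}$, or $p^{e-1}+p^{(e-2)/2}$ in the three cases. Bounding $k_{opt}^{(p)}(N,3)$ by the sphere-packing bound (Lemma \ref{sphere}, $t=\lfloor(3-1)/2\rfloor=1$) as $k_{opt}^{(p)}(N,3)\le N-\lceil\log_p(1+N(p-1))\rceil$, the decisive estimate $p^{e-1}<1+N(p-1)<p^e$ forces $\lceil\log_p(1+N(p-1))\rceil=e$, whence $k\le r+N-e=(d(\cC_f)-1)+(p^e-d(\cC_f))-e=p^e-e-1$. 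As $\cC_f^{\perp}$ attains $k=p^e-e-2$, it is $k$-optimal or almost $k$-optimal. For the $e=2$, $\varepsilon\eta_0(-1)=1$ subcase inside the second bullet, $\cC_f^{\perp}$ is a $[p^2,p^2-4,3]$ code with $r=p^2-2p$; I would verify $d$-optimality by substituting into Lemma \ref{lem-Slbound} (obtaining $d\le 4$, so $d=3$ is almost $d$-optimal) and $k$-optimality by the Cadambe--Mazumdar bound with $t=1$ (obtaining $k\le p^2-4$), exactly mirroring Theorem \ref{loc-cf}.

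The main obstacle is the pair of inequalities $p^{e-1}<1+N(p-1)<p^e$ underlying the ceiling evaluation; the upper inequality is the tighter one, and I would confirm it case by case, the tightest instance being odd $e=3$, where it reduces to $1+p^2<p^2+p$, i.e.\ $p>1$. Apart from this, the argument is entirely routine: matching the three parity/sign cases to the three bullets and keeping the complex Gaussian-sum factors correctly reduced to powers of $p$.
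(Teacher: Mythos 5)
Your proposal follows essentially the same route as the paper: derive both localities from Conjecture \ref{conj} via Lemma \ref{lem-locality}, then establish $k$-optimality of $\cC_f^{\perp}$ from the Cadambe--Mazumdar bound with $t=1$ together with the sphere-packing estimate $p^{e-1}<1+N(p-1)<p^{e}$ (the paper writes out only the odd-$e$ case and asserts the rest is analogous), and handle the $e=2$ subcase separately via the Singleton-like bound. The one point to watch is that your uniform inequality $1+N(p-1)<p^{e}$ genuinely fails when $e=2$ in the second bullet (there $1+N(p-1)=2p^{2}-3p+2\geq p^{2}$), but since you already treat that subcase separately --- where the larger ceiling only improves the bound to $k\leq p^{2}-4$ --- the argument is sound.
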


\begin{proof}
If the minimum weight codewords in $\cC_f$ (or $\cC_f^{\perp}$) hold a $1$-design, then
  by Lemma \ref{lem-locality}, we derive that the locality of $\cC_f$ is $2$, the locality of $\cC_f^{\perp}$ is $p^e-p^{e-1}-p^{\frac{e-1}{2}}-1$ for odd $e$, the locality of $\cC_f^{\perp}$ is $p^e-p^{e-1}-(p-1)p^{\frac{e-2}{2}}-1$ for even $e$, and $\varepsilon(\eta_0(-1))^{\frac{e}{2}}=1$, the locality of $\cC_f^{\perp}$ is $p^e-p^{e-1}-p^{\frac{e-2}{2}}-1$ for even $e$, and $\varepsilon(\eta_0(-1))^{\frac{e}{2}}=-1$.

  Now we prove $\cC_f^{\perp}$ is a $k$-optimal or almost $k$-optimal LRC for odd $e$.
By Lemma \ref{lem-CMbound}, putting the parameters of the $(p^e, p^e-e-2, 3, p; p^e-p^{e-1}-p^{\frac{e-1}{2}}-1)$-LRC into the right-hand side of the Cadambe-Mazumdar bound in (\ref{eqn-CMbound}), we have
\begin{eqnarray*}
  k &\leq& \mathop{\min}_{t \in \mathbb{Z}^{+}} [rt+k_{opt}^{(q)}(n-t(r+1),d)]\\
  &\leq& \mathop{\min}_{t=1}[rt+k_{opt}^{(q)}(n-t(r+1),d)]\\
  &=& r+k_{opt}^{(q)}(n-(r+1),d) \\
   &=& p^e-p^{e-1}-p^{\frac{e-1}{2}}-1+k_{opt}^{(p)}(p^{e-1}+p^{\frac{e-1}{2}},3).
\end{eqnarray*}
Thanks to the sphere-packing bound, we have
\begin{eqnarray*}
  k_{opt}^{(p)}(p^{e-1}+p^{\frac{e-1}{2}},3) &\leq& \left\lfloor \log_p  \frac{p^{p^{e-1}+p^{\frac{e-1}{2}}}}{\sum\limits_{i=0}^{1}\tbinom{p^{e-1}+p^{\frac{e-1}{2}}}{i}(p-1)^i} \right\rfloor\\
  &=& \left\lfloor p^{e-1}+p^{\frac{e-1}{2}}- \log_p \left( 1+p^e+p^{\frac{e+1}{2}}-p^{e-1}-p^{\frac{e-1}{2}} \right) \right\rfloor\\
  &=& p^{e-1}+p^{\frac{e-1}{2}}-e,
\end{eqnarray*}
where the last equality holds as $p^{e-1} < 1+p^e+p^{\frac{e+1}{2}}-p^{e-1}-p^{\frac{e-1}{2}} < p^e$. Thus, $$p^e-e-2=k \leq \mathop{\min}_{t \in \mathbb{Z}^{+}} [rt+k_{opt}^{(q)}(n-t(r+1),d)] \leq p^e-e-1.$$ Then $$\mathop{\min}_{t \in \mathbb{Z}^{+}} [rt+k_{opt}^{(q)}(n-t(r+1),d)]=p^e-e-1 \mbox{ or } p^e-p-2$$ and $\cC_f^{\perp}$ is $k$-optimal or almost $k$-optimal. Similarly, we can also prove the residual cases.
\end{proof}

\begin{remark}
  We remark that there are some optimal or almost optimal locally recoverable codes obtained from $\cC_f$ and its dual, which are verified by Magma. We list them in Table \ref{tab-optimal-loc2}.
  \begin{table}[!h]\footnotesize
\begin{center}
\caption{Optimal or almost optimal locally recoverable codes derived from $\cC_f$ and $\cC_f^{\perp}$.}\label{tab-optimal-loc2}
\begin{tabular}{llll}
\toprule
Conditions & Code & Parameters & Optimality \\
\midrule
$p=3, e=2, f(x)=\tr_{p^e/p}(x^2)$ & $\cC_f$ & $(9,4,4, 3; 2)$ & $k$-optimal and almost $d$-optimal\\
$p=3, e=2, f(x)=\tr_{p^e/p}(x^2)$ & $\cC_f^{\perp}$ & $(9,5,3, 3; 3)$ & $k$-optimal and almost $d$-optimal\\
$p=5, e=2, f(x)=\tr_{p^e/p}(\omega x^2)$ & $\cC_f^{\perp}$ & $(25,21,3, 5; 15)$ & $k$-optimal  and almost $d$-optimal\\
$p=7, e=2, f(x)=\tr_{p^e/p}(x^2)$ & $\cC_f^{\perp}$ & $(49,45,3, 7; 35)$ & $k$-optimal and almost $d$-optimal\\
$p=3, e=3, f(x)=\tr_{p^e/p}(x^2)$ & $\cC_f^{\perp}$ & $(27,22,3,3;14)$ & Almost $k$-optimal\\
$p=3, e=5, f(x)=\tr_{p^e/p}(\omega x^2)$ & $\cC_f^{\perp}$ & $(243,236,3,3;152)$ & Almost $k$-optimal\\
$p=5, e=3, f(x)=\tr_{p^e/p}(x^2)$ & $\cC_f^{\perp}$ & $(125,120,3,5;94)$ & Almost $k$-optimal\\
$p=3, e=4, f(x)=\tr_{p^e/p}(x^{p^{3k}+p^{2k}-p^k+1}+x^2)$ & $\cC_f^{\perp}$ & $(81,75,3,3;50)$ & Almost $k$-optimal\\
\bottomrule
\end{tabular}
\end{center}
\end{table}

\end{remark}

\section{Summary and concluding remarks}\label{sec6}
In this paper, we constructed two families of linear codes $\overline{\cC_D}$ and $\cC_f$ with unbounded lengths from some special functions.
The parameters and weight distributions of them were given under certain conditions.
These two families of codes have the following properties:
\begin{enumerate}
\item[$\diamond$] The locality of $\overline{\cC_D}$ is $2$ for any $N$ and any prime power $q>2$ (see Theorem \ref{loc}).  If $N=q^r+1$, $m=2r$ with $r\geq 3$ and $q=2$, then  $\overline{\cC_D}$ has locality $3$ and $\overline{\cC_D}^{\perp}$ has locality $2^{r-1}(2^{r-1}-1)-1$. In particular, $\overline{\cC_D}^{\perp}$ is a $k$-optimal or almost $k$-optimal LRC (see Theorem \ref{th-loc-q=2}). Besides, $\cC_f$ has locality $3$ and $\cC_f^{\perp}$ has locality $p^2-p-2$ for some cases (see Theorem \ref{loc-cf}). In these cases, $\cC_f$ is a $k$-optimal LRC and $\cC_f^{\perp}$ is a $k$-optimal and $d$-optimal LRC. In other cases, we conjectured that $\cC_f$ has locality $2$ and $\cC_f^{\perp}$ has locality $p^e-p^{e-1}-p^{\frac{e-1}{2}}-1$, $p^e-p^{e-1}-(p-1)p^{\frac{e-2}{2}}-1$, or $p^e-p^{e-1}-p^{\frac{e-2}{2}}-1$ for three different cases (see Conjecture \ref{conj}). If this conjecture can be tackled, then we can derive some infinite families of almost optimal or optimal locally recoverable codes from $\cC_f^{\perp}$ (see Corollary \ref{coro}).
\item[$\diamond$] The code  $\overline{\cC_D}$ was proved to be  almost optimally extendable or optimally extendable if $N=q^r+1$ for $m=2r$ and $N=2$ for odd $m$ (see Theorems \ref{extend1} and \ref{extend2}).
The code $\cC_f$ was proved to be  optimally extendable for the most cases and almost optimally extendable for a few cases (see Theorem \ref{th-extend}).
\item[$\diamond$] Both of the code $\overline{\cC_D}$ for $N=q^r+1$ with $m=2r$ or $N=2$ with odd $m$ and the code $\cC_f$ are self-orthogonal (see Theorems \ref{tem-so1}, \ref{tem-wtCDN2} and \ref{tem-wtCD2}).
\item[$\diamond$] Some families of linear codes holding $2$-designs were also derived (see Corollaries \ref{col-design1} and \ref{col-design2}).
\end{enumerate}
Hence, the codes in this paper have nice applications in distributed storage systems,  the implementations of block ciphers, combinatorics and many other fields. From the weight distribution of the linear codes constructed in this paper, we find that the minimum distance of $\overline{\cC_D}$ is the same with that of $\cC_{D\setminus \{0\}}$ studied in literatures \cite{HZ4, D4}. However, the dimension and the code rate of $\overline{\cC_D}$ are both larger than those of $\cC_{D\setminus \{0\}}$. We also find that the minimum distance of $\cC_f$ is the same with that of $\cC_f^*$ studied in literature \cite{MS}.
However, the dimension and the code rate of $\overline{\cC_D}$ are both larger than those of $\cC_f^*$.  As was pointed out by Ding and Tang in \cite{D}, it is usually a challenge to determine the parameters of the augmented code of a linear code as we may require the complete weight information of this linear code.  Furthermore,
in Table \ref{tab-optimalcode}, we list many (almost) optimal codes derived in this paper.

\begin{table}[!h]
\footnotesize
\begin{center}
\caption{Optimal codes or almost optimal codes derived in Theorem \ref{tem-wtCD1}, \ref{tem-wtCD2}, \ref{th-LCD1}, \ref{th-LCD2} and \ref{th-LCD3}.}\label{tab-optimalcode}
\begin{tabular}{llll}
\toprule
Conditions & Code & Parameters & Optimality \\
\midrule
$q=2,r=3$ & $\overline{\cC_D}(N=q^r+1)$ & $[28, 7, 12]$ & Optimal\\
$q=2,r=3$ & $\overline{\cC_D}^{\perp}(N=q^r+1)$ & $[28,21,4]$ & Optimal\\
$q=2,r=4$ & $\overline{\cC_D}(N=q^r+1)$ & $[120,9,56]$ & Optimal\\
$q=2,r=4$ & $\overline{\cC_D}^{\perp}(N=q^r+1)$ & $[120, 111, 4]$ & Optimal\\
$q=3,r=2$ & $\overline{\cC_D}(N=q^r+1)$ & $[21, 5, 12]$ & Optimal\\
$q=3,r=2$ & $\overline{\cC_D}^{\perp}(N=q^r+1)$ & $[21, 16, 3]$ & Optimal\\
$q=3,r=2$ & $\overline{\cC_D}'^{\perp}(N=q^r+1)$ & $[26, 21, 3]$ & Optimal\\
$q=3,r=3$ & $\overline{\cC_D}(N=q^r+1)$ & $[225,7,144]$ & Optimal\\
$q=3,r=3$ & $\overline{\cC_D}^{\perp}(N=q^r+1)$ & $[225,218,3]$ & Optimal\\
$q=3,r=3$ & $\overline{\cC_D}'^{\perp}(N=q^r+1)$ & $[232,225,3]$ & Optimal\\
$q=4,r=2$ & $\overline{\cC_D}(N=q^r+1)$ & $[52, 5, 36]$ & Optimal\\
$q=4,r=2$ & $\overline{\cC_D}^{\perp}(N=q^r+1)$ & $[52, 47, 3]$ & Optimal\\
$q=4,r=2$ & $\overline{\cC_D}'^{\perp}(N=q^r+1)$ & $[57, 52, 3]$ & Optimal\\
$q=5,r=2$ & $\overline{\cC_D}(N=q^r+1)$ & $[105, 5 ,80]$ & Optimal\\
$q=5,r=2$ & $\overline{\cC_D}^{\perp}(N=q^r+1)$ & $[105, 100, 3]$ & Optimal\\
$q=5,r=2$ & $\overline{\cC_D}'^{\perp}(N=q^r+1)$ & $[110, 105, 3]$ & Optimal\\
$q=3,m=5$ & $\overline{\cC_D}^{\perp}(N=2)$ & $[81, 75, 3]$ & Optimal\\
$q=3,m=5$ & $\overline{\cC_D}'^{\perp}(N=2)$ & $[87, 81, 3]$ & Optimal\\
$q=5,m=3$ & $\overline{\cC_D}^{\perp}(N=2)$ & $[25, 21, 3]$ & Almost optimal\\
$q=5,m=3$ & $\overline{\cC_D}'^{\perp}(N=2)$ & $[29, 25, 3]$ & Optimal\\
$q=7,m=3$ & $\overline{\cC_D}^{\perp}(N=2)$ & $[49, 45, 3]$ & Almost optimal\\
$q=7,m=3$ & $\overline{\cC_D}'^{\perp}(N=2)$ & $[53, 49, 3]$ & Optimal\\
$q=9,m=3$ & $\overline{\cC_D}^{\perp}(N=2)$ & $[81, 77, 3]$ & Almost optimal\\
$q=9,m=3$ & $\overline{\cC_D}'^{\perp}(N=2)$ & $[85, 81, 3]$ & Optimal\\
$p=3,e=3$& $\cC_f$ & $[27,5,15]$ & Almost optimal\\
$p=3,e=3$& $\cC_f^{\perp}$ & $[27,22,3]$ &  Optimal\\
$p=3,e=3$& $\cC_f'^{\perp}$ & $[32,27,3]$ &  Optimal\\
$p=3,e=4$& $\cC_f$ & $[81,6,51]$ & Optimal\\
$p=3,e=4$& $\cC_f^{\perp}$ & $[81,75,3]$ &  Optimal\\
$p=3,e=4$& $\cC_f'^{\perp}$ & $[87,81,3]$ &  Optimal\\
$p=5,e=2$& $\cC_f$ & $[25,4,19]$ &  Optimal\\
$p=5,e=2$& $\cC_f'$ & $[29,4,20]$ & Almost optimal\\
$p=5,e=2$& $\cC_f^{\perp}$ & $[25,21,4]$ &  Optimal\\
$p=5,e=2$& $\cC_f'^{\perp}$ & $[29,25,3]$ &  Optimal\\
$p=5,e=3$ & $\cC_f^{\perp}$ & $[125, 120, 3]$ &  Optimal\\
$p=5,e=3$ & $\cC_f'^{\perp}$ & $[130, 125, 3]$ &  Optimal\\
$p=7,e=2$ & $\cC_f$ & $[49, 4, 41]$ &  Optimal\\
$p=7,e=2$& $\cC_f'$ & $[53,4,42]$ & Almost optimal\\
$p=7,e=2$& $\cC_f^{\perp}$ & $[49, 45, 4]$ &  Optimal\\
$p=7,e=2$& $\cC_f'^{\perp}$ & $[53, 49, 3]$ &  Optimal\\
\bottomrule
\end{tabular}
\end{center}
\end{table}

\end{document}